\renewenvironment{proof}[1][]
    {\noindent
       \ifx&#1&{\it Proof.}
       \else{\it Proof ({#1}).}
       \fi}{\hfill $\blacksquare$}
\newtheorem{theorem}{Theorem}[section]
\newtheorem{claim}[theorem]{Claim}
\newtheorem{remark}[theorem]{Remark}
\newtheorem{proposition}[theorem]{Proposition}
\newtheorem{definition}[theorem]{Definition}
\newtheorem{corollary}[theorem]{Corollary}
\newtheorem{lemma}[theorem]{Lemma}
\newtheorem{observation}[theorem]{Observation}
\DeclareMathOperator*{\E}{\mathbb{E}}
\DeclareMathOperator{\HH}{H}
\DeclareMathOperator{\I}{I}
\DeclareMathOperator{\KL}{D_{\text{KL}}}
\DeclareMathOperator{\Supp}{\text{Supp}}
\DeclareMathOperator{\Ent}{\text{Ent}}
\newcommand{\SECS}{{\mathcal{D}}}
\newcommand{\secv}{{d}}
\newcommand{\SECV}{{D}}
\newcommand{\NN}{\mathbb{N}}
\newcommand{\RR}{\mathbb{R}}
\newcommand{\Qcal}{\mathcal{U}}
\newcommand{\calO}{\mathcal{O}}
\newcommand{\SUC}{\boldsymbol{suc}}
\newcommand{\TR}{\boldsymbol{tr}}
\newcommand{\POST}{\boldsymbol{post}}
\newcommand{\PC}{\boldsymbol{PC}}
\newcommand{\MID}{\boldsymbol{MID}}
\newcommand{\INP}{I}
\newcommand{\OUTP}{O}
\newcommand{\MAXS}{\boldsymbol{MaxS}}
\newcommand{\ind}[1]{\binom{[#1]}{1}}
\title{Non-Adaptive Cryptanalytic Time-Space Lower Bounds via a Shearer-like Inequality for Permutations}
\author{Itai Dinur\thanks{Ben-Gurion University and Georgetown University. \texttt{dinuri@bgu.ac.il}. Supported by a gift to Georgetown University.}
\and Nathan Keller\thanks{Bar-Ilan University. \texttt{nathan.keller27@gmail.com}. Partially supported by the Israel Science Foundation (grants no.\ 2669/21 and 2456/25) and by the US-Israel Binational Science Foundation (grant no.\ 2024120).}
\and Avichai Marmor\thanks{Bar-Ilan University. \texttt{avichai@elmar.co.il}. Partially supported by the Israel Science Foundation (grants no.\ 2669/21 and 2456/25) and by the Bar Ilan University's Presidential Scholarship.}}
\date{}
\begin{document}

\maketitle

\begin{abstract}
    The power of adaptivity in algorithms has been intensively studied in diverse areas of theoretical computer science. In this paper, we obtain a number of sharp lower bound results which show that adaptivity provides a significant extra power in cryptanalytic time-space tradeoffs with (possibly unlimited) preprocessing time.  

    Most notably, we consider the discrete logarithm (DLOG) problem in a generic group of $N$ elements. The classical `baby-step giant-step' algorithm for the problem has time complexity $T=O(\sqrt{N})$, uses $O(\sqrt{N})$ bits of space (up to logarithmic factors in $N$) and achieves constant success probability.

    We examine a generalized setting where an algorithm obtains an advice string of  $S$ bits and is allowed to make $T$ arbitrary non-adaptive queries that depend on the advice string (but not on the challenge group element for which the DLOG needs to be computed). 
    
    We show that in this setting, the $T=O(\sqrt{N})$ online time complexity of the baby-step giant-step algorithm cannot be improved, unless the advice string is more than $\Omega(\sqrt{N})$ bits long. This lies in stark contrast with the classical adaptive Pollard's rho algorithm for DLOG, which can exploit preprocessing to obtain the tradeoff curve $ST^2=O(N)$. We obtain similar sharp lower bounds for the problem of breaking the Even-Mansour cryptosystem in symmetric-key cryptography and for several other problems. 
    
    To obtain our results, we present a new model that allows analyzing non-adaptive preprocessing algorithms for a wide array of search and decision problems in a unified way. 
    
    Since previous proof techniques inherently cannot distinguish between adaptive and non-adaptive algorithms for the problems in our model, they cannot be used to obtain our results. Consequently, we rely on information-theoretic tools for handling distributions and functions over the space $S_N$ of permutations of $N$ elements. Specifically, we use a variant of Shearer's lemma for this setting, due to Barthe, Cordero-Erausquin, Ledoux, and Maurey (2011), and a variant of the concentration inequality of Gavinsky, Lovett, Saks and Srinivasan (2015) for read-$k$ families of functions, that we derive from it. This seems to be the first time a variant of Shearer's lemma for permutations is used in an algorithmic context, and it is expected to be useful in other lower bound arguments.
\end{abstract}

\thispagestyle{empty}

\clearpage
\setcounter{page}{1}

\section{Introduction}

\subsection{Background}

\paragraph{Cryptanalytic time-space tradeoffs.}
We consider cryptanalytic time-space tradeoffs in the \emph{preprocessing} (i.e.,~non-uniform) model.
An algorithm in this model is divided into two phases. In the first phase, the preprocessing algorithm obtains access to an input defined by the setting of the problem (e.g., $f: [N] \to [N]$ in the function inversion problem), and produces an `advice string' of $S$ bits. In the second phase, the online algorithm receives a specific challenge it has to solve with respect to the given input (e.g., an output of the function to invert).
The online algorithm is given the advice string, along with oracle access to the input. The complexity of the algorithm is evaluated by the bit-length $S$ of the advice and the time complexity $T$ of the online phase, typically measured in terms of the number of oracle queries. 
Such an algorithm is called an $(S,T)$-algorithm.

The preprocessing model is very natural from a practical point of view, as in many scenarios, the adversary is willing to solve many specific instances of the problem and the cost of a one-time preprocessing is amortized over the multiple solved instances. 

The first and most famous cryptanalytic algorithm in the preprocessing model is Hellman's algorithm~\cite{Hellman80} for inverting a random function.
For constant success probability, Hellman's algorithm obtains a time-space tradeoff of about $T S^2 = \tilde{O}(N^2)$ (where $\tilde{O}$ hides logarithmic factors in $N$) on a domain of size $N$. The algorithm was extended for inverting any function by Fiat and Naor~\cite{FiatN99}.

In recent years, there has been a significant effort in the cryptographic community to understand the power of non-uniform algorithms by devising new algorithms and proving time-space lower bounds against them. The emphasis has been on proving lower bounds for various problems (see, for example,~\cite{DodisGK17,CorettiDG18,CorettiDGS18,Corrigan-GibbsK18,Corrigan-GibbsK19,BartusekMZ19,GolovnevGHPV20,ChawinHM20,ChungGLQ20,AkshimaCDW20,GuoLLZ21,FreitagGK22,GolovnevGPS23,AkshimaBGXY24,AkshimaGL24,GajulapalliGK24,AkshimaBCGY25}).

\paragraph{Non-adaptive cryptanalytic time-space tradeoffs.}
An algorithm with oracle access to an input is called \emph{adaptive} if its oracle queries may depend on the outcome of its previous queries. Otherwise it is called non-adaptive.
In this paper, we will be interested in non-adaptive cryptanalytic algorithms receiving advice. 

Almost all best-known cryptanalytic time-space tradeoff algorithms are adaptive. This includes (for example) Hellman's algorithm~\cite{Hellman80} and the Fiat and Naor~\cite{FiatN99} algorithm for function inversion, and the algorithms by 
Mihalcik~\cite{Mih10}, Bernstein and Lange~\cite{BernsteinL13} and Corrigan-Gibbs and Kogan~\cite{Corrigan-GibbsK18} for discrete log (DLOG).

However, in most cases there is no proof that non-adaptive algorithms cannot perform as well as the best known adaptive ones. For example, for the function inversion problem, the best-known lower bound of $ST \geq \tilde{\Omega}(N)$ is far from the best-known tradeoff of $T S^2 = \tilde{O}(N^2)$. Yet, for non-adaptive algorithms, the gap is even larger, as the lower bound remains roughly $ST \geq \tilde{\Omega}(N)$ (see~\cite{GajulapalliGK24}), but the best-known algorithm 
cannot do better than $\max(T,S) = \tilde{O}(N)$.

In 2019, Corrigan-Gibbs and Kogan~\cite{Corrigan-GibbsK19} formally raised the question of the power of adaptivity in the context of the function inversion problem. 
The authors of~\cite{Corrigan-GibbsK19} argued that the lack of progress in establishing such lower bounds can be explained by a classical barrier in complexity theory. Indeed, they showed that a significantly improved lower bound on strongly non-adaptive algorithms for function inversion would imply a circuit lower bound against Boolean circuits of linear size and logarithmic depth, thus resolving a 50-year old well-known open question of Valiant~\cite{Valiant77}. Later on, it was shown that such a lower bound would also imply significant data structure lower bounds~\cite{DvorakKKS21,GolovnevGHPV20}. Following the work of Corrigan-Gibbs and Kogan~\cite{Corrigan-GibbsK19}, the power of adaptivity in the context of the function inversion problem was studied in a number of papers~\cite{ChawinHM20,DvorakKKS21, GajulapalliGK24,GolovnevGPS23}. 

\paragraph{Our work.} In this paper, we investigate the question of~\cite{Corrigan-GibbsK19} in a more general context:
\begin{center}
    What is the power of adaptivity in non-uniform cryptanalytic algorithms?
\end{center}

We present sharp lower bound proofs which demonstrate the significant power of adaptivity in a variety of cryptographic problems, in the preprocessing model.
More specifically, in the context of public-key cryptography, we present sharp non-adaptive time-space lower bounds for the generic DLOG problem and the square-DDH (decisional Diffie-Hellman) problem. We also prove a  
lower bound for DDH that is sharp in a part of the range. In the context of symmetric-key cryptography, we present sharp time-space lower bounds for key-recovery attacks against the classical Even-Mansour cipher. 

From a technical viewpoint, we show in Section~\ref{sec:proof-overview} and Appendix~\ref{app:limitations} that previous lower bound proof techniques for the problems we consider are inherently limited and cannot distinguish between adaptive and non-adaptive algorithms.  

\paragraph{Definition of adaptivity in our model.}
In the problems we analyze, each query of the online adversary is first specified in a \emph{pre-translated} form and then processed by a translation function before it is passed to the oracle.
Our notion of non-adaptivity allows the pre-translated queries of the online algorithm to depend on the advice string, but not on the challenge itself.

Importantly, after translation, the resulting oracle queries \emph{may depend on the challenge}. For example, in the DLOG setting (see details below), the challenge is an element $x=g^d$ for an unknown exponent $d$, and a pre-translated query is specified by a pair $(a,b)$ chosen by the adversary. This pair is then converted into the group element $x^a g^b = g^{ad+b}$, so although $(a,b)$ itself is chosen independently of the challenge, the actual oracle query still depends on the hidden value $d$.

The implicit dependence on the challenge allows our notion of non-adaptivity to capture classical algorithms such as baby-step giant-step for DLOG (see below). This is in contrast to the function inversion problem, where requiring the queries to be independent of the challenge trivializes the problem. Our lower bounds show that this class of non-adaptive algorithms cannot be improved even if we allow an arbitrary advice string (of length $S$ bits), arbitrary non-adaptive queries (that may depend on the advice), and unlimited preprocessing time.

As we argue towards the end of the Introduction, extending our proof to handle challenge-dependent pre-translated queries seems to require proving new types of concentration inequalities. 

Before presenting our results, we elaborate on the rich history of the problems we analyze.

\paragraph{The generic discrete logarithm problem.} Given a group $G$ of $N$ elements with a fixed generator $g$, the DLOG problem in $G$ asks for solving the equation $g^d=x$, for a given $x \in G$. 
It is widely believed that the problem is hard if the group is chosen carefully, and the security of various cryptosystems relies on the hardness of this problem and its variants.
In 1997, Shoup~\cite{Shoup97} presented the \emph{generic group model} (GGM), intended for proving lower bounds against a specific class of algorithms that are applicable to any group of a particular order.
A major motivation for studying this model is that in some cryptographically relevant groups (such as some elliptic curve groups),
the best-known algorithms for DLOG are 
generic. 

The model uses an auxiliary set $\mathcal{W}$ of bit strings with $|\mathcal{W}|\geq N$. In the model, the elements of $G$ are `hidden' by mapping $g^j$ to $\sigma(j)$, where $\sigma:\mathbb{Z}_N \to \mathcal{W}$ is a randomly chosen bijective function. (We show in the sequel that there is no loss of generality in stating our results with $\mathcal{W}=[N] = \{1,2,\ldots,N\}$, and so we state them this way.) An algorithm for solving the problem receives as inputs the encodings $\sigma(1),\sigma(d)$ of $g,x$, and its goal is to find $d$. The oracle queries allowed for the algorithm are of the form $\sigma(a \cdot d+b \bmod N)$.\footnote{The original model of Shoup is slightly different but it is easy to show that this does not affect the analysis.} 
The complexity of the algorithm is the total number of queries it makes, and its success probability is the expectation of the probability that it outputs $d$, over all uniform choices of $\sigma,d$, and the internal randomness of the algorithm. In this model, Shoup proved that if $N$ is prime, then the success probability of any algorithm for DLOG that makes $T$ queries is at most $T^2/N$.

The $\Omega(\sqrt{N})$ lower bound asserted by Shoup on the complexity of any algorithm for DLOG in the GGM is matched by the classical \emph{baby-step giant-step} algorithm proposed by Shanks~\cite{Shanks71} in 1971. The algorithm is based on writing the equation $g^d=x$ in the form $g^{im+j}=x$, where $m=\lceil \sqrt{N} \rceil$ and $0 \leq i,j < m$, which in turn can be written as $g^j=x(g^{-m})^i$. The algorithm computes $g^j$ for $1 \leq j \leq m$ and stores the values in a table. Then, it computes $y=g^{-m}$ and tries to find matches between $x, xy, xy^2, \ldots$ and values in the table. Once a match of the form $g^j=x(g^{-m})^i$ is found, we know that $d=im+j$ is the solution of the equation $g^d=x$, that is, the discrete logarithm of $x$. The algorithm can be executed in the generic group model, by replacing each computation of $g^j$ by the query $\sigma(j)$ and each computation of $xy^i$ by the query $\sigma(im+d \bmod N)$. In particular, this shows that baby-step giant-step is non-adaptive in our sense: the online algorithm fixes in advance, independently of the challenge, the coefficient pairs of the form $(0,j)$ and $(1,-im)$, that yield the required oracle queries. The algorithm uses memory $O(\sqrt{N})$ words of $O(\log N)$ bits, and its query complexity is $O(\sqrt{N})$. 

In 1978, Pollard~\cite{Pollard78} introduced the \emph{Pollard Rho} algorithm for DLOG, which obtains the same query complexity of $O(\sqrt{N})$ while using only a very small amount $\tilde{O}(1)$ of memory. The algorithm is based on finding collisions of the form $x^a g^b = x^{a'} g^{b'}$ and using them to find a solution of the equation $g^d=x$ via the extended Euclid's algorithm. The collisions are found using Floyd's cycle finding algorithm~\cite{Knuth69}, which finds a collision in a random function $f:[N] \to [N]$ in time $O(\sqrt{N})$, using only a few memory cells. This algorithm can also be executed in the GGM, as computations of $x^a g^b$ correspond to queries of the form $\sigma(a \cdot d + b \bmod N)$, and the rest of the algorithm is based on comparisons.

The baby-step giant-step algorithm is clearly non-adaptive. In contrast, Pollard’s Rho algorithm is highly dependent on adaptivity using Floyd's algorithm, where each query depends on the output of the previous one.

\paragraph{Time-space tradeoffs for the generic DLOG problem with preprocessing.} The generic group model is naturally extended to algorithms with preprocessing. In this case, an algorithm $A$ consists of a `preprocessing' algorithm $A_0$ and an `online' algorithm $A_1$. The algorithm $A_0$ has direct access to the encoding function $\sigma$ (but not to the input $\sigma(d)$) and produces an advice string $z$. 
The algorithm $A_1$ is defined as in the standard generic group model described above, but also receives the advice string $z$ as input.

The baby-step giant-step algorithm can be naturally viewed as an algorithm with preprocessing, as its first stage of computing $\{g^j:0 \leq j < m\}$ can be done before the value of $x$ is known. However, it does not seem to fully exploit the power of preprocessing, since the time complexity of the preprocessing phase 
is equal to its space complexity.

Pollard's Rho algorithm does not use preprocessing. Yet, as noted above, 
Mihalcik~\cite{Mih10}, Bernstein and Lange~\cite{BernsteinL13} and Corrigan-Gibbs and Kogan~\cite{Corrigan-GibbsK18} presented generic algorithms that solve DLOG with preprocessing with high probability whose time and space complexity satisfy $ST^2 \leq \tilde{O}(N)$. In particular, if space of $\tilde{O}(N^{1/3})$ is allowed, then the online time complexity can be reduced to $O(N^{1/3})$. These algorithms are known to be best possible~\cite{Corrigan-GibbsK18}. 

\paragraph{The Decisional Diffie Hellman (DDH) and the square-DDH (sqDDH) problems.} The DDH problem in the GGM asks to distinguish between the triples $(\sigma(x),\sigma(y),\sigma(xy))$ and $(\sigma(x),\sigma(y),\sigma(z))$, where $x,y,z \in G$ are random elements. This corresponds to the situation in the Diffie-Hellman key exchange scheme (and many other cryptosystems) where the adversary sees the elements $g^x,g^y$ sent between the parties and wants to decide whether a candidate element is the shared secret key $g^{xy}$ or a random element $g^z$. When introducing the GGM, Shoup~\cite{Shoup97} showed that for a prime $N$, any algorithm for the DDH problem in the GGM with $|G|=N$ that makes $T$ queries, has a success probability of at most~$\frac{1}{2} +\frac{T^2}{N}$.

The sqDDH problem~\cite{MaurerW99} is the special case of DDH in which $x=y$. Namely, it asks to distinguish between the pairs $(\sigma(x),\sigma(x^2))$ and $(\sigma(x),\sigma(z))$, where $x,z \in G$ are uniform elements. While there are groups in which sqDDH is significantly easier than DDH (see~\cite{JouxN03}), it can be shown that in the GGM, any algorithm for sqDDH that makes $T$ queries, has a success probability of at most $\frac{1}{2} +\frac{T^2}{N}$, just like for DDH (see~\cite{CorettiDG18}).

As in the case of the DLOG problem, the complexity of algorithms for DDH and sqDDH can be significantly reduced if preprocessing is allowed. Indeed, for the DDH problem with preprocessing, the generic algorithms of Mihalcik~\cite{Mih10}, Bernstein and Lange~\cite{BernsteinL13}, and Corrigan-Gibbs and Kogan~\cite{Corrigan-GibbsK18} achieve success probability of $\frac{1}{2}+\tilde{\Omega}(ST^2/N)$. For the sqDDH problem, Corrigan-Gibbs and Kogan~\cite{Corrigan-GibbsK18} presented a generic algorithm with preprocessing that achieves an even higher success probability of $\frac{1}{2}+\tilde{\Omega}(\sqrt{ST^2/N})$. In the other direction, Corrigan-Gibbs and Kogan~\cite{Corrigan-GibbsK18} showed that any algorithm with preprocessing for the DDH problem or for the sqDDH problem in the GGM has success probability of $\frac{1}{2}+\tilde{O}(\sqrt{ST^2/N})$. Interestingly, the correct tradeoff formula for the DDH problem remained open for several years, until recently Akshima, Besselman, Guo, Xie and Ye~\cite{AkshimaBGXY24} proved that it is equal to $\frac{1}{2}+ \tilde{\Theta}(ST^2/N)$ by presenting a tighter proof.

\paragraph{The Even-Mansour (EM) cryptosystem.} The EM cryptosystem was proposed in 1991 by Even and Mansour~\cite{EvenM97} as the `simplest possible' construction of a block cipher from a public permutation. It is defined as $EM(m)=k_2 \oplus \sigma(k_1 \oplus m)$, where $k_1,k_2 \in \{0,1\}^n$ are $n$-bit keys and $\sigma:\{0,1\}^n \to \{0,1\}^n$ is a publicly known permutation. The EM cryptosystem has become a central element in block cipher constructions and the security level of EM and of its iterative variants was studied extensively (see, e.g.,~\cite{ChenLLSS18,CogliatiLS15,DaiSST17}). 

An adversary is allowed to make encryption/decryption queries to $EM$ and public queries to $\sigma$ or $\sigma^{-1}$.
We consider key-recovery attacks which aim at recovering $(k_1,k_2)$. Even and Mansour showed that if $\sigma$ is chosen uniformly at random, then such an attack which makes $T_2$ encryption/decryption queries to $EM$ and $T_1$ queries to $\sigma$ or $\sigma^{-1}$, has a success probability of $O(T_1 T_2 / 2^n)$. The work~\cite{DKS15} showed that the same holds for the seemingly weaker \emph{Single-key EM} cryptosystem, defined as $SEM(m)=k_1 \oplus \sigma(k_1 \oplus m)$ (i.e., EM with $k_2=k_1$).

Both adaptive and non-adaptive attacks matching these bounds are known~\cite{BW00,Daemen91,DKS15}. Like in the case of DLOG, it was shown that the online complexity can be reduced significantly in the preprocessing model. Specifically, a key recovery attack with complexity $S=T_1=T_2=\tilde{O}(2^{n/3})$ was presented in~\cite{FouqueJM14} and an upper bound of $\frac{1}{2}+ \tilde{O}(\sqrt{S(T_1+T_2)T_2/2^n}+T_1 T_2/2^n)$ on the success probability of any distinguishing attack, which matches the attack in the constant success probability setting, was shown in~\cite{CorettiDG18}. The previous works on EM are described in detail in Section~\ref{sec:EM}.  

\subsection{Our results}
\label{sec:results}

\paragraph{Main motivation.} The main motivation behind this work is seeking a better understanding of classical generic algorithms for DLOG and other problems.
Although Pollard's algorithm for DLOG is over 45 years old, we are not aware of any formal statement that justifies its need for adaptivity.
Indeed, proving such a statement seems to be related to the major open problem of proving strong time-space lower bounds against short-output oblivious branching programs,\footnote{Technically, the open problem is to prove slightly super-linear time-space tradeoff lower bounds for oblivious branching programs.
In our case, since the best algorithm runs in time $O(\sqrt{N})$, the analogous problem is to prove a time-space tradeoff of the form $TS \geq N^{1/2 + \epsilon}$ for some $\epsilon > 0$.}  for which the best-known result was obtained by Babai, Nisan and Szegedy~\cite{BabaiNS92} more than 30 years ago.
Yet, given the historical significance of Pollard's algorithm, we view the question of formally justifying its adaptivity in an alternative (but standard) computational model as fundamental. We make progress on this question by showing that adaptivity does provide a significant advantage for DLOG algorithms in the generic group model, in the \emph{preprocessing} (auxiliary input) setting. We also obtain similar results for several other major problems.

In particular, we present a new model, which we call \emph{permutation challenge} (PC), which allows proving lower bounds on non-adaptive time-space tradeoffs for a wide variety of problems, including DLOG, DDH, sqDDH, and breaking the EM cryptosystem.\footnote{It seems possible to extend the model to prove time-space lower bounds for the permutation inversion problem in case the adversary is non-adaptive and makes challenge-dependent queries of a certain type (e.g., for each challenge, the query set is selected independently uniformly at random). However, such lower bounds are unlikely to improve the state-of-the-art, hence we did not pursue this direction further.}
We describe the model as part of the proof overview in Section~\ref{sec:proof-overview}. We now describe the results we obtain for the specific problems. 

\paragraph{Sharp lower bound for non-adaptive algorithms for DLOG with preprocessing.} Our main result concerns non-adaptive generic algorithms with preprocessing for the DLOG problem. We show that unlike the adaptive setting, for non-adaptive algorithms preprocessing cannot be exploited better than in the baby-step giant-step algorithm -- indeed, its $O(\sqrt{N})$ online time complexity cannot be reduced even if a preprocessing phase of unbounded complexity is allowed, as long as the size of the advice $z$ is $\tilde{O}(\sqrt{N})$.
\begin{theorem}\label{thm:dl-nonadaptive-intro}
Let $A=(A_0,A_1)$ be a non-adaptive $(S,T)$-algorithm for the DLOG problem in the generic group model, over a group $G$ with a prime number $N$ of elements. Denote by $\MAXS_{\textup{DLOG}}(T)$ 
the optimal success probability of a non-preprocessing, non-adaptive algorithm that makes at most $T$ queries.
Then, the success probability of $A$ is at most
$$
2\cdot \MAXS_{\textup{DLOG}}(T) + \frac{4\log_e(2) ST}{N} + \frac{T^2}{N} \leq
\frac{3T^2 }{N}  + \frac{4\log_e(2) ST}{N}.
$$
\end{theorem}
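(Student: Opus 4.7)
The plan is to decompose the success event into two pieces: a ``degenerate randomness'' piece that yields the $T^2/N$ term, and a ``typical preprocessing'' piece where the preprocessing gains only an $ST/N$ advantage over the non-preprocessing baseline $\MAXS_{\textup{DLOG}}(T)$. Fix a non-adaptive $(S,T)$-algorithm $A = (A_0, A_1)$. For any advice $z$ output by $A_0$ and any online randomness $r$, non-adaptivity of $A_1$ means that the $T$ pre-translated query pairs $(a_i(z,r), b_i(z,r))_{i=1}^T$ are determined before the challenge arrives; once $\sigma(d)$ is revealed, the oracle is queried at the positions $Q(z,d,r) = \{a_i d + b_i \bmod N : i \in [T]\}$.

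First, I would absorb the $T^2/N$ term into a degenerate event. Since $N$ is prime, for any two distinct pre-translated queries $(a_i, b_i) \ne (a_j, b_j)$ there is at most one $d \in \mathbb{Z}_N$ with $a_i d + b_i \equiv a_j d + b_j \bmod N$, and similarly there is at most one $d$ with $a_i d + b_i \equiv d \bmod N$ whenever $a_i \ne 1$. Since $d$ is uniform on $\mathbb{Z}_N$, the probability that the $T+1$ positions $\{d\} \cup Q(z,d,r)$ fail to be pairwise distinct is at most $\binom{T+1}{2}/N \leq T^2/N$. On the complementary event the oracle exposes $\sigma$ at $T+1$ distinct positions, and in particular the challenge $\sigma(d)$ itself is revealed as one of the oracle answers.

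For the main term, the comparison is between the conditional distribution of $\sigma$ given $A_0(\sigma) = z$ and the uniform distribution on $S_N$. If $\sigma$ were uniform on $S_N$ while the same $z$ were still handed to $A_1$, the algorithm would be a valid non-preprocessing non-adaptive algorithm making $T$ queries, whose success probability is at most $\MAXS_{\textup{DLOG}}(T)$. The preprocessing conditioning can only affect the restriction $\sigma|_{Q(z,d,r) \cup \{d\}}$, so the task reduces to bounding the effect of conditioning on at most $T+1$ coordinates of a uniformly random permutation. Here I would invoke the permutation variant of Shearer's inequality of Barthe, Cordero-Erausquin, Ledoux, and Maurey, together with the read-$k$-style concentration inequality derived from it, to conclude that, except on a set of $(\sigma, d, r)$ of probability mass at most $\frac{4 \log 2 \cdot ST}{N}$, the conditional density of $\sigma|_{Q(z,d,r) \cup \{d\}}$ given $z = A_0(\sigma)$ is at most twice its unconditional counterpart. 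A Markov-style threshold on this likelihood ratio then transfers the non-preprocessing bound to the conditional setting, contributing the $2 \cdot \MAXS_{\textup{DLOG}}(T)$ term.

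The main obstacle will be pushing the Shearer-for-permutations machinery through the fact that the query set $Q(z,d,r)$ depends jointly on the advice, the challenge, and the internal randomness, while classical Shearer is naturally stated for fixed hypergraphs over independent coordinates. The read-$k$ concentration inequality is tailored to handle such joint dependence, but its permutation variant needs to be sharp enough to deliver the linear $ST/N$ rate rather than a weaker $\sqrt{ST/N}$ bound, and the constant $4 \log 2$ must be tracked carefully from the underlying entropy inequality through the threshold choice. Techniques that treat the query transcript as a single opaque object compressed against an $S$-bit advice (the standard ``compression'' approach used in the adaptive setting) lose too much and give only $\sqrt{ST/N}$-type rates, so exploiting the non-adaptive structure via per-query entropy decomposition is what makes the sharp bound attainable.
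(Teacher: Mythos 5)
Your proposal correctly identifies the core ingredients — the permutation variant of Shearer's inequality, the read-$k$ concentration bound derived from it, the comparison between the conditional distribution $P_{\Sigma\mid Z=z}$ and the uniform $Q_\Sigma$, and the bound $\E_{P_Z}[\KL(P_{\Sigma\mid Z=z}\|Q_\Sigma)]\le\log(2)S$ from the $S$-bit advice — so at a high level it is in the same spirit as the paper. However, there is a genuine gap around \emph{inner queries}, i.e.~DLOG queries $(a,b)$ with $a\equiv 0\pmod N$, which access $\sigma(b)$ at a position that does not depend on the secret $d$. For such queries the corresponding coordinate of $\sigma$ is read by the success indicator $f_d$ for \emph{every} $d\in[N]$, so the read-$k$ parameter for that coordinate is $|\SECS|=N$ rather than $T$, and the Shearer-for-permutations bound collapses. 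Your sketch treats $Q(z,d,r)=\{a_id+b_i\bmod N\}$ as if each coordinate of $\sigma$ is hit for only $O(T)$ values of $d$, but that is only true for the queries with $a_i\not\equiv 0$. The paper resolves this via the hybrid middle game $\MID$ (Definition~\ref{def:MID}, Theorem~\ref{thm:pc to middle}, Lemma~\ref{lem:middle to non preprocess}): it first \emph{conditions on} the values of $\sigma$ at the inner-query positions $\mathcal{S}(z)$, so that the remaining randomness is a uniform bijection on $[N]\setminus\mathcal{S}$, and applies the read-$k$ inequality only to the outer queries on that smaller space. The $T^2/N$ term in the theorem actually comes from this hybrid step (collisions between translated outer queries and the fixed inner positions or their images), not from the degenerate all-positions-distinct event you propose — your conditioning on distinct positions does not remove the inner queries from the footprint of every $f_d$, so it does not repair the read-$k$ argument.

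A secondary issue is the ``Markov-style threshold on the likelihood ratio.'' The paper does not bound a density ratio on a high-probability event. Instead it projects via the data-processing inequality to the scalar success probabilities $p_z,q_z$, applies Theorem~\ref{thm:Gavinsky-perm} to obtain $\KL(p_z\|q_z)\le 2\kappa_zT/u$, and then uses the purely algebraic fact $p\le 2(q+\KL(p\|q))$ (Corollary~\ref{cor:KL-analysis}) to produce the factor $2$ and the linear $4\log(2)ST/N$ term. Your threshold argument runs into the difficulty that $\log R_d$ can be negative, so a direct Markov bound on $\Pr[R_d>2]$ in terms of $\KL$ does not go through as stated; the paper's algebraic route avoids this entirely.
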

The bound on $\MAXS_{\textup{DLOG}}(T) \leq \frac{T^2}{N}$ is by Shoup's theorem~\cite{Shoup97} for any non-preprocessing DLOG algorithm.

\paragraph{Lower bounds for non-adaptive algorithms for DDH and sqDDH with preprocessing.} 

Our next results analyze decisional problems in the GGM.

\begin{theorem}\label{thm:DDH-intro1}
Let $A=(A_0,A_1)$ be a non-adaptive $(S,T)$ algorithm for the DDH (resp.~sqDDH) problem in the generic group model, over a group $G$ with a prime number $N$ of elements. Denote by $\MAXS_{\textup{DDH}}(T)$ the optimal success probability of a non-preprocessing, non-adaptive algorithm that makes at most $T$ queries. Then, the success probability of $A$ is at most
$$
\MAXS_{\textup{DDH}}(T) + \sqrt{\frac{2\log_e(2) ST}{N}} + \frac{T^2}{N} \leq
\frac{1}{2}+ \frac{2T^2}{N}  + \sqrt{\frac{2\log_e(2)ST}{N}},
$$
\end{theorem}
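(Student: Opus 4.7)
The plan is to reduce DDH (and sqDDH) to an instance of the permutation challenge model introduced in the paper, and then bound the distinguishing advantage via a KL-divergence argument combined with Pinsker's inequality, exploiting the Shearer-like inequality for permutations. The setup: a uniform permutation $\sigma \in S_N$ is drawn, the preprocessor outputs an $S$-bit advice $z = A_0(\sigma)$, and in the online phase the algorithm is handed $(\sigma(x),\sigma(y),\sigma(w))$ together with $z$ and issues $T$ non-adaptive queries $q_1(z),\ldots,q_T(z)$ to $\sigma$. The challenge bit $b$ decides whether $w = xy \bmod N$ (real) or $w$ is uniform (random). The success probability decomposes into the non-preprocessing baseline $\MAXS_{\text{DDH}}(T)$ --- handled by definition, and bounded by $\tfrac{1}{2}+T^2/N$ via Shoup --- plus an ``advice-gain'' term, which we must bound by $\sqrt{2\log_e(2)ST/N}$.

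To bound the advice-gain I would work at the level of KL divergence between the distributions of the adversary's full view --- namely $(z,\sigma(x),\sigma(y),\sigma(w),\sigma(q_1(z)),\ldots,\sigma(q_T(z)))$ --- under the two hypotheses on $b$, averaged over $(x,y)$ and (in the random case) $w$. Using the Shearer-like inequality for permutations of Barthe--Cordero-Erausquin--Ledoux--Maurey, together with the read-$k$ concentration inequality that the paper derives from it, I expect to show that the expected KL divergence (over $z$ and the public parts of the challenge) is at most $4\log_e(2)\cdot ST/N$ nats. Pinsker's inequality $\mathrm{TV}(P,Q) \le \sqrt{\KL(P\|Q)/2}$ then converts this into a distinguishing-advantage contribution of $\sqrt{2\log_e(2)ST/N}$, which is precisely the middle term in the theorem.

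The main technical obstacle is the passage from a ``linear'' bound (as in the DLOG theorem, where the analogous quantity bounds the success probability directly) to the ``square-root'' bound required for a decisional problem: one must apply the permutation-Shearer step to a divergence between distributions on the view rather than to a success indicator, which is strictly more delicate because the random variables being related are high-dimensional rather than Bernoulli. A secondary difficulty is that the queries $q_i(z)$ are themselves functions of $\sigma$ via $z$, so a classical Shearer bound for product measures cannot be invoked; the permutation variant of Barthe et al.\ is precisely what enables decoupling this dependence while correctly accounting for the constraints a uniform permutation imposes on its values at a bounded set of positions. The sqDDH case reduces to the same estimate since the only distributional fact being used is the uniformity of $xy$ (respectively $x^2$) modulo the prime $N$, which holds identically in both settings.
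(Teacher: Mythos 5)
Your proposal gets the high-level shape right (instantiate the permutation challenge game, use the permutation-Shearer/read-$k$ machinery, finish with a Pinsker step to produce the $\sqrt{ST/N}$ term), and your numerics match the paper's. However, the way you set up the divergence step is not how the paper argues, and as stated it would not give the advertised decomposition. You propose to bound the ``advice-gain'' by the KL divergence between the adversary's view under $b=0$ and under $b=1$. That divergence captures the \emph{total} distinguishing advantage, not the incremental gain from the advice: even a non-preprocessing algorithm produces a nonzero KL between the two views. There is no clean way to subtract off the baseline from inside the square root, so you cannot recover an additive bound of the form $\MAXS_{\textup{DDH}}(T) + \sqrt{\cdot}$ from a Pinsker bound on the total variation distance between the $b=0$ and $b=1$ views. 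The paper compares something different: $P_{\Sigma\mid Z=z}$ (the permutation's distribution given the advice) against a fictitious world $Q$ where $\Sigma$ is drawn independently of $Z$. The quantity $\kappa_z = \KL(P_{\Sigma\mid Z=z}\|Q_{\Sigma\mid Z=z})$ isolates precisely what the advice reveals, and Claim 2 bounds its expectation by $\log(2)\, S$. The baseline then appears as a separate term, the success under $Q$, which is at most $\MAXS_{\textup{DDH}}(T)$ (after passing through the $\MID$ hybrid).

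A second inaccuracy: you describe the passage from DLOG-style ``linear'' bounds to DDH-style ``square-root'' bounds as forcing one to work with the KL between high-dimensional view distributions rather than Bernoulli indicators. In the paper, DLOG and DDH are handled by literally the same lemma, Claim~3, which bounds the \emph{Bernoulli} KL $\KL(p_z\|q_z) \le 2\kappa_z T/u$ via Theorem~\ref{thm:Gavinsky-perm} applied to the read-$k$ family of $\{0,1\}$-valued success indicators $f_\secv(\sigma)=\SUC_{(A_1)_z^{\mathcal{O}(\sigma,\secv)}}(\secv)$. The only divergence between the two applications is the final analytic step on a scalar KL: DLOG uses Corollary~\ref{cor:KL-analysis} ($p\le 2(q+\KL(p\|q))$, favorable when $q$ is tiny), whereas DDH/sqDDH uses Pinsker ($p-q\le\sqrt{\KL(p\|q)/2}$, favorable when $q\approx 1/2$). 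There is no high-dimensional KL step special to the decisional setting.

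Finally, your proposal does not account for the additive $T^2/N$ term in the theorem. That term does not come from $\MAXS(T)$ and is not absorbed by Shoup's bound; it arises from the treatment of inner queries (direct queries to $\sigma$ that the online algorithm makes, as opposed to translated outer queries). The paper deals with these via a hybrid ``middle game'' $\MID$ in which the adversary fixes $T_1$ values of the permutation but makes no inner queries, then shows (Lemma~\ref{lem:middle to non preprocess}) that $\widehat{\MAXS}(T)\le\MAXS(T)+T^2/(2u)$, which with $u=N/2$ yields the extra $T^2/N$. Your plan quietly assumes all $T$ queries fit into the read-$k$ framework, which fails once inner queries are present; this is exactly the place where the Shearer-type inequality cannot be applied directly and where most of the technical work of Section~4 takes place.
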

\noindent where the inequality is by Shoup's theorem~\cite{Shoup97} which bounds $\MAXS_{\textup{DDH}}(T) \leq \frac{1}{2} +\frac{T^2}{N}$ for any non-preprocessing DDH algorithm (the bound for sqDDH follows by a similar proof). 

For the sqDDH problem, the theorem is sharp, as a simple non-adaptive variant of the adaptive algorithm of~\cite{Corrigan-GibbsK18} 
sketched in Appendix~\ref{app:sqddh}
attains its success probability bound. 
 
For the DDH problem, 
we conjecture that the bound on the success probability is not sharp, and the `right' bound is $\frac{1}{2} + \tilde{O}(\frac{T^2}{N}+\frac{ST}{N})$.
Possibly, the techniques used in the recent result~\cite{AkshimaBGXY24} which determined the maximal success rate of adaptive algorithms can be combined with our techniques to show this improved bound in the non-adaptive setting.

\paragraph{Sharp lower bound for non-adaptive key-recovery attacks on Even-Mansour with preprocessing.} Our last result shows that unbounded preprocessing does not allow speeding up non-adaptive key-recovery attacks on EM (unless $S \geq \tilde{\Omega}(\sqrt{N})$), in stark contrast with adaptive attacks. In the theorem, we denote by $T=T_1+T_2$ the 
total number of queries the online phase of the algorithm makes.
\begin{theorem}\label{thm:em-nonadaptive-intro}
Let $A=(A_0,A_1)$ be a~key-recovery, non-adaptive $(S,T)$-adversary for the Even-Mansour cryptosystem, which can query only the public and encryption oracles and not the decryption oracle. Denote by $\MAXS_{\textup{EM}}(T)$ the optimal success probability of a non-preprocessing, non-adaptive algorithm that makes at most $T$ queries. Then, the success probability of $A$ is at most
$$
2\cdot \MAXS_{\textup{EM}}(T) + \frac{4\log_e(2) S (T+1)}{N} + \frac{T^2}{N} \leq
    \frac{3T^2 }{N}  + \frac{4\log_e(2) S(T+1)}{N}.
$$
Moreover, the theorem also holds for the single-key variant where $k_1=k_2$.
\end{theorem}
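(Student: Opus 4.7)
The plan is to embed key-recovery for Even--Mansour into the permutation-challenge (PC) framework of Section~\ref{sec:proof-overview} and then invoke the same generic non-adaptive preprocessing bound used to derive Theorem~\ref{thm:dl-nonadaptive-intro}, closing with the classical non-preprocessing EM bound. Write $EM_{k_1,k_2}(m)=k_2\oplus\sigma(k_1\oplus m)$ and treat $\sigma$ on $\{0,1\}^n$ as the secret permutation over $N=2^n$ points, with the uniform random key pair $(k_1,k_2)$ playing the role of the PC challenge. A pre-translated public query $x$ translates to the $\sigma$-query $x$, which is challenge-independent, while a pre-translated encryption query $m$ translates, via $k_1$, to the $\sigma$-query $k_1\oplus m$ whose answer is XORed with $k_2$ before being returned. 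Pre-translation queries are therefore challenge-independent but post-translation queries do depend on the challenge, exactly as the PC model requires, and non-adaptivity at the pre-translation level carries over verbatim. The hypothesis that $A$ never queries the decryption oracle is essential here: a decryption query would translate to a $\sigma^{-1}$ query, outside the forward-permutation PC framework, so the restriction in the hypothesis is precisely what makes the embedding legitimate.

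With the embedding set, invoke the PC-model non-adaptive preprocessing theorem, proved by combining the Barthe--Cordero-Erausquin--Ledoux--Maurey Shearer-like inequality for $S_N$ with the read-$k$ concentration bound derived from it, to obtain
\[
\Pr[\text{success}]\ \le\ 2\,\MAXS_{\mathrm{EM}}(T)\;+\;\frac{4\log(2)\,S\,(T+1)}{N}\;+\;\frac{T^2}{N}.
\]
The factor $(T+1)$ rather than $T$ likely reflects bookkeeping for the two oracle interfaces: each pre-translated encryption query produces a post-translation $\sigma$-input sharing a dependence on $k_1$, and the cleanest PC accounting absorbs this shared-challenge coordinate as one extra unit of query budget (analogous to how the encoding $\sigma(d)$ is implicitly charged in the DLOG embedding). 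The classical Even--Mansour bound of~\cite{EvenM97}, and its single-key analogue of Dunkelman, Keller and Shamir~\cite{DKS15}, gives $\MAXS_{\mathrm{EM}}(T)\le T^2/N$ for any non-preprocessing adversary, which combined with the PC bound yields the stated numerical inequality. For the single-key variant $k_1=k_2$, the challenge collapses to a single $n$-bit key and the same embedding and analysis go through unchanged.

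The main obstacle is confirming that the PC-model machinery transfers from DLOG to EM with the same sharp constants. Concretely, one must check that the post-translation $\sigma$-query pattern induced by the EM translations (XOR by $k_1$ on inputs and XOR by $k_2$ on outputs) fits the read-$k$ family structure for which the Shearer-like inequality yields the exact constant $4\log(2)$ appearing in Theorem~\ref{thm:dl-nonadaptive-intro}. I expect this to be essentially routine, because XOR translations on $\{0,1\}^n$ play the same role as the affine translations $x\mapsto a\cdot d+b$ on $\mathbb{Z}_N$ in the DLOG embedding, and the two oracle interfaces of EM are the direct analogue of the two kinds of DLOG queries; the crux of the bookkeeping is then localized to producing the ``$+1$'' correction that distinguishes this theorem from its DLOG counterpart.
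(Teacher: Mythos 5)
Your proposal matches the paper's proof in its essential structure: embed EM key recovery into the permutation-challenge framework with inner permutation $\sigma$, secret $(k_1,k_2)$, translation $\TR((k_1,k_2),m)=m\oplus k_1$, and post-processing $\POST((k_1,k_2),j)=j\oplus k_2$; verify that $\TR$ is $N$-uniform; invoke Theorem~\ref{thm:g-nonadaptive} with $u=N$; and finish with the classical non-preprocessing bound $\MAXS_{\textup{EM}}(T)\le T^2/N$ of Even--Mansour and its single-key analogue.

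Two small remarks. First, the step you describe as ``essentially routine'' is the one substantive verification the paper actually carries out: for fixed $m$ and $j$, $\TR((k_1,k_2),m)=j$ iff $k_1=j\oplus m$, so exactly a $\tfrac{1}{N}$-fraction of secrets map $m$ to $j$, giving $N$-uniformity --- you should state this rather than wave at the analogy with affine maps on $\mathbb{Z}_N$. Second, your speculative accounting for the extra $+1$ (``one extra unit of query budget for the shared-challenge coordinate'') is not what the paper does: a verbatim application of Theorem~\ref{thm:g-nonadaptive} already yields $\frac{4\log(2)ST}{N}$, which is stronger than the stated $\frac{4\log(2)S(T+1)}{N}$, so the stated bound follows a fortiori and no special justification for the $+1$ is needed or given. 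Your remark on why decryption queries are excluded is substantively correct (a decryption query would require inverting the outer oracle, which the PC model does not provide), though the phrasing ``forward-permutation PC framework'' slightly misleads since inner $\sigma^{-1}$ queries are in fact permitted in the EM instantiation; the restriction concerns only the outer oracle.
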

The inequality is by \cite{DKS15,EvenM97}.
We remark that by symmetry of the EM construction, we may allow querying the decryption (and not the encryption) oracle. This restriction is also natural in practice, since access to encryption and access to decryption typically arise from different mechanisms, and an adversary may well obtain only one of them.
The theorem is tight by the non-adaptive attacks of~\cite{BW00,Daemen91,DKS15} that can be viewed as preprocessing attacks in the setting of the theorem.\footnote{Similarly to baby-step giant-step for DLOG, these algorithms have a challenge-independent phase that can be performed during preprocessing, and whose outcome can be viewed as an advice string.}

\subsection{Our techniques}
\label{sec:techniques}

Our bounds are obtained using techniques from information theory, as summarized below.

\paragraph{Entropy and Shearer's lemma.} 
For a finite set $\mathcal{X}$, the entropy of a random variable $X$ assuming values in $\mathcal{X}$ is $H(X)=\sum_{x \in \mathcal{X}} \Pr[X=x] \log_e (1/\Pr[X=x])$. 
The entropy, which measures the amount of uncertainty associated with the possible outcomes of $X$, is one of the central notions in information theory.

A basic property of entropy is \emph{subadditivity}: For any random variables $X_1,X_2,\ldots,X_N$, the entropy of the Cartesian product $(X_1,X_2,\ldots,X_N)$ (which assumes values in $\mathcal{X} \times \ldots \times \mathcal{X}$) satisfies the relation $H(X_1,\ldots,X_N)\leq \sum_{i=1}^N H(X_i)$. Shearer's lemma~\cite{ChungGFS86} is a classical inequality that generalizes this property to relate the entropy of a set of random variables to the entropies of collections of their subsets.
\begin{theorem}[Shearer's lemma]
Let $X_1,\ldots,X_N$ be random variables, and let $\Qcal_1,\Qcal_2,\ldots,\Qcal_m$ be subsets of $[N]=\{1,2,\ldots,N\}$, such that each $i \in [N]$ belongs to at least $k$ of them. Then
\[
H(X_1,\ldots,X_N) \leq \frac{1}{k} \sum_{i=1}^m H(X_{\Qcal_i}),
\]
where for $\Qcal =\{i_1,\dots i_\ell\} \subset [N]$, $H(X_{\Qcal}):=H(X_{i_1},X_{i_2},\ldots,X_{i_{\ell}})$.
\end{theorem}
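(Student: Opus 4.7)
The plan is to deduce Shearer's inequality from two standard facts about Shannon entropy: the chain rule and the fact that conditioning does not increase entropy. The key idea is to bound each $H(X_{\mathcal{U}_i})$ from below by a sum of conditional entropies that is ``aligned'' with the chain-rule expansion of $H(X_1,\ldots,X_N)$, so that when we sum over $i=1,\ldots,m$, each term of the full chain-rule expansion appears at least $k$ times.

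First, I would fix a subset $\mathcal{U} = \{i_1 < i_2 < \cdots < i_\ell\} \subseteq [N]$ and expand $H(X_{\mathcal{U}})$ via the chain rule as $\sum_{j=1}^{\ell} H(X_{i_j} \mid X_{i_1},\ldots,X_{i_{j-1}})$. Since $\{i_1,\ldots,i_{j-1}\} \subseteq \{1,2,\ldots,i_j-1\}$, the monotonicity of entropy under additional conditioning gives
\[
H(X_{i_j} \mid X_{i_1},\ldots,X_{i_{j-1}}) \;\geq\; H(X_{i_j} \mid X_1,\ldots,X_{i_j-1}),
\]
and hence
\[
H(X_{\mathcal{U}}) \;\geq\; \sum_{j \in \mathcal{U}} H(X_j \mid X_1,\ldots,X_{j-1}).
\]

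Next, I would apply this inequality to each of the subsets $\mathcal{U}_1,\ldots,\mathcal{U}_m$ and sum over $i$. The hypothesis that every index $j \in [N]$ belongs to at least $k$ of the $\mathcal{U}_i$'s means that on the right-hand side each conditional entropy $H(X_j \mid X_1,\ldots,X_{j-1})$ appears with multiplicity at least $k$; since these conditional entropies are non-negative, we get
\[
\sum_{i=1}^m H(X_{\mathcal{U}_i}) \;\geq\; k \sum_{j=1}^N H(X_j \mid X_1,\ldots,X_{j-1}) \;=\; k \cdot H(X_1,\ldots,X_N),
\]
where the last equality is the chain rule applied to the full tuple. Dividing by $k$ yields the claim.

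There is no real obstacle here: the entire argument reduces to carefully matching indices so that conditioning only happens on ``earlier'' coordinates in a fixed total order on $[N]$, which is exactly what allows the chain-rule expansion to close up after summing. The only mild subtlety is to make sure the inequality is applied in the correct direction (conditioning on a larger set decreases entropy), which is why we chose to lower-bound $H(X_{\mathcal{U}_i})$ by conditional entropies conditioned on the (potentially much larger) prefix $\{1,\ldots,j-1\}$.
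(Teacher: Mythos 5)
Your proof is correct, and it is the standard entropy-based proof of Shearer's lemma (chain-rule expansion plus monotonicity of conditional entropy under additional conditioning). The paper does not supply a proof of this classical statement at all — it cites it to Chung, Graham, Frankl, and Shearer and moves on — so there is nothing in the paper to compare against; your argument is the canonical one and all steps (the chain rule, the bound $H(X_{i_j} \mid X_{i_1},\ldots,X_{i_{j-1}}) \geq H(X_{i_j} \mid X_1,\ldots,X_{i_j-1})$, and the non-negativity of the conditional-entropy terms when swapping the order of summation under the ``at least $k$'' cover hypothesis) are sound.
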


Shearer's lemma was used to obtain numerous results in combinatorics, theoretical computer science, probability theory, and other areas
(see the survey~\cite{Galvin14}).

\paragraph{KL-divergence.} In our applications, it will be convenient to use Shearer's lemma via its version for the \emph{Kullback-Leibler (KL) divergence}, which measures the amount of dissimilarity between two distributions. 
For two distributions $P,Q$ such that the support of $P$ is contained in the support of $Q$, the KL divergence between $P$ and $Q$ is
\[
\KL( P \| Q ) = \sum_{x \in \mathrm{Support}(P)} P(x) \log \frac{P(x)}{Q(x)}. 
\]
The version of Shearer's lemma for KL-divergence reads as follows:
\begin{theorem}[Shearer's lemma for KL-divergence~\cite{GavinskyLSS15}] 
\label{thm:shearer-KL}
For a finite set $\mathcal{X}$, let $Q=(Q_1,Q_2,\ldots,Q_N)$ be the uniform distribution on $\mathcal{X}^N = \mathcal{X} \times \ldots \times \mathcal{X}$, and let $P=(P_1,\ldots,P_N)$ be another distribution on $\mathcal{X}^N$. Let
$\Qcal_1,\Qcal_2,\ldots,\Qcal_m$ be subsets of $[N]$, such that each $i \in [N]$ belongs to at most $k$ of them. Then
$$k\cdot\KL( P \| Q ) \geq \sum_{j \in [m]} \KL( P_{\Qcal_j} \| Q_{\Qcal_j}),$$
where for $\Qcal =\{i_1,\dots i_\ell\} \subset [N]$, $P_{\Qcal}=(P_{i_1},\ldots,P_{i_\ell})$ is the marginal distribution of $P$ on $\mathcal{X}^{\ell}$, and analogously for $Q$. 
\end{theorem}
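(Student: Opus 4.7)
The plan is to deduce the statement from the entropy form of Shearer's lemma already recalled above, exploiting the fact that $Q$ is uniform. Since $Q(x) = |\mathcal{X}|^{-N}$ for every $x \in \mathcal{X}^N$, a direct expansion of the definition of KL-divergence gives
\[
\KL(P \| Q) = N \log|\mathcal{X}| - H(P),
\]
and because the marginal $Q_{\Qcal}$ is also uniform (on $\mathcal{X}^{|\Qcal|}$) for any $\Qcal \subseteq [N]$, we similarly obtain $\KL(P_{\Qcal} \| Q_{\Qcal}) = |\Qcal| \log|\mathcal{X}| - H(P_{\Qcal})$. Substituting these identities and writing $k_i := |\{ j : i \in \Qcal_j \}|$ so that $\sum_j |\Qcal_j| = \sum_i k_i$, the target inequality $k\cdot \KL(P\|Q) \geq \sum_j \KL(P_{\Qcal_j}\|Q_{\Qcal_j})$ rearranges into the equivalent entropy form
\[
k\, H(P) \;\leq\; \sum_{j=1}^m H(P_{\Qcal_j}) \;+\; \sum_{i=1}^N (k - k_i)\, \log|\mathcal{X}|,
\]
where the coefficients $k - k_i$ are non-negative precisely by the hypothesis $k_i \leq k$.

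To establish this reformulation, the key move is a padding trick: enlarge the family $\{\Qcal_1,\ldots,\Qcal_m\}$ by appending $k - k_i$ copies of the singleton $\{i\}$ for each $i \in [N]$. In the enlarged family every coordinate is covered by exactly $k$ sets, so the entropy Shearer's lemma applies with parameter $k$ and yields
\[
k\, H(P) \;\leq\; \sum_{j=1}^m H(P_{\Qcal_j}) \;+\; \sum_{i=1}^N (k - k_i)\, H(P_i).
\]
I would then invoke the elementary bound $H(P_i) \leq \log|\mathcal{X}|$, valid for any distribution on the alphabet $\mathcal{X}$, to upper bound the contribution of the singleton dummies. This turns the line above into exactly the inequality isolated at the end of the previous paragraph, completing the reduction.

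There is no genuinely hard step once this reduction is recognized: the ingredients — the identity $\KL(\cdot\|Q) = \mathrm{const} - H(\cdot)$ for uniform $Q$, the padding by singletons, and the maximum-entropy bound on $\mathcal{X}$ — are all elementary. The one subtlety that is easy to fumble is bookkeeping on the direction of the inequality, because the hypothesis reads ``at most $k$'' in the KL form but ``at least $k$'' (in fact, exactly $k$ after padding) in the entropy form; the flip is reconciled by the minus sign in $\KL = \mathrm{const} - H$, and the slack introduced by overshooting the coverage with singleton dummies is precisely compensated by the bound $H(P_i)\leq \log|\mathcal{X}|$.
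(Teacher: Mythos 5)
The paper states Theorem~\ref{thm:shearer-KL} as a cited result from~\cite{GavinskyLSS15} and does not include its own proof, so there is no in-text argument to compare against. That said, your derivation is correct and self-contained: the identity $\KL(\cdot\|Q)=\mathrm{const}-\HH(\cdot)$ for uniform $Q$ (the paper's \ref{prope:kl and entropy}), the singleton-padding trick to convert the ``at most $k$'' hypothesis into an exact $k$-cover so that the entropy Shearer lemma applies, and the maximum-entropy bound $\HH(P_i)\le\log|\mathcal{X}|$ (the paper's \ref{prope:entropy bound}) do combine to yield exactly the claimed inequality, with the sign flip between ``at most'' and ``at least'' absorbed by the minus sign in the KL--entropy identity, just as you say. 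This is the standard reduction of the KL form to the entropy form; since the original reference proves a conditional/chain-rule version directly, your route is arguably the more elementary one for the unconditional statement, at the cost of relying on uniformity of $Q$ (which is exactly the setting stated, so nothing is lost).
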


\paragraph{The concentration inequality of~\cite{GavinskyLSS15} for families of read-$k$ functions.} One of the directions in which Shearer's lemma was applied is showing that functions on a product space which depend on `almost-disjoint' sets of variables behave, in several aspects, similarly to independent functions. A formal manifestation of this phenomenon is the following. 
\begin{definition}
    A family $\{f_1,f_2,\ldots,f_m\}$ of functions over a product space $\mathcal{X}^N$ is called a \emph{read-$k$ family} if for each $1 \leq i \leq N$, at most $k$ of the functions depend on the $i$'th coordinate. 
\end{definition}
An easy application of Shearer's lemma allows showing that if $f_1,\ldots,f_m:\{0,1\}^n \to \{0,1\}$ is a read-$k$ family with $\Pr[f_i(x)=1]=p$ for all $i$, then $\Pr[f_1(x)=f_2(x)=\ldots=f_m(x)=1] \leq p^{m/k}$. That is, the probability that all functions are equal to 1 simultaneously is not much larger than $p^m$ (which would be the probability if the functions were completely independent). 

Gavinsky, Lovett, Saks and Srinivasan~\cite{GavinskyLSS15} used the KL-divergence variant of Shearer's lemma to prove a significantly stronger result in this direction -- a variant of Chernoff's bound for read-once families of Boolean functions. Recall that for independent random variables $X_1,X_2,\ldots,X_m \in \{0,1\}$ such that $\Pr[X_i=1]=p$, Chernoff's inequality asserts $\Pr[\sum_{i=1}^m X_i \geq (p+\epsilon)m] \leq \exp(-2\epsilon^2 m)$. The authors of~\cite{GavinskyLSS15} proved the following.
\begin{theorem}[The concentration bound of~\cite{GavinskyLSS15} for read-$k$ families of Boolean functions]
\label{Thm:Gavinsky-intro}
    Let $f_1,\ldots,f_m:\{0,1\}^N \to \{0,1\}$ be a read-$k$ family of functions. If $\Pr[f_i(x)=1]=p$ for all $i$, then $\Pr[\sum_{i=1}^m f_i(x) \geq (p+\epsilon)m] \leq \exp(-2\epsilon^2 m/k)$.
\end{theorem}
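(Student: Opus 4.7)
The plan is to convert the desired tail bound into a statement about a KL divergence to which Theorem~\ref{thm:shearer-KL} can be directly applied. Let $E=\{x\in\{0,1\}^N:\sum_{j=1}^m f_j(x)\geq(p+\epsilon)m\}$ denote the bad event, let $Q$ be the uniform law on $\{0,1\}^N$, and let $P$ be the uniform law on $E$. A one-line calculation gives
\[
\KL(P\|Q)=\sum_{x\in E}\frac{1}{|E|}\log\frac{2^N}{|E|}=\log\frac{1}{\Pr_Q[E]},
\]
so it suffices to establish a lower bound $\KL(P\|Q)\geq 2\epsilon^2 m/k$.

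To lower-bound $\KL(P\|Q)$, I take $\mathcal{Q}_j\subseteq[N]$ to be the set of coordinates on which $f_j$ depends. By the read-$k$ hypothesis each coordinate $i\in[N]$ belongs to at most $k$ of these sets, and Theorem~\ref{thm:shearer-KL} yields
\[
k\cdot\KL(P\|Q)\ \geq\ \sum_{j=1}^m \KL\!\left(P_{\mathcal{Q}_j}\,\|\,Q_{\mathcal{Q}_j}\right).
\]
Since $f_j$ is a deterministic function of $x_{\mathcal{Q}_j}$, the data processing inequality for KL divergence (apply the post-processing map $x_{\mathcal{Q}_j}\mapsto f_j(x_{\mathcal{Q}_j})$) gives
\[
\KL\!\left(P_{\mathcal{Q}_j}\,\|\,Q_{\mathcal{Q}_j}\right)\ \geq\ \KL\!\left(\mathrm{Bern}(q_j)\,\|\,\mathrm{Bern}(p)\right),
\]
where $q_j:=\Pr_P[f_j=1]$ and I have used the hypothesis $\Pr_Q[f_j=1]=p$.

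Set $\bar q:=\tfrac{1}{m}\sum_j q_j$. A two-line second-derivative computation shows that $q\mapsto\KL(\mathrm{Bern}(q)\|\mathrm{Bern}(p))$ is convex, so Jensen's inequality gives $\sum_j\KL(\mathrm{Bern}(q_j)\|\mathrm{Bern}(p))\geq m\cdot\KL(\mathrm{Bern}(\bar q)\|\mathrm{Bern}(p))$. Because $P$ is supported on $E$, we have $\bar q=\E_{x\sim P}\!\left[\tfrac{1}{m}\sum_j f_j(x)\right]\geq p+\epsilon$; combining this with the monotonicity of $q\mapsto\KL(\mathrm{Bern}(q)\|\mathrm{Bern}(p))$ to the right of $p$ and Pinsker's inequality in the binary form $\KL(\mathrm{Bern}(p+\epsilon)\|\mathrm{Bern}(p))\geq 2\epsilon^2$ yields $k\cdot\KL(P\|Q)\geq 2\epsilon^2 m$, which rearranges to the claimed bound $\Pr_Q[E]\leq\exp(-2\epsilon^2 m/k)$.

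The main obstacle is conceptual rather than computational: one must recognize that the ``entropy method'' template---encoding $\log(1/\Pr[E])$ as the KL divergence between the conditional law on $E$ and the uniform law, and then distributing that divergence across the $m$ dependency sets via the KL form of Shearer's lemma---is precisely what converts a read-$k$ dependency structure into a Chernoff-type bound with an extra $1/k$ factor in the exponent. Once that template is in place, the remaining ingredients (data processing, convexity of the binary KL, and the quadratic Pinsker estimate) are entirely standard and combine without complications.
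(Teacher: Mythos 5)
Your proof is correct. Note that the paper does not actually prove Theorem~\ref{Thm:Gavinsky-intro} --- it is cited from~\cite{GavinskyLSS15} as background. However, the core chain you use (Shearer's lemma in KL form $\to$ data processing on each $f_j$ $\to$ convexity of $q\mapsto\KL(q\|p)$) is exactly the template the paper itself uses when proving its bijection analogue, Theorem~\ref{thm:Gavinsky-perm}, which stops at the inequality $2k\cdot\KL(P_X\|Q_X)\geq m\cdot\KL(p\|q)$. Your additional steps --- taking $P$ to be uniform on the bad event so that $\KL(P\|Q)=\log(1/\Pr_Q[E])$, then using monotonicity of $q\mapsto\KL(q\|p)$ for $q\geq p$ and Pinsker's inequality --- are precisely the ``easily yields a concentration result in the form of Theorem~\ref{Thm:Gavinsky-intro}'' that the paper's remark after Theorem~\ref{thm:Gavinsky-perm} alludes to but does not spell out, and they are correct. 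One minor observation: you invoke generic data processing where the paper's Proposition~\ref{prop:dpKL} is stated more generally for $[0,1]$-valued $f$; for Boolean $f_j$ the two coincide, so nothing is lost.
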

In the 10 years since its introduction, this concentration inequality 
has become very useful and 
was applied to obtain various results in TCS and combinatorics (see, e.g.,~\cite{HatamiHTT21,HiraharaS24,Kahn22,KuniskySWY25}). 

\paragraph{Variants of Shearer's lemma and the concentration inequality of~\cite{GavinskyLSS15} for distributions defined over permutations.} The main technical tools we use in this paper are variants of the above results for distributions defined over the space $S_N$ of permutations. It will be convenient for us to state the results for the space of bijections from $[N]$ to a set with $N$ elements, which is obviously equivalent.

The following variant of Shearer's lemma for this non-product setting is a special case of a result that was proved (in equivalent forms) by Barthe, Cordero-Erausquin, Ledoux, and Maurey~\cite[Proposition~21]{BartheCLM11} and by Caputo and Salez~\cite[Theorem~4]{CaputoS24}. 
\begin{theorem}[Variant of Shearer's inequality for random bijections] 
\label{thm:shearer-perm-intro}
Let $\mathcal{X}$ be a set of size $N$. Let $Q_X = Q_{X_1,\ldots,X_N}$ be the uniform distribution over bijections from $[N]$ to $\mathcal{X}$,
and let $P_X = P_{X_1,\ldots,X_N}$ be another distribution over such bijections. Let 
$\Qcal_1,\Qcal_2,\ldots,\Qcal_m$ be subsets of $[N]$, such that each $i \in [N]$ belongs to at most $k$ of them. Then
$$
2k \cdot  \KL( P_X \| Q_X ) \geq \sum_{j \in [m]} \KL( P_{X_{\mathcal{U}_j}} \| Q_{X_{\mathcal{U}_j}} ),$$
where $P_{X_{\mathcal{U}}}$ is the distribution of $X_{\mathcal{U}} := \left(X_i \mid i \in \mathcal{U} \right)$ with respect to $P$ (and analogously for $Q$).
\end{theorem}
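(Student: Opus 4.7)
The plan is to recast the theorem as an approximate tensorization of entropy for the uniform distribution on bijections, and then derive the factor $2k$ via a chain-rule argument that accommodates the bijection constraint. Writing $f := dP_X/dQ_X$, we have $\KL(P_X \| Q_X) = \E_{Q_X}[f \log f]$; and for any $\mathcal{U} \subseteq [N]$ the marginal density of $P_{X_{\mathcal{U}}}$ against $Q_{X_{\mathcal{U}}}$ equals the conditional expectation $f_{\mathcal{U}} := \E_{Q_X}[f \mid X_{\mathcal{U}}]$, so $\KL(P_{X_{\mathcal{U}}} \| Q_{X_{\mathcal{U}}}) = \E_{Q_X}[f_{\mathcal{U}} \log f_{\mathcal{U}}]$. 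Since $\E_{Q_X}[f_{\mathcal{U}}] = 1$, the theorem is equivalent to the entropy subadditivity $\sum_j \Ent_{Q_X}(f_{\mathcal{U}_j}) \le 2k \cdot \Ent_{Q_X}(f)$, with $\Ent_{Q_X}(g) := \E_{Q_X}[g \log g] - \E_{Q_X}[g] \log \E_{Q_X}[g]$.

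To establish this subadditivity, I would mirror the usual proof of Shearer's lemma for KL-divergence (Theorem~\ref{thm:shearer-KL}), averaging over random orderings of $[N]$ to exploit the exchangeability of the uniform measure on bijections. For a uniformly random bijection $\pi : [N] \to [N]$, the chain rule decomposes $\Ent_{Q_X}(f)$ into a sum of conditional one-coordinate entropies, where at stage $i$ we condition on the previously revealed values $X_{\pi(1)}, \ldots, X_{\pi(i-1)}$; analogously each $\Ent_{Q_X}(f_{\mathcal{U}_j})$ decomposes along the restriction of $\pi$ to $\mathcal{U}_j$. Comparing the per-index contributions on both sides reduces the inequality to a local statement about one coordinate of a uniform random bijection conditioned on two nested sets of revealed coordinates. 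In the product-measure case the data-processing inequality supplies such a local comparison with constant $1$, which sums to the factor $k$ on the right; for bijections, the residual support of $X_i$ shrinks with each conditioning, breaking this direct comparison and forcing an additional multiplicative loss, which I expect to bound by $2$ after averaging over $\pi$.

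Summing the per-index inequalities and invoking the assumption that each $i$ lies in at most $k$ of the sets $\mathcal{U}_j$ then delivers $\sum_j \Ent_{Q_X}(f_{\mathcal{U}_j}) \le 2k \cdot \Ent_{Q_X}(f)$, which is what we wanted. The main obstacle is the last quantitative step: showing that the loss from the shrinking residual support is at most a factor of $2$. The factor $2$ (rather than $1$) is genuine and reflects the slower mixing of the symmetric group compared with a product space; carrying this step out rigorously requires the local entropic inequality for random bijections established in~\cite{BartheCLM11} and revisited in~\cite{CaputoS24}, which I would invoke as a black box rather than rederive from the random-transposition modified log-Sobolev machinery.
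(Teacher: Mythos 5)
Your reformulation of the two KL-divergences as entropies of $f$ and of the conditional expectations $f_{\mathcal{U}_j}$ under $Q_X$ is exactly right and matches the paper's Observation~\ref{obs:ent VS KL}. And you ultimately invoke the same input as the paper — the approximate tensorization results of~\cite{BartheCLM11,CaputoS24} — so in that sense the approach is the same.

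However, the chain-rule scaffolding you build around that citation is both unnecessary and slightly misleading about what the cited results provide. What~\cite[Theorem~4]{CaputoS24} (the form the paper uses) gives is already a \emph{global} approximate-subadditivity statement, $\sum_{A \subseteq [N]} \theta_A \, \Ent(\E_A[f]) \le (1-\kappa)\,\Ent(f)$ for any probability vector $(\theta_A)$ on subsets, with $\kappa = \min_{i\neq j}\sum_{A\supseteq\{i,j\}}\theta_A$. There is no "local per-index entropic inequality for bijections" in those references to plug into a coordinate-by-coordinate chain rule; the tensorization is the theorem, not a lemma one feeds into a further decomposition. Once you have the entropy reformulation, the derivation is purely combinatorial: set $\theta_A$ proportional to the number of $j$ with $\Qcal_j = A^c$, apply the cited theorem, and bound $\kappa \ge (m-2k)/m$ using that each index lies in at most $k$ of the $\Qcal_j$. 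No random orderings or per-index comparisons are needed. It is worth noting that a genuinely chain-rule-based elementary argument along the lines you sketch \emph{can} be carried out (the paper does this in Appendix~\ref{app:shearer-perm}, reducing to a Shearer-type inequality for a single coordinate, i.e., indicator vectors), but that route loses in the constant, yielding $9$ rather than $2$; getting the constant $2$ seems to require the stronger tensorization machinery of~\cite{BartheCLM11,CaputoS24}, invoked directly rather than through a per-coordinate decomposition.
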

The derivation of Theorem~\ref{thm:shearer-perm-intro} from~\cite[Proposition~21]{BartheCLM11} and~\cite[Theorem~4]{CaputoS24} is presented in Appendix~\ref{sec:sub:Deduction}. Since the proofs of these two results 
are somewhat involved, we also present an elementary proof of the theorem, albeit with a worse constant of $9$, in Appendix~\ref{sec:sub:weaker-Shearer}. We note that the constant $2$ in the theorem cannot be improved to $1$, as there are examples in which the ratio between the two sides of the inequality is $\frac{N}{N-1}$. It is not known whether the constant $2$ is optimal in general. 

\medskip

We derive from Theorem~\ref{thm:shearer-perm-intro} the following variant of the concentration inequality of Gavinsky, Lovett, Saks and Srinivasan for $k$-read families~\cite{GavinskyLSS15}, for the setting of bijections.

\begin{theorem}[Concentration for read-$k$ families on bijections] 
\label{thm:Gavinsky-perm}
Let $\mathcal{X}$ be a set of size $N$. Let $Q_X = Q_{X_1,\ldots,X_N}$ be the uniform distribution over bijections from $[N]$ to $\mathcal{X}$,
and let $P_X = P_{X_1,\ldots,X_N}$ be another distribution over such bijections. 
Let $\{ f_j \}_{j \in [m]}$ be a read-$k$ family of functions, with $f_j:\mathcal{X}^N \mapsto [0,1]$ for all $j$. Denote $p_j = \E_{P_X}[f_j(X)]$ and let $p = \frac{1}{m} \cdot \sum_{j \in [m]} p_j$ be the average of the expectations. Similarly, denote $q_j = \E_{Q_X}[f_j(X)]$ and $q = \frac{1}{m} \cdot \sum_{j \in [m]} q_j$.
Then
$$2k \cdot  \KL( P_X \| Q_X ) \geq m \cdot \KL( p \| q ),$$
where $\KL( p \| q ) = p \log(\frac{p}{q}) + (1-p)\log(\frac{1-p}{1-q})$ is the KL-divergence between two Bernoulli distributions with parameters $p$ and $q$.
\end{theorem}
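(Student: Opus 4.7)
The plan is to combine three standard information-theoretic ingredients: Theorem~\ref{thm:shearer-perm-intro} applied to the sets $\mathcal{U}_j$ on which the $f_j$ depend, the data-processing inequality applied coordinate-by-coordinate to turn each marginal KL into a Bernoulli KL, and finally the joint convexity of $\KL(\cdot\,\|\,\cdot)$ to collapse the sum into a single KL between the averages $p$ and $q$.

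First, let $\mathcal{U}_j \subseteq [N]$ be the set of coordinates on which $f_j$ depends, so that $f_j$ factors as a function of $X_{\mathcal{U}_j}$ alone. By the read-$k$ hypothesis, each $i \in [N]$ lies in at most $k$ of the $\mathcal{U}_j$, so Theorem~\ref{thm:shearer-perm-intro} yields
$$
2k \cdot \KL(P_X \| Q_X) \;\geq\; \sum_{j \in [m]} \KL(P_{X_{\mathcal{U}_j}} \| Q_{X_{\mathcal{U}_j}}).
$$
Note that because $f_j$ depends only on $X_{\mathcal{U}_j}$, we have $p_j = \E_{P_{X_{\mathcal{U}_j}}}[f_j]$ and $q_j = \E_{Q_{X_{\mathcal{U}_j}}}[f_j]$, so the marginal distributions on $X_{\mathcal{U}_j}$ are exactly what control $(p_j,q_j)$.

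Next, for each fixed $j$ I would view $f_j$ as a randomized channel $T_j$ that, given input $x$, outputs $1$ with probability $f_j(x) \in [0,1]$ and $0$ otherwise. Pushing $P_{X_{\mathcal{U}_j}}$ and $Q_{X_{\mathcal{U}_j}}$ through $T_j$ produces Bernoulli distributions with parameters $p_j$ and $q_j$ respectively. The data-processing inequality for KL-divergence then gives
$$
\KL(P_{X_{\mathcal{U}_j}} \| Q_{X_{\mathcal{U}_j}}) \;\geq\; \KL(p_j \| q_j).
$$
Summing over $j$ and combining with the bound from the previous paragraph yields
$$
2k \cdot \KL(P_X \| Q_X) \;\geq\; \sum_{j \in [m]} \KL(p_j \| q_j).
$$

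Finally, the function $(a,b) \mapsto \KL(a\|b)$ on $[0,1]^2$ is jointly convex, so by Jensen's inequality applied to the uniform average over $j \in [m]$,
$$
\KL(p \| q) \;=\; \KL\!\left(\tfrac{1}{m}\sum_{j} p_j \;\Big\|\; \tfrac{1}{m}\sum_{j} q_j\right) \;\leq\; \frac{1}{m} \sum_{j \in [m]} \KL(p_j \| q_j).
$$
Multiplying by $m$ and chaining with the previous inequality gives $2k\cdot \KL(P_X \| Q_X) \geq m\cdot \KL(p\|q)$, as claimed. The main conceptual step is the passage from the Shearer bound on marginals to the single-Bernoulli bound; everything there hinges on recognizing that for bounded $[0,1]$-valued $f_j$ one should invoke data processing through the randomized rounding channel $T_j$, rather than (as for $\{0,1\}$-valued functions) directly comparing the induced Bernoullis. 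I expect no substantive obstacle: once the read-$k$ structure is translated into the hypothesis of Theorem~\ref{thm:shearer-perm-intro} via the sets $\mathcal{U}_j$, the rest is a standard data-processing plus convexity argument exactly as in the derivation of the original Gavinsky--Lovett--Saks--Srinivasan bound from the product-space Shearer inequality.
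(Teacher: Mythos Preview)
Your proposal is correct and follows essentially the same route as the paper: apply Theorem~\ref{thm:shearer-perm-intro} to the dependency sets $\mathcal{U}_j$, reduce each marginal KL to $\KL(p_j\|q_j)$, and finish with joint convexity. The only cosmetic difference is that for the middle step the paper invokes Proposition~\ref{prop:dpKL} (which assumes $Q_{X_{\mathcal{U}_j}}$ is uniform, as it indeed is here), whereas you obtain the same inequality via the general data-processing inequality through the randomized rounding channel $T_j$; both arguments are standard and equivalent in this context.
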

The proof of Theorem~\ref{thm:Gavinsky-perm} is presented in Section~\ref{sec:preliminaries}.
We remark that this theorem easily yields a
concentration result in the form of Theorem~\ref{Thm:Gavinsky-intro} for read-$k$ families on bijections, but we use the above variant as it is more useful to us. 


\subsection{Proof overview}\label{sec:proof-overview}
We begin by defining a model, referred to as \emph{permutation challenge} (PC), for the problems analyzed in this paper.
This model is inspired by 
the pre-sampling technique~\cite{BartusekMZ19,CorettiDG18,CorettiDGS18,DodisGK17,Unruh07} which allows bounding the success probability of preprocessing adversaries by analyzing non-preprocessing adversaries. 
Analogously, our model allows bounding the success probability of non-adaptive preprocessing adversaries (for a certain class of problems) by analyzing non-preprocessing adversaries.

While the pre-sampling technique is specialized to adaptive adversaries,
our model allows optimizations for non-adaptive adversaries, and uses a completely different set of analytic techniques. The model abstracts away the internal details of the problem that are less relevant, and allows focusing on its most important aspects that allow proving time-space tradeoffs for non-adaptive adversaries. 
The fact that a single model can capture  the wide array of search and decision problems we consider and allows analyzing them collectively is highly non-trivial. 

\paragraph{The permutation challenge model.} At a high-level, in a game in the PC model, there is an ``inner permutation'' 
$\sigma:[N] \mapsto [N]$ and a secret $\secv$ selected from some space.
An adversary $A_1$ is given direct oracle access to $\sigma$ by issuing inner queries. 

In addition, the secret $\secv$ is used to implement an ``outer function'' which wraps the inner permutation. The adversary is allowed to query the outer function, where each query is translated using $\secv$ into a query to $\sigma$ via a \emph{translation function}, and the output of $\sigma$ is post-processed (again using $\secv$) and given back to the adversary as the oracle answer to the outer query.

The goal of $A_1$ is to ``unwrap'' the inner permutation (for example, by recovering the secret or a part of it) after interacting with the inner permutation via inner queries and with the outer function via outer queries. 

In the preprocessing setting, $A_1$ is given an additional advice string $z = A_0(\sigma)$. The game is essentially the same, and we allow the queries of $A_1$ to depend on $z$. We say that an algorithm is non-adaptive if its inner and outer queries (before translation) depend only on $z$ 
(but not on evaluations of $\sigma$ on secret-dependent values).

\paragraph{Instantiations.}
The DLOG problem can be easily reduced to a setting where the encoding $\sigma$ is a permutation on $[N]$, and the secret is the secret discrete-log $d$ which the adversary has to compute. A query of the adversary is a pair of group elements $(a,b)$, which is mapped to the group element $a \cdot d + b \bmod N$ via a linear function applied to the discrete log. 
If $a \equiv 0 \pmod N$, then the query directly accesses the inner permutation, and otherwise, it is an outer query. Here, there is no post-processing of the answer of $\sigma$.

Our model further supports the DDH and sqDDH problems (although their definitions are more technical).

For the Even-Mansour construction, the inner permutation $\sigma$ is the public permutation and the adversary is allowed to query it directly. 
The secret is the key $(k_1,k_2)$, and 
the outer function corresponds to the encryption/decryption oracle. The translation function maps an outer query by XORing it with one of the keys, while the post-processing function XORs the other key to the output of $\sigma$.

\paragraph{The proof.}
The most important component of the model is the translation function that maps each outer query to an input of $\sigma$ using the secret $\secv$.

A translation function is called \emph{uniform} if, for every fixed outer query, the translated input to $\sigma$ is distributed roughly uniformly when the secret $\secv$ is chosen uniformly. We show that if the translation function is uniform, then a non-adaptive preprocessing adversary obtains only a limited advantage (as a function of $S$, $T$, and a uniformity parameter) over a non-preprocessing adversary.

To explain the basic argument, consider first an online non-adaptive adversary $A_1$ that makes no inner queries, and fix a preprocessing string $z$, which fixes all queries of the adversary. We compare two distributions on the inner permutation $\sigma$: the distribution $Q$, where $\sigma$ is uniform, corresponding to the non-preprocessing setting, and the distribution $P$, where $\sigma$ is conditioned on the event $A_0(\sigma)=z$, corresponding to the preprocessing setting. For each secret value $j$, let $f_j(\sigma)$ indicate whether the adversary succeeds when the secret is $j$. The uniformity of the translation function implies that the family $\{f_j\}$ is read-$k$ for some small $k$ (which depends on the number of outer queries of $A_1$ and the uniformity parameter). Therefore, Theorem~\ref{thm:Gavinsky-perm} bounds the gap between the average success probability under $P$ and under $Q$ in terms of $\KL(P\|Q)$. Intuitively, the theorem converts the read-$k$ property of $\{f_j\}$ into a bound showing that the average success probability under $P$ cannot exceed that under $Q$ by much unless $\KL(P\|Q)$ is large. Since the only difference between $P$ and $Q$ is that in $P$ we condition on the short advice string $z$, this  establishes Lemma~\ref{lem:non-fixed g-nonadaptive}, which restricts the adversary to making only outer queries. A more detailed overview of this argument appears in Section~\ref{subsec:proof overview of weaker}.

Lemma~\ref{lem:non-fixed g-nonadaptive} already implies bounds for general adversaries by including the answers to the inner queries in the preprocessing string (Theorem~\ref{thm:g-nonadaptive-weaker}). However, as demonstrated in Remark~\ref{rem:comparison}, this approach is highly suboptimal for decision problems such as DDH and sqDDH. Therefore, we analyze the inner queries directly and more carefully in order to obtain improved bounds (Theorem~\ref{thm:g-nonadaptive}). When handling adversaries that make inner queries, a direct application of Shearer-like inequalities fails, since some permutation indices may always be queried. Nevertheless, we show that Theorem~\ref{thm:Gavinsky-perm} can still be applied to the part of the permutation that has not been directly queried. The proof proceeds via a hybrid argument, based on an intermediate game between the preprocessing and non-preprocessing games, which isolates the effect of the inner queries. The reader is referred to Section~\ref{subsec:weaker bound} for a complete comparison between these proof strategies and the corresponding bounds.
\paragraph{Comparison with other techniques.} 
At a high level, there are currently three main generic techniques for proving cryptanalytic time-space tradeoffs:
(1) compression arguments~\cite{Corrigan-GibbsK18,FreitagGK22,GennaroT00,GolovnevGHPV20,Wee05}, (2)
the pre-sampling technique~\cite{BartusekMZ19,CorettiDG18,CorettiDGS18,DodisGK17,Unruh07}, and (3) concentration inequalities~\cite{AkshimaBGXY24,AkshimaCDW20,AkshimaGL24,ChawinHM20,ChungGLQ20,GuoLLZ21,ImpagliazzoK10}.  

Most of these generic techniques were devised for adaptive algorithms, and it is not clear how to optimize them for non-adaptive algorithms for the problems we consider. 
In Appendix~\ref{app:limitations}, we discuss the limitations of previous techniques in detail and argue that they cannot meaningfully distinguish between adaptive and non-adaptive algorithms in our setting: even when applied to non-adaptive adversaries, they can improve over the tight bounds known for adaptive adversaries by at most a polylogarithmic factor in $N$.

In addition to generic techniques, several papers developed specialized methods for proving time-space lower bounds for specific problems (some of which extended the generic techniques and combined additional methods). 
Examples of such papers include~\cite{ChawinHM20,GajulapalliGK24,GolovnevGPS23}, which deal with non-adaptive algorithms for function inversion, as well as~\cite{ChungL23} which analyzes the 3SUM-Indexing problem.
These specialized techniques seem inapplicable in our setting.

\paragraph{Open problems.} 
The independence of the queries from the challenge appears inherent to our proof strategy. 
Known Shearer-like bounds control information of a random function $f$ only on sets of input–output pairs that are \emph{fixed} and independent of $f$, namely, of the form $\{(a,f(a)) : a\in A\}$ for fixed $A$. Allowing queries to depend on the challenge $c=f(d)$ leads to composed inputs such as $f(d+f(d))$, which no longer have this fixed-set structure. The paper~\cite{RaoS18} derived a Shearer-type inequality that allows for some dependence, but it is too weak for our setting. 
Overall, extending our approach beyond this restriction would require new Shearer-type inequalities or alternative techniques that can handle such challenge-dependent queries. 

More generally, a natural goal to pursue, in view of our results, is to obtain time-space lower bounds for adaptive attacks with preprocessing, as a function of \emph{the number of adaptivity rounds} for the problems considered in this paper. 
For the DLOG problem, we conjecture that the success probability of an $(S,T)$-algorithm with $r$ rounds of adaptivity is at most $\tilde{O}(T^2/N + rST/N)$. This matches our result for non-adaptive algorithms (i.e., $1$ round of adaptivity), as well as the bounds of~\cite{Corrigan-GibbsK18} for adaptive algorithms (i.e., $T$ rounds of adaptivity). Furthermore, it would be sharp, as for any $1 \leq r \leq T$, it is matched by a variant of the adaptive algorithm of~\cite{BernsteinL13,Corrigan-GibbsK18,Mih10}, in which instead of constructing one chain of length $T$ one constructs multiple chains of length $r$. 

Another open problem, mentioned above, is to close the gap between upper and lower bounds for non-adaptive $(S,T)$-algorithms for the DDH problem. As written, we conjecture that the bound of Theorem~\ref{thm:DDH-intro1} is not tight, and the optimal bound is  
 $\frac{1}{2} + \tilde{O}(\frac{T^2}{N}+\frac{ST}{N})$.

\paragraph{Organization of the paper.} In Section~\ref{sec:preliminaries} we present in detail the definitions and results related to entropy and KL-divergence that we will use in our proofs, including the variants of Shearer's inequality and of the inequality of Gavinsky, Lovett, Saks and Srinivasan for functions over permutations. In Section~\ref{sec:PC} we present the permutation challenge (PC) game model and show how  DLOG, DDH, sqDDH, and EM key-recovery fit into it. In Section~\ref{sec:main} we prove our main theorem for the PC model.
In Section~\ref{sec:DLOG} we present the bounds for the DLOG, DDH, and sqDDH problems, and in Section~\ref{sec:EM} we present the bound for attacks on the EM cryptosystem. 

\section{Information Theory}
\label{sec:preliminaries}

In this section we present definitions and results from information theory that will be used throughout the paper. For more background on information theory, see~\cite{CoverT06}.

\subsection{Definitions and notations}

Let $\mathcal{X},\mathcal{Y},\mathcal{Z}$ be finite sets. Let $P_{X,Y,Z}$ and $Q_{X,Y,Z}$ be probability distributions over $\mathcal{X} \times \mathcal{Y} \times \mathcal{Z}$. Denote the projections of $P_{X,Y,Z}$ on $\mathcal{X} \times \mathcal{Y}, \mathcal{X},$ and $\mathcal{Z}$, by $P_{X,Y},P_X$, and $P_Z$, respectively, and use similar notations for $Q_{X,Y,Z}$. A random variable $X$ assuming values in $\mathcal{X}$ is said to be drawn from the distribution $P_X$ if $\Pr[X=x]=P_X(x)$ for all $x \in \mathcal{X}$.  

\medskip

\noindent (a). The \emph{entropy} of a random variable $X$ drawn from $P_X$ 
is $$\HH(X) = \E_{P_X(x)}[\log(1/P_X(x))] = \sum_{x \in \mathcal{X}} P_X(x) \log(1/P_X(x) ).$$
Here and throughout the paper, all logarithms are in base $e$, unless explicitly stated otherwise. In case of ambiguity about the distribution, we may also write $\HH(P_X)$.

\medskip

\noindent (b). The \emph{conditional entropy} of $X$ given $Y$ (drawn from $P_{Y}$) is
$$\HH(X \mid Y ) = \E_{P_Y(y)}[\HH(X \mid Y=y)] =  \sum_{y \in \mathcal{Y}} P_Y(y)\HH(X \mid Y=y) = \HH(X,Y) - \HH(Y).$$

\medskip

\noindent (c). The \emph{Kullback-Leibler divergence} (KL-divergence) between two distributions $P_X,Q_X$ is 
$$\KL( P_X \| Q_X ) = \E_{P_X(x)}[\log(P_X(x)/ Q_X(x))] = \E_{P_X(x)}[\log(1/ Q_X(x))] - \HH(P_X),$$
where we assume that the support of $P_X$ is contained in the support of $Q_X$ (otherwise, the KL-divergence is infinite).

\medskip 

\noindent (d). The KL-divergence between $P_X,Q_X$ conditioned on $P_Z$ is
$$\KL( P_{X \mid Z} \| Q_{X \mid Z} ) = \E_{P_Z(z)}[\KL(P_{X \mid Z = z} \| Q_{X \mid Z = z} )].$$

\medskip

\noindent (e). The KL-divergence between two Bernoulli distributions $P_X,Q_X$ with parameters $p,q$, respectively (i.e., $\mathcal{X}=\{0,1\}$, $P_X(1)=p,P_X(0)=1-p$, and similarly for $Q_X$) is denoted by $$\KL(p \| q) = \KL( P_X \| Q_X).$$

\medskip 

\noindent (f). The \emph{mutual information} between $X$ and $Y$ (drawn from $P_{X},P_{Y}$) is
$$\I(X;Y) = \KL(P_{X,Y} \| P_X P_Y).$$

\subsection{Basic properties} 

We shall use the following basic properties of entropy, KL-divergence and mutual information.

\begin{enumerate}[label=(\arabic*), ref=\text{Property (}\arabic*\text{)}]
  \item \label{prope:conditioned entropy} Conditioning does not increase entropy, namely
$$\HH(X) \geq \HH(X \mid Y ),$$
   with equality if and only if $X,Y$ are independent, i.e., $P_{X,Y} = P_X \times P_Y$.
  \item \label{prope:chain rule entropy} The \emph{chain rule for entropy} asserts that $$\HH(X,Y) = \HH(X) + \HH(Y \mid X).$$
Thus, $\HH(X,Y) = \HH(X) + \HH(Y)$ if and only if $X,Y$ are independent.
  \item \label{prope:entropy bound} The entropy of $X$ is upper-bounded by the logarithm of the size of its support, namely
  $$\HH(X) \leq \log|\Supp(\mathcal{X})|.$$
  \item \label{prope:kl nonnegative} KL-divergence is non-negative, namely $$\KL(P_{X} \| Q_{X}) \geq 0.$$
  \item \label{prope:kl convex} KL-divergence is convex, namely for $0 \leq \lambda \le 1$,
  $$\KL( \lambda P_X + (1 - \lambda) P'_X \| \lambda Q_X + (1 - \lambda)Q'_X ) \leq
  \lambda \KL( P_X \| Q_X ) + (1 - \lambda) \KL( P'_X \| Q'_X ).$$
  \item \label{prope:kl and entropy} If $Q_X$ is uniform  over its support (which includes the support of $P_X$), then
    \begin{align*}
    \KL( P_X \| Q_X ) = &\; \E_{P_X(x)}[\log(1/ Q_X(x))] - \HH(P_X) = \E_{Q_X(x)}[\log(1/ Q_X(x))] - \HH(P_X) \\
    = &\; 
    \HH(Q_X) - \HH(P_X).
    \end{align*}
  \item \label{prope:chain rule kl} The \emph{chain rule for KL-divergence} asserts that
  $$\KL( P_{X,Y} \| Q_{X,Y}) =  \KL( P_X \| Q_X ) + \KL( P_{Y \mid X} \| Q_{Y \mid X} ) .$$
  \item \label{prope:data processing}
  The \emph{data processing inequality} asserts that for a function $f: \mathcal{X} \mapsto \mathcal{Y}$,
  $$\KL( P_X \| Q_X) \geq \KL( P_{f(X)} \| Q_{f(X)} ).$$
\item \label{prope:pinsker}
A special case of \emph{Pinsker's inequality} states that
  $$2 (p - q)^2 \leq \KL(p \| q ).$$
  \item \label{prope:information bound} The mutual information satisfies
  \begin{align*}
  &\; \I(X ; Y) = \KL(P_{X,Y} \| P_X P_Y) = 
  \KL(P_{X \mid Y} \| P_X) = \HH(X) - \HH(X \mid Y) \leq 
  \HH(X).
  \end{align*}
\end{enumerate}

\subsection{Lemmas}

We shall use the following lemmas.

\begin{proposition} \label{prop:KL-analysis}
Let $p, \epsilon > 0$ such that $p + \epsilon \leq 1$. 
Then, $$\KL(p + \epsilon \| p) \geq \frac{\epsilon^2}{2(p + \epsilon)}.$$
\end{proposition}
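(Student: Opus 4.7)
The plan is to view the claim as a statement about the one-variable function $\phi(x) := \KL(p+\epsilon \| x) = (p+\epsilon)\log\frac{p+\epsilon}{x} + (1-p-\epsilon)\log\frac{1-p-\epsilon}{1-x}$ and compare $\phi(p)$ to $\phi(p+\epsilon)$. The key structural facts I would establish first are that $\phi$ is smooth on $(0,1)$, that $\phi(p+\epsilon)=0$, and that $\phi'(p+\epsilon)=0$ (since $\phi'(x) = -(p+\epsilon)/x + (1-p-\epsilon)/(1-x)$ evaluates to $-1+1$ at $x=p+\epsilon$). This is the standard observation that KL-divergence is minimized when the two distributions coincide.

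Next I would invoke Taylor's theorem with integral remainder around the point $p+\epsilon$. Since the zeroth- and first-order terms vanish, I get
\[
\phi(p) \;=\; \int_p^{p+\epsilon} \phi''(x)\,(x-p)\,dx,
\]
with $\phi''(x) = (p+\epsilon)/x^2 + (1-p-\epsilon)/(1-x)^2$. Both summands in $\phi''$ are non-negative, so I can simply discard the second term.

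Then the proof reduces to a short estimate. Using $\phi''(x) \geq (p+\epsilon)/x^2$ and the bound $1/x^2 \geq 1/(p+\epsilon)^2$ on the integration interval $[p, p+\epsilon]$, the integral is at least
\[
\int_p^{p+\epsilon} \frac{p+\epsilon}{(p+\epsilon)^2}\,(x-p)\,dx \;=\; \frac{1}{p+\epsilon}\cdot\frac{\epsilon^2}{2} \;=\; \frac{\epsilon^2}{2(p+\epsilon)},
\]
which is exactly the desired lower bound.

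I do not foresee a serious obstacle here: the only subtlety is remembering to take the Taylor expansion around $p+\epsilon$ rather than $p$, so that the first-derivative term vanishes and the sign of $(x-p)$ in the remainder is non-negative on the interval of integration. Expanding around $p$ instead would yield the weaker Pinsker-type bound $\KL(p+\epsilon\|p)\geq 2\epsilon^2$ (via $\phi''(p)\geq$ constant), which is not good enough when $p+\epsilon$ is small. The expansion around $p+\epsilon$ is what converts the crude constant $\phi''$ bound into the sharper $1/(p+\epsilon)$ factor.
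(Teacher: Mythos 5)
Your proof is correct and takes a genuinely different route from the paper's. The paper treats the first argument of the divergence as the variable, defining $f(x) = \KL(p+x \| p) - \frac{x^2}{2(p+\epsilon)}$ on $[0,\epsilon]$; since $f(0)=f'(0)=0$, it suffices to show $f''\geq 0$, which gives $f''(x) = \frac{1}{(p+x)(1-(p+x))} - \frac{1}{p+\epsilon}$. Because this quantity need not be non-negative when $p+\epsilon$ is large, the paper first disposes of the regime $p+\epsilon\geq \tfrac14$ via Pinsker ($\KL(p+\epsilon\|p)\geq 2\epsilon^2 \geq \frac{\epsilon^2}{2(p+\epsilon)}$ there) and only then runs the second-derivative argument. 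You instead fix the first argument and treat the \emph{second} argument as the variable, $\phi(x)=\KL(p+\epsilon\|x)$, then Taylor-expand around the minimizer $x=p+\epsilon$ with integral remainder. Your $\phi''(x) = \frac{p+\epsilon}{x^2} + \frac{1-p-\epsilon}{(1-x)^2}$ has a first term that is bounded below by $\frac{1}{p+\epsilon}$ for free on the whole interval $[p,p+\epsilon]$ (after discarding the non-negative second term), so no case split is needed. Both arguments are elementary and of comparable length; yours avoids the appeal to Pinsker and the threshold $\tfrac14$, while the paper's version keeps the object $\KL(p+x\|p)$ closer to the form that actually appears in later applications. Either is a valid proof of the proposition.
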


\begin{proof}
The proof follows a standard analytic approach for proving inequalities on KL-divergence. 

By \ref{prope:pinsker}, $\KL(p + \epsilon \| p) \geq 2\epsilon^2$, 
implying the statement whenever $p+\epsilon \ge \frac{1}{4}$. Therefore, we may assume $p+\epsilon \le \frac{1}{4}$. Let us define a function $f:[0,\epsilon] \to \mathbb{R}$ by
\[
    f(x) = \KL(p + x \| p) - \frac{1}{2(p+\epsilon)}x^2.
\]

Since $f(0)=0$ and the statement is equivalent to $f(\epsilon) \ge 0$, it suffices to show that $f$ is monotonically increasing. We have
\[
    f'(x) = \log\left(\frac{p+x}{p}\right)-\log\left(\frac{1-(p+x)}{1-p}\right)-\frac{1}{p+\epsilon}x.
\]

Again, we may notice that $f'(0) = 0$. Therefore, in order to show that $f'(x) \ge 0$ for all $x \in [0,\epsilon]$ and finish the proof, it suffices to show that $f''(x) \ge 0$ for all $x \in [0,\epsilon]$. Since $p+\epsilon \le \frac{1}{4}$, we obtain that for all $x \in [0,\epsilon]$ we have
\[
    f''(x) = \frac{1}{(1-(p+x))(p+x)}-\frac{1}{p+\epsilon} \ge \frac{1}{(1-(p+\epsilon))(p+\epsilon)}-\frac{1}{p+\epsilon}\ge 0,
\]
completing the proof.
\end{proof}

\begin{corollary}
\label{cor:KL-analysis}
Let $0 < p, q \leq 1$. 
Then, $$p \leq 2(q +\KL(p \| q)).$$
\end{corollary}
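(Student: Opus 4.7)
The plan is a short case split in which Proposition~\ref{prop:KL-analysis} does essentially all of the work. First I would dispose of the trivial range $p \leq q$: since $\KL(p \| q) \geq 0$ by \ref{prope:kl nonnegative}, in this case $p \leq q \leq 2q \leq 2(q + \KL(p \| q))$, so it remains to handle the regime $p > q$.

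In the regime $p > q$, I would set $\epsilon := p - q > 0$ and apply Proposition~\ref{prop:KL-analysis} with its first argument taken to be $q$. The hypotheses are met, since $q > 0$ and $q + \epsilon = p \leq 1$, and the conclusion rewrites as $\KL(p \| q) \geq \frac{(p-q)^2}{2p}$. Plugging this into the desired inequality and clearing denominators,
\[
2(q + \KL(p \| q)) \;\geq\; 2q + \frac{(p-q)^2}{p} \;=\; \frac{2pq + (p-q)^2}{p} \;=\; \frac{p^2 + q^2}{p} \;\geq\; p,
\]
which is the claim.

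There is no real obstacle here: once the quadratic lower bound from Proposition~\ref{prop:KL-analysis} is available, the whole corollary reduces to the algebraic identity $(p-q)^2 + 2pq = p^2 + q^2$. The only thing worth checking is that the proposition's hypotheses hold in the nontrivial case, which is immediate from $0 < q$ and $p \leq 1$. I expect the author's proof to look essentially identical, possibly phrased without an explicit case split by observing that the algebraic bound $2q + (p-q)^2/p \geq p$ holds unconditionally (allowing one to simply invoke the proposition whenever $p > q$ and read off the trivial case otherwise).
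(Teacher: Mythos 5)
Your proof is correct and follows the same broad outline as the paper's: dispose of the regime $p \leq q$ (or $p < 2q$, as the paper splits) via non-negativity of KL, and handle the remaining regime by invoking Proposition~\ref{prop:KL-analysis} with first argument $q$ and $\epsilon = p - q$.

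There is, however, one point worth flagging. A direct substitution into Proposition~\ref{prop:KL-analysis} gives $\KL(p\|q) \geq \frac{(p-q)^2}{2p}$, which is what you use. The paper's displayed chain instead reads $\KL(p\|q) \geq \frac{(p-q)^2}{2q} \geq \frac{p-q}{2}$, where the denominator $2q$ appears to be a slip (the proposition produces $2(p_{\mathrm{prop}} + \epsilon) = 2p$ in the denominator, not $2p_{\mathrm{prop}} = 2q$; and as a standalone inequality, $\KL(p\|q) \geq \frac{(p-q)^2}{2q}$ fails, e.g., at $p=0.9,\ q=0.1$). With the corrected denominator $2p$, the paper's subsequent step $\frac{(p-q)^2}{2p} \geq \frac{p-q}{2}$ would require $p - q \geq p$, which never holds, so that route does not close. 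Your finishing identity
\[
2q + \frac{(p-q)^2}{p} = \frac{2pq + (p-q)^2}{p} = \frac{p^2 + q^2}{p} \geq p
\]
is exactly what repairs this: it uses the correct quadratic bound and closes the argument unconditionally (given $p > q$), without needing the extra restriction $p \geq 2q$. So while the skeleton of the argument is the same, your algebraic closing move is both correct and cleaner than the one printed in the paper.
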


\begin{proof}
If $p - q < q$, 
then $p < 2 q \leq 2(q + \KL(p \| q))$,
as the KL-divergence is non-negative. Otherwise, $p - q \ge q$.
Applying Proposition~\ref{prop:KL-analysis}, we obtain 
$$
\KL(p \| q) \geq \frac{(p - q)^2}{2q} \geq
\frac{(p - q)}{2}.
$$
Therefore, $p \leq q + 2\KL(p \| q)) \leq 2(q + \KL(p \| q))$.

\end{proof}

The following proposition can be viewed as a form of the data-processing inequality.

\begin{proposition}[\cite{GavinskyLSS15}, Claim 2.6] \label{prop:dpKL}
Let $Q_X$ be the uniform distribution over a set $\mathcal{X}$, let $P_X$ be a distribution over $\mathcal{X}$, and let $f:\mathcal{X} \mapsto [0,1]$. 
Denote $q = \E_{Q}[f(X)]$ and $p = \E_{P}[f(X)]$.
Then
$$\KL( P_X \| Q_X ) \geq \KL(p \| q ).$$
\end{proposition}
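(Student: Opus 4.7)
The plan is to prove the inequality by a direct application of the data processing inequality (Property~(8)), after embedding the problem into a larger probability space where a Bernoulli indicator with parameter $f(X)$ can be realized as a deterministic function.

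First, I would introduce an auxiliary random variable $U$ uniform on $[0,1]$ and independent of $X$ under both measures, and consider the extended joint distributions $P_{X,U}$ and $Q_{X,U}$. Since the conditional distributions $P_{U \mid X}$ and $Q_{U \mid X}$ are both $\mathrm{Unif}[0,1]$, the chain rule for KL-divergence (Property~(7)) immediately gives
\[
\KL( P_{X,U} \| Q_{X,U} ) \;=\; \KL( P_X \| Q_X ) + \KL( P_{U \mid X} \| Q_{U \mid X} ) \;=\; \KL( P_X \| Q_X ).
\]

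Next, I would set $Y := \mathbbm{1}[U \leq f(X)]$, which is a deterministic function of $(X,U)$. By construction, $\Pr_P[Y = 1] = \E_{P_X}[f(X)] = p$, so $Y \sim \mathrm{Ber}(p)$ under $P_{X,U}$; by the same calculation, $Y \sim \mathrm{Ber}(q)$ under $Q_{X,U}$. Applying the data processing inequality (Property~(8)) to the map $(x,u) \mapsto \mathbbm{1}[u \leq f(x)]$ then yields
\[
\KL( P_X \| Q_X ) \;=\; \KL( P_{X,U} \| Q_{X,U} ) \;\geq\; \KL( \mathrm{Ber}(p) \| \mathrm{Ber}(q) ) \;=\; \KL(p \| q),
\]
which is the claimed bound.

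The only mild technical point, which I would address in a single sentence, is that Property~(8) is stated in the paper for deterministic functions between discrete spaces, while $U$ is continuous. This can be handled either by invoking the standard extension of the data processing inequality to Markov kernels (whose proof is identical), or by discretizing $U$ to the uniform distribution on $\{1/M, 2/M, \ldots, 1\}$ for large $M$ and passing to the limit using the continuity of KL-divergence in its arguments. I do not expect any genuine obstacle here. As a side remark, the argument never uses the uniformity of $Q_X$, so the same conclusion holds for an arbitrary reference distribution $Q_X$; this matches the statement in~\cite{GavinskyLSS15}.
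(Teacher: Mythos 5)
Your proof is correct. The paper does not supply its own proof of this proposition—it is cited directly from~\cite{GavinskyLSS15} (Claim~2.6)—so there is nothing to compare against in the paper itself; however, your argument is essentially the standard one used there: realize a Bernoulli random variable $Y$ with $\Pr[Y=1\mid X=x]=f(x)$ and apply the data processing inequality. The only cosmetic difference is that~\cite{GavinskyLSS15} treats $X\mapsto Y$ as a randomized map (a Markov kernel), whereas you make the randomness explicit via an auxiliary uniform $U$ and then apply data processing to a deterministic map on the extended space; these are the same proof. Your remark that the discretization of $U$ is needed to stay within the paper's finite-alphabet framework is apt, as is your observation that uniformity of $Q_X$ is never used.
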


\subsection{A variant of the inequality of~\cite{GavinskyLSS15} for functions over bijections}

In this subsection we derive Theorem~\ref{thm:Gavinsky-perm}, namely, the variant of the concentration bound of Gavinsky, Lovett, Saks and Srinivasan~\cite{GavinskyLSS15} for read-$k$ families of functions over bijections, from the variant of Shearer's lemma for the same setting (Theorem~\ref{thm:shearer-perm-intro} above). Let us restate these two theorems:

\medskip \noindent \textbf{Theorem~\ref{thm:shearer-perm-intro}.} Let $\mathcal{X}$ be a set of size $N$. Let $Q_X = Q_{X_1,\ldots,X_N}$ be the uniform distribution over bijections from $[N]$ to $\mathcal{X}$,
and let $P_X = P_{X_1,\ldots,X_N}$ be another distribution over such bijections. Let 
$\Qcal_1,\Qcal_2,\ldots,\Qcal_m$ be subsets of $[N]$, such that each $i \in [N]$ belongs to at most $k$ of them. Then
$$
2k \cdot  \KL( P_X \| Q_X ) \geq \sum_{j \in [m]} \KL( P_{X_{\mathcal{U}_j}} \| Q_{X_{\mathcal{U}_j}} ),$$
where $P_{X_{\mathcal{U}}}$ is the distribution of the vector $X_{\mathcal{U}} := \left(X_i \mid i \in \mathcal{U} \right)$ with respect to $P$ (and analogously for $Q$).

\medskip \noindent \textbf{Theorem~\ref{thm:Gavinsky-perm}}.
Let $\mathcal{X}$ be a set of size $N$. Let $Q_X = Q_{X_1,\ldots,X_N}$ be the uniform distribution over bijections from $[N]$ to $\mathcal{X}$,
and let $P_X = P_{X_1,\ldots,X_N}$ be another distribution over such bijections. 
Let $\{ f_j \}_{j \in [m]}$ be a read-$k$ family of functions, with $f_j:\mathcal{X}^N \mapsto [0,1]$ for all $j$. Denote $p_j = \E_{P_X}[f_j(X)]$ and let $p = \frac{1}{m} \cdot \sum_{j \in [m]} p_j$ be the average of the expectations. Similarly, denote $q_j = \E_{Q_X}[f_j(X)]$ and $q = \frac{1}{m} \cdot \sum_{j \in [m]} q_j$.
Then
$$2k \cdot  \KL( P_X \| Q_X ) \geq m \cdot \KL( p \| q ),$$
where $\KL( p \| q ) = p \log(\frac{p}{q}) + (1-p)\log(\frac{1-p}{1-q})$ is the KL-divergence between two Bernoulli distributions with parameters $p$ and $q$.

\medskip

\begin{proof}[of Theorem~\ref{thm:Gavinsky-perm}]
Let $\{\mathcal{U}_j \}_{j \in [m]}$ be a family of index sets such that for each $j \in [m]$, $f_j$ depends only on the coordinates in $\mathcal{U}_j$. Since for each 
$i \in [N]$, $|\{ j \in [m] \mid i \in \mathcal{U}_{j} \} | \leq k$, Theorem~\ref{thm:shearer-perm-intro} implies
$$2k \cdot  \KL( P_X \| Q_X ) \geq \sum_{j \in [m]} \KL( P_{X_{\mathcal{U}_j}} \| Q_{X_{\mathcal{U}_j}} ).$$
For every $j \in [m]$, $Q_{X_{\mathcal{U}_j}}$ is uniform over its support (which includes the support of $P_{X_{\mathcal{U}_j}}$).
Thus, we apply Proposition~\ref{prop:dpKL} with $f_j$ and deduce 
$$\KL( P_{X_{\mathcal{U}_j}} \| Q_{X_{\mathcal{U}_j}} ) \geq \KL( p_j \| q_j ).$$
Hence, 
\begin{align*}
2k \cdot  \KL( P_X \| Q_X ) 
\geq  \sum_{j \in [m]} \KL( p_j \| q_j ) 
= m \sum_{j \in [m]} (1/m) \KL( p_j \| q_j ) 
\geq  m \cdot \KL( p \| q ),
\end{align*}
where the final inequality holds by convexity of the KL-divergence (\ref{prope:kl convex}).
\end{proof}

\begin{remark}
Note that while the definition of a read-$k$ family requires the functions $\{f_j\}$ to be defined over the entire domain 
$\mathcal{X}^N$, only their values on bijections are relevant in Theorem~\ref{thm:Gavinsky-perm}. Consequently, in the applications of this theorem we define each function only over bijections, while implicitly fixing it to $0$ on inputs that are not bijections. 
\end{remark}

\section{The Permutation Challenge Model}
\label{sec:PC}

\subsection{Non-preprocessing setup} 

A permutation challenge (PC) game  
$$\PC := \PC(N,\SECS,\mathcal{M},\TR,\POST,\SUC)$$
is defined as follows. 

For an integer $N > 0$, let $\sigma:[N] \mapsto [N]$ be a permutation and let $\SECS$ be a space of secrets.

Let $\TR:\SECS \times \mathcal{M} \rightarrow [N]$ be a translation function that obtains 
as inputs a secret $\secv \in \SECS$ (e.g., the discrete log in the DLOG problem) and an outer query $m \in \mathcal{M}$ (from the outer query space $\mathcal{M}$) and returns
a translated query to $\sigma$, denoted $\TR(\secv,m) \in [N]$.
Let $\POST: \SECS \times [N] \rightarrow [N]$ be a post-processing function that receives the secret and an inner permutation output and returns a value in $[N]$ (the output space of the outer function).

\begin{definition}
A translation function $\TR$ is called $u$-uniform if for every $m \in \mathcal{M}$ and $j \in [N]$, 
$$ |\{\secv \in \SECS \mid \TR(\secv,m) = j\}| \leq \tfrac{|\SECS|}{u}.$$
\end{definition}
\noindent In several of our applications, the translation function $\TR$ will be $N$-uniform, which is the largest value possible.

Let $A_1$ be an adversary for $\PC$. We assume for simplicity that $A_1$ is deterministic. As noted in Remark~\ref{rem:deterministic} below, this assumption is without loss of generality.

$A_1$ has oracle access to two related oracles. The inner oracle allows $A_1$ to issue an \emph{inner query} $i \in [N]$ to $\sigma$ and obtain $\sigma(i)$.
In some applications, $A_1$ can also issue inner queries $i \in [N]$ to the inverse inner permutation and obtain $\sigma^{-1}(i)$.

The outer oracle allows $A_1$ to issue an \emph{outer query} $m \in \mathcal{M}$ and obtain 
$\POST(\secv,\sigma(\TR(\secv,m)))$.

After the interaction with the oracles 
which we denote in short by $\mathcal{O}(\sigma,\secv)$, $A_1^{\mathcal{O}(\sigma,\secv)}$ outputs a value $v$ 
from some domain $\mathcal{V}$.

Let $\SUC := \SUC_{A_1^{\mathcal{O}(\sigma,\secv)}}(\secv)$ be a $0/1$ success predicate that obtains $\secv$ and has the same oracle accesses as $A_1^{\mathcal{O}(\sigma,\secv)}$. This predicate outputs $1$ if $A_1$ succeeds and $0$ otherwise.

Formally, $\SUC$ has two phases. In the first phase, it simulates $A_1^{\mathcal{O}(\sigma,\secv)}$ and obtains its output. In the second phase, it outputs a value by applying a $0/1$ predicate that receives as input the secret $\secv$, the output of $A$, and all the query answers obtained by $A$.\footnote{In our applications, the predicate only needs the output of $A$, but we allow it to obtain the query answers obtained by $A$ for generality.}
We may also allow $\SUC$ to make additional queries to verify the success of $A$, which will be accounted for in the total time complexity (but we do not use this possibility in our applications).

We remark that when we write
$\PC := \PC(N,\SECS,\mathcal{M},\TR,\POST,\SUC)$,
the success predicate is viewed as an interface that is instantiated given an adversary $A_1$.

Let $P_{\Sigma,\SECV} = P_{\Sigma} P_{\SECV}$ be a distribution such that $P_{\Sigma}$ is uniform over permutations
and $P_\SECV$ is uniform over the secret space $\SECS$.

The success probability of $A_1$ is defined as
$$\E_{P_{\Sigma,\SECV}}[\SUC_{A_1^{ \mathcal{O}(\Sigma,\SECV)} }(\SECV)].$$

We measure the complexity of $A_1$ in terms of its number of queries to the inner permutation, $T_1$, and its number of queries to the outer function, $T_2$, and define $T = T_1 + T_2$.

\paragraph{Non-adaptivity.}
We say that $A_1$ is non-adaptive if its queries are fixed in advance and do not depend on $\sigma$. 

\subsection{Preprocessing setup}
\label{subsec:preprocess setup}

Let $A = (A_0,A_1)$ be a pair of deterministic preprocessing and online algorithms for $\PC$. 
\begin{remark}
\label{rem:deterministic}
In our context, we may assume that $A$ is deterministic without loss of generality, as our lower bound proofs hold for every randomness string shared by $A_0$ and $A_1$.
\end{remark}

The preprocessing algorithm $A_0$ gets direct access to $\sigma$ as input and outputs an advice string $z=A_0(\sigma)$.
The online algorithm $A_1$ takes as an additional input the advice string $z$. We denote the algorithm $A_1$, when it is executed with a preprocessing string $z$, by $(A_1)_{z}$.

We call $(A_0,A_1)$ an $(S,T)$-algorithm if the length of $z = A_0(\sigma)$ is bounded by $S$ bits (for all $\sigma$) and $A_1$ makes at most $T$ oracle queries to both the inner permutation and the outer function. 

We extend the non-preprocessing success predicate $\SUC$ to the preprocessing model as follows: 
it additionally obtains an advice string $z$.
As previously, 
$\SUC_{(A_1)_{z}^{\mathcal{O}(\sigma,\secv)}}(\secv)$ has two phases. In the first phase, it simulates $(A_1)_z^{\mathcal{O}(\sigma,\secv)}$ and obtains its output. 
In the second phase, it outputs a value by applying the same non-preprocessing $0/1$ predicate that receives as input the secret $\secv$, the output of $A_1$ and all the query answers obtained by $A_1$.\footnote{In our applications, the predicate only compares the output of $(A_1)_z$ to some function of the secret.} 

We extend the distribution $P_{\Sigma,\SECV}$ as $P_{\Sigma,Z,\SECV} = P_{\Sigma,Z} \times P_\SECV$,
where 
$P_{\Sigma,Z} = P_{\Sigma} P_{Z \mid \Sigma}$ such that $P_{Z \mid \Sigma = \sigma}(z) = 1$ if 
$z = A_0(\sigma)$.

The success probability $(A_0,A_1)$ in solving the problem is defined as
$$p := p(A_0,A_1) := \E_{P_{\Sigma,Z,\SECV}}[\SUC_{(A_1)_{Z}^{\mathcal{O}(\Sigma,\SECV)}}(\SECV)].$$

\begin{remark}
\label{remark:z}
Crucially, the predicate applied by $\SUC$ after simulating $(A_1)_z^{\mathcal{O}(\sigma,\secv)}$ does not receive $z$ as input.
Namely, $\SUC$ must be able to verify whether the output of $(A_1)_z$ is correct on $(\sigma,\secv)$ \emph{independently} of $z$. Technically, our analysis will sometimes fix $z$ that is not equal to $A_0(\sigma)$, but we require that the success of $(A_1)_z$ will still be defined with respect to $(\sigma,\secv)$.

Another way to state this remark is that it is sufficient to define the predicate $\SUC$ for non-preprocessing algorithms. The extension to the preprocessing model is well-defined and it simply simulates $(A_1)_z$ as a black-box and checks its success with respect to $(\sigma,\secv)$ as in the non-preprocessing model.
\end{remark}

\paragraph{Non-adaptivity.}
We say that $A_1$ is non-adaptive if given any $z$, its queries are fixed and do not (further) depend on $\sigma$. 
Note that if $A_1$ is non-adaptive then
$\SUC := \SUC_{(A_1)_z^{\mathcal{O}(\sigma,\secv)}}(\secv)$ is a non-adaptive predicate. 

We denote by $\mathcal{S} \subseteq [N]$ the set of inner queries of $A_1$ (which may include inverse queries) and by 
$\mathcal{U} \subseteq \mathcal{M}$ the set of its outer queries.
To simplify notation, we assume that $\mathcal{S}$ refers to queries to $\sigma$ and not $\sigma^{-1}$ using the conversion that if $\sigma^{-1}(i) = j$, then $\sigma(j) = i$ is a query to $\sigma$.
Both $\mathcal{S}$ and $\mathcal{U}$ may depend on $z$,
and we sometimes emphasize this by writing (for example) $\mathcal{S}(z)$. 

\subsection{Instantiation in the 
 generic group model}
\label{sec:inst GGM}
We first define the general setting of GGM which is common to all problems in the model.

Let $A_1$ be a generic group algorithm in $\mathbb{Z}_N$ such that $N$ is prime. 
We assume that $A_1$ knows the image of $\sigma:\mathbb{Z}_N \rightarrow \mathcal{W}$, as this knowledge can only increase its success probability. Thus, we restrict $\sigma$ to a subset of $\mathcal{W}$ of size $N$.
Furthermore, by renaming the symbols of the image of $\sigma$, we assume without loss of generality that 
$\sigma:\mathbb{Z}_N \rightarrow \mathbb{Z}_N$ is a uniform bijection from $\mathbb{Z}_N$ to itself.
We represent the elements of $\mathbb{Z}_N$ using $[N]$, and we can thus write $\sigma:[N] \rightarrow [N]$. In particular, note that $N \bmod N = 0$.

We remark that in GGM, we do not allow $A_1$ to query $\sigma^{-1}$.

We now define the DLOG, DDH, and sqDDH problems as permutation challenge games.
\paragraph{Discrete-log.} Let $A_1$ be a discrete-log algorithm for $\mathbb{Z}_N$.
We define a permutation challenge game
$\PC := \PC_{DL}(N,\SECS,\mathcal{M},\TR,\POST,\SUC)$ for the discrete-log problem as follows.

Let $\SECS = [N]$ (identified with $\mathbb{Z}_N$) be the space of secrets of discrete logarithms.
Let $\POST(d,j) = j$ be the trivial post-processing function (as $A_1$ always sees direct outputs of $\sigma$).

Let $\mathcal{M} = [N-1] \times [N]$ be the space of outer queries
and define 
$$\TR(d,(a,b)) = a \cdot d + b \mod N$$ as the translation function that maps a query $(a,b)$
(where $a \neq N$) to a group element according to the linear function $a \cdot d + b \mod N$.
Note that in the representation, an inner query issued to $\sigma$ corresponds to the pair $(N,b)$
(but we view it as directly accessing $\sigma(b)$).

The adversary receives as input the values $\sigma(1)$ and $\sigma(d)$.
However, for simplicity we assume that these are given as outputs of the queries 
$(N,1)$ and $(1,N)$, and the adversary receives no input 
(we make similar assumptions about DDH and sqDDH, defined below). 

Finally, 
$\SUC_{A_1^{\mathcal{O}(\sigma,d)}}(d)$
simply simulates $A_1$ and returns $1$ if it outputs the discrete-log secret~$d$.

\paragraph{DDH.}
Let $A_1$ be a DDH generic algorithm for $\mathbb{Z}_N$.
We define a permutation challenge game
$\PC := \PC_{DDH}(N,\SECS,\mathcal{M},\TR,\POST,\SUC)$ for the DDH problem as follows.

Let $\SECS = [N]^3 \times \{0,1\}$ be the secret space, consisting of $3$ group elements $d_1,d_2,d_3$ and a bit $k$ that $A_1$ needs to output.
Let $\POST(d,j) = j$ be the trivial post-processing function.

Let $\mathcal{M} = [N]^4 \setminus (\{(N,N,N)\} \times [N])$ be the space of outer queries, which consists of $4$ group elements that specify a multi-linear function, denoted by $(a_1,a_2,a_3,b)$,
where we do not allow $a_1 = a_2 = a_3 = N$ (as this corresponds to an inner query). 

Define the translation function
$\TR((d_1,d_2,d_3,k),(a_1,a_2,a_3,b))$ as  
follows:
\begin{align*}
\TR((d_1,d_2,d_3,k),(a_1,a_2,a_3,b) ) =
\begin{cases}
a_1 \cdot d_{1} + a_2 \cdot d_{2} + a_3 \cdot d_{3} + b \mod N & \text{if } k=0, \\
a_1 \cdot d_{1} + a_2 \cdot d_{2} + a_3 \cdot (d_{1} d_{2}) + b \mod N & \text{if } k = 1.
\end{cases}
\end{align*}

Note that if   
$k = 0$, the algorithm effectively receives $\sigma(d_1),\sigma(d_2),\sigma(d_3)$, while if
$k = 1$, the algorithm effectively receives $\sigma(d_1),\sigma(d_2),\sigma(d_1 d_2 \bmod N)$, as in DDH.

As in the discrete-log game, an inner query can be viewed as directly accessing $\sigma$.

Finally, 
$\SUC_{A_1^{\mathcal{O}(\sigma,d_1,d_2,d_3,k)}}(d_1,d_2,d_3,k)$
simply simulates $A_1$ and returns $1$ if it outputs the bit~$k$.

\paragraph{sqDDH.} 
Let $A_1$ be a sqDDH generic algorithm for $\mathbb{Z}_N$.
We define a permutation challenge game
$\PC := \PC_{sqDDH}(N,\SECS,\mathcal{M},\TR,\POST,\SUC)$ for the sqDDH problem as follows.

Let $\SECS = [N]^2 \times \{0,1\}$ be the space of secrets, consisting of $2$ group elements $d_1,d_2$ and a bit $k$ that $A_1$ needs to output.
Let $\POST(d,j) = j$ be the trivial post-processing function.
 
Let $\mathcal{M} = [N]^3 \setminus (\{(N,N\} \times [N])$ be the space of outer queries, which consist of $3$ group elements that specify a multi-linear function, denoted by $(a_1,a_2,b)$,
where we do not allow $a_1 = a_2 = N$ (as this corresponds to an inner query). 

Define the translation function
$\TR((d_1,d_2,k),(a_1,a_2,b))$ as  
follows:
\begin{align*}
\TR((d_1,d_2,k),(a_1,a_2,b) ) =
\begin{cases}
a_1 \cdot d_{1} + a_2 \cdot d_{2}  + b \mod N & \text{if } k=0, \\
a_1 \cdot d_{1} + a_2 \cdot (d_{1})^2 + b \mod N & \text{if } k = 1.
\end{cases}
\end{align*}
Note that if   
$k = 0$, the algorithm effectively receives $\sigma(d_1),\sigma(d_2)$, while if
$k = 1$, the algorithm effectively receives $\sigma(d_1),\sigma((d_1)^2 \bmod N)$, as in sqDDH.

Once again, an inner query is viewed as directly accessing $\sigma$.

Finally, 
$\SUC_{A_1^{\mathcal{O}(\sigma,d_1,d_2,k)}}(d_1,d_2,k)$
simply simulates $A_1$ and returns $1$ if it outputs the bit~$k$.

\subsubsection{Uniformity of the translation function in GGM.}

In all the $\PC$ games in GGM defined above, the $\TR$ function is a multi-variate polynomial of degree $1$ or $2$, in up to $3$ group elements, chosen uniformly at random from $[N]$. The coefficients of this polynomial are determined by the query. For such queries, 
the Schwartz–Zippel lemma~\cite{Schwartz80} immediately gives the following general result.
\begin{lemma}[$u$-uniformity of $\TR$ in the GGM]
\label{lem:ggm uniform}
Let 
$\PC(N,\SECS,\mathcal{M},\TR,\POST,\SUC)$ be a permutation challenge game in the GGM. Assume that $\TR$ is a multi-variate polynomial (with coefficients determined by the query) of degree  $m > 0$ in $v$ variables chosen uniformly at random from $[N]$ with $N$ prime. Then, $\TR$ is $\frac{N}{m}$-uniform.
\end{lemma}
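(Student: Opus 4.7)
The plan is to view the $u$-uniformity condition as a root-counting statement for multivariate polynomials over the finite field $\mathbb{Z}_N$ (which is a field since $N$ is prime) and to deploy the Schwartz--Zippel lemma. Fix an outer query $\mu \in \mathcal{M}$ and a target value $j \in [N]$. The quantity $|\{\secv \in \SECS \mid \TR(\secv,\mu) = j\}|$ is precisely the number of zeros (across the secret space) of the polynomial $P_{\mu,j}(\secv) := \TR(\secv,\mu) - j$, so the task reduces to counting roots of $P_{\mu,j}$.

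A slight subtlety is that $\SECS$ may have additional discrete components beyond the $v$ variables in $[N]$ that carry the polynomial -- for example, the decision bit $k$ in the DDH and sqDDH instantiations. I would handle this by conditioning on the non-polynomial part of the secret: write $\secv = (\secv_{\text{poly}}, \secv_{\text{rest}})$ with $\secv_{\text{poly}} \in [N]^v$, and for every fixed $\secv_{\text{rest}}$ note that $\secv_{\text{poly}} \mapsto P_{\mu,j}(\secv_{\text{poly}}, \secv_{\text{rest}})$ is a $v$-variate polynomial over $\mathbb{Z}_N$ of total degree $m$. Since $m > 0$ and subtracting the constant $j$ does not touch the top-degree terms, this polynomial is not identically zero, so the Schwartz--Zippel lemma gives at most $m \cdot N^{v-1}$ roots in $[N]^v$. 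Summing over the $|\SECS|/N^v$ choices of $\secv_{\text{rest}}$ yields the total bound $(m/N) \cdot |\SECS|$, which is exactly the $\frac{N}{m}$-uniformity asserted.

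The step that needs the most care, and which I expect to be the only real obstacle, is verifying that the hypothesis ``degree $m > 0$'' actually applies to every admissible query $\mu \in \mathcal{M}$. A query whose every non-constant coefficient vanishes would collapse $\TR(\cdot,\mu)$ to a constant and invalidate the Schwartz--Zippel bound; this is precisely why the $\PC$ games defined in Section~\ref{sec:inst GGM} excise such queries from $\mathcal{M}$ -- for instance, by requiring $a \in [N-1]$ in DLOG, and by removing $\{(N,\dots,N)\} \times [N]$ from the query space in DDH and sqDDH. Beyond this bookkeeping, the argument is a direct one-line application of Schwartz--Zippel and no further difficulty arises.
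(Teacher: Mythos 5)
Your proposal is correct and takes the same approach as the paper, which simply cites the Schwartz--Zippel lemma as giving the result ``immediately'' without further elaboration. Your fleshed-out version -- conditioning on the non-polynomial part of the secret (e.g., the bit $k$ in DDH/sqDDH), applying Schwartz--Zippel to the polynomial $\TR(\cdot,\mu) - j$ for each fixed $\secv_{\text{rest}}$, and flagging that $\mathcal{M}$ is defined so as to exclude the degenerate queries for which the polynomial would become constant -- is exactly the reasoning the paper leaves implicit.
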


\subsection{Instantiation for the Even-Mansour cryptosystem}
\label{sec:inst EM}

We represent plaintexts, ciphertexts and keys using $[N]$, and we can thus write $\sigma:[N] \rightarrow [N]$.
Using this encoding, the bit representation of any $a \in [N]$ (used when XORing values in the domain) is the standard bit representation of $a - 1$.

We define a permutation challenge game for the key recovery. 
Let $\SECS = [N]^2$ be the space of secrets, consisting of pairs $(k_1,k_2)$ (in case of the single-key scheme, $k_1 = k_2$ and $\SECS = [N]$).
Let $\mathcal{M} = [N]$ be the space of outer queries, which are chosen encryption messages.
For a secret $(k_1,k_2) \in [N]^2$ and
a message (outer query) $m \in [N]$,
define 
$\TR((k_1,k_2),m) = m \oplus k_1$, and let $\POST((k_1,k_2),j) = j \oplus k_2$.

Note that an inner query that is issued to $\sigma$ corresponds to 
a call to the public encryption oracle. Here, we allow the adversary to query $\sigma^{-1}$ as well.

Let $A_1$ be a key-recovery algorithm for EM.
For the permutation challenge game $\PC := \PC_{EM-KR}(N,\SECS,\mathcal{M},\TR,\POST,\SUC)$,
the success predicate
$\SUC_{A_1^{\mathcal{O}(\sigma,(k_1,k_2))}}(k_1,k_2)$
simply simulates $A_1$ and returns $1$ if it outputs the key $(k_1,k_2)$.

\section{Main General Result}
\label{sec:main}

In this section we prove our main result in the permutation challenge game model.

\begin{theorem} \label{thm:g-nonadaptive}
Let $\PC := \PC(N,\SECS,\mathcal{M},\TR,\POST,\SUC)$ be a permutation challenge game, such that 
$\TR$ is $u$-uniform,
and $\SUC$ compares the output of $(A_1)_z$ to some function of the secret.
Let $(A_0,A_1)$ be an $(S,T)$ non-adaptive algorithm with preprocessing for $\PC$.
Denote by $\MAXS(T)$ the optimal success probability (with respect to $\SUC$) of a non-preprocessing, non-adaptive algorithm that makes at most $T$ queries.
Then, the success probability of $(A_0,A_1)$ is at most
\[
\min \left(2\cdot \MAXS(T) + \frac{4\log(2) S T }{u} + \frac{T^2}{u}, \MAXS(T) + 
\sqrt{\frac{\log(2) ST} {{u}}} + \frac{T^2}{2u} \right).
\]
\end{theorem}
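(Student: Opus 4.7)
For each advice string $z$, let $p_z$ be the conditional success probability of $(A_1)_z$ given $A_0(\Sigma) = z$ (so $\Sigma \sim P_{\Sigma\mid Z=z}$), and let $q_z$ be the success probability of the same hardwired algorithm $(A_1)_z$ with $\Sigma$ drawn uniformly. Since $(A_1)_z$ is then a fixed non-adaptive $T$-query non-preprocessing algorithm, $q_z \leq \MAXS(T)$ for every $z$. The argument rests on two ingredients: (i) the mutual information bound $\E_z[\KL(P_{\Sigma\mid Z=z}\,\|\,Q_\Sigma)] = \I(\Sigma;Z) = \HH(Z) \leq S\log 2$, since $Z = A_0(\Sigma)$ is a deterministic $S$-bit function of $\Sigma$; and (ii) the $u$-uniformity of $\TR$, which realizes the $\secv$-indexed family of success indicators as a read-$k$ family on the bijection $\sigma$ (with $k\leq T|\SECS|/u$), so that Theorem~\ref{thm:Gavinsky-perm} applies. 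By Remark~\ref{rem:deterministic} I may assume $A=(A_0,A_1)$ is deterministic.

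The core technical step is a read-$k$ analysis performed after conditioning on the inner-query answers; this is needed to sidestep the fact that every indicator in the family depends on all of $\sigma|_{\mathcal{S}}$, which would force $k = |\SECS|$. Fix $z$, write $\mathcal{S}=\mathcal{S}(z)$, $\mathcal{U}=\mathcal{U}(z)$, and condition on $\tau := \sigma|_{\mathcal{S}}$; let $\sigma'$ denote the residual bijection $[N]\setminus\mathcal{S}\to[N]\setminus\tau(\mathcal{S})$. For each $\secv$, the success indicator becomes a function $h_{z,\tau,\secv}(\sigma')$ depending only on the coordinates $\TR(\secv,\mathcal{U})\setminus\mathcal{S}$, and by $u$-uniformity of $\TR$ together with a union bound over $\mathcal{U}$, each coordinate $i\in[N]\setminus\mathcal{S}$ is ``read'' by at most $T|\SECS|/u$ secrets. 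Applying Theorem~\ref{thm:Gavinsky-perm} on the bijection $\sigma'$ to this read-$k$ family yields
$$
\KL(p_{z,\tau}\,\|\,q_{z,\tau}) \;\leq\; \frac{2T}{u}\,\KL(P_{\sigma'\mid z,\tau}\,\|\,Q_{\sigma'\mid\tau}),
$$
where $p_{z,\tau}$ and $q_{z,\tau}$ are the $\secv$-averaged expectations of $h_{z,\tau,\secv}$ under $P$ and $Q$. Averaging over $\tau\sim P_{\sigma|_{\mathcal{S}}\mid z}$ via the chain rule for KL (Property~7) and convexity of binary KL (Property~5), and then over $z$ using (i), gives
$$
\E_z[\KL(p_z\,\|\,\tilde q_z)] \;\leq\; \frac{2TS\log 2}{u},
$$
where $\tilde q_z := \E_{\tau\sim P_{\sigma|_{\mathcal{S}}\mid z}}[q_{z,\tau}]$ is the success probability of the intermediate game with biased $\tau$ but uniform $\sigma'\mid\tau$.

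The final steps are to (a) bridge $\tilde q_z$ to $q_z$ at the cost of an $O(T^2/u)$ error term, and (b) convert the KL bound into the two probability bounds in the $\min$. For (a), I would introduce the collision event $E_{z,\secv} := \{\TR(\secv,\mathcal{U})\cap\mathcal{S}\neq\emptyset\}$, whose probability over $\secv$ is at most $T_1T_2/u$ by $u$-uniformity; on the complement of this event the outer-query answers depend only on $\sigma'$, and a hybrid/coupling argument that resamples $\sigma|_\mathcal{S}$ against the uniform law while preserving the outer answers should transport $\tilde q_z$ to $q_z \leq \MAXS(T)$ up to the collision probability. For (b), applying Corollary~\ref{cor:KL-analysis} pointwise to $\KL(p_z\,\|\,\tilde q_z)$ and averaging over $z$ yields the first bound $2\MAXS(T) + 4\log 2\cdot ST/u + T^2/u$ (the factor $2$ in $\MAXS$ is exactly the doubling from the corollary), while applying Pinsker's inequality (Property~9) pointwise together with Jensen on $\sqrt{\cdot}$ yields the second bound $\MAXS(T) + \sqrt{\log 2\cdot ST/u} + T^2/(2u)$. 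The main obstacle is precisely the coupling step (a): because $\Sigma$ is a uniform bijection rather than a product distribution, the conditional law of $\sigma'$ given $\tau$ still depends on $\tau$ through its image set $\tau(\mathcal{S})$, so the two games do not decouple on $\neg E_{z,\secv}$ in an obviously clean way; identifying the right intermediate distribution and verifying that the preprocessing bias is absorbed into $\tau$'s marginal (where it contributes at most the collision probability to the adversary's advantage) is the ``subtle hybrid argument'' referenced in the proof overview, and is the place where the non-product Shearer-type structure of Theorem~\ref{thm:shearer-perm-intro} plays an indispensable role.
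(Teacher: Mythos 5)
Your high-level architecture matches the paper's proof almost exactly: you isolate the inner-query answers $\tau = \sigma|_{\mathcal{S}}$, apply Theorem~\ref{thm:Gavinsky-perm} to the $\secv$-indexed read-$k$ family on the residual bijection $\sigma'$, average the KL bound over $\tau$ and $z$ to obtain $\E_z[\KL(p_z\,\|\,\tilde q_z)]\le 2\log(2)ST/u$, and then convert with Corollary~\ref{cor:KL-analysis} and Pinsker. Up to this point you have essentially reconstructed Theorem~\ref{thm:pc to middle}, with $\tilde q_z$ playing the role of the success probability in the paper's middle game $\MID$ (Definition~\ref{def:MID}) conditioned on $z$.

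The gap is in your step (a), and it is a real one, not just a delicate point you left unresolved: the inequality $\tilde q_z \le q_z + O(T^2/u)$ is simply false, where $q_z$ is (as you define it) the success probability of the actual algorithm $(A_1)_z$ under a uniform $\Sigma$. The reason is that $(A_1)_z$ reads $\tau$ through its inner queries and its output may depend on $\tau$ arbitrarily. In $\tilde q_z$ the law of $\tau$ is $P_{\sigma|_{\mathcal{S}}\mid Z=z}$, which is typically far in total variation from uniform; e.g., if $A_0(\sigma)=\sigma(1)$ then conditioning on $z$ pins $\tau(1)=z$ deterministically. Take $(A_1)_z$ that outputs a good estimator of $g(\secv)$ exactly when $\tau$ is consistent with $z$ and garbage otherwise: then $\tilde q_z$ is close to $\widehat{\MAXS}(T)$ while $q_z$ collapses to trivial, and for decision problems the gap can be $\approx 1/2$, far beyond $T^2/u$. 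Your coupling --- resampling $\sigma|_{\mathcal{S}}$ against the uniform law while keeping the outer answers --- changes what $(A_1)_z$ sees at its inner queries and hence changes its output; the ``collision event'' $E_{z,\secv}$ only controls whether outer answers move, not whether the algorithm's decision moves.

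The paper avoids this by never comparing against $(A_1)_z$ run with real inner queries. Instead (Claim~\ref{clm:basic with advice} and Lemma~\ref{lem:middle to non preprocess}), it first identifies $\E_z[\tilde q_z] = q'$ with the success probability of a $\MID$ adversary that gets to \emph{choose} the constraints $(\INP,\OUTP)$, and then constructs a non-preprocessing algorithm $A$ that makes only the $T_2$ outer queries and \emph{fabricates} the inner answers: it samples $(\INP,\OUTP)$ from $\widehat{A}_0$'s simulation (using shared randomness) and feeds them to $\widehat{A}_1$ as hints, never querying $\sigma$ at those positions. So $\widehat{A}_1$'s view of $\tau$ is unchanged between the two games, and the only discrepancy is whether the fresh outer answers contradict the fabricated constraints --- an event of probability $O(T_1T_2/u)$, which is exactly where the $T^2/(2u)$ term comes from. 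The conceptual fix to your proposal is therefore to replace $q_z$ by the success probability of this fabricating simulator (a $T_2$-query non-preprocessing algorithm), or simply to observe that $\E_z[\tilde q_z]\le \widehat{\MAXS}(T)$ and prove the separate statement $\widehat{\MAXS}(T)\le \MAXS(T)+T^2/(2u)$, which uses the restriction on $\SUC$ (comparison of the output to a function of the secret) to make the outer-answer simulation valid.
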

We note that a slightly better bound can be achieved for problems with a trivial post-processing function $\POST(\secv,j)=j$. For further details, see Appendix~\ref{app:improvement no post process}.

\subsection{Discussion}

Before proving the theorem, several important remarks are due.
\paragraph{No need to analyze the preprocessing setting.} The power of the theorem comes from the fact that there is no need to analyze the preprocessing setting. Indeed, it suffices 
to instantiate the permutation challenge game 
$\PC := \PC(N,\SECS,\mathcal{M},\TR,\POST,\SUC)$, to
prove that $\TR$ is a $u$-uniform translation function (ideally, for $u \geq \Omega(N)$), and to bound 
$\MAXS(T)$ (i.e., the success probability in the non-preprocessing setting). 
Plugging these values into Theorem~\ref{thm:g-nonadaptive}
immediately bounds the success probability of an adversary in the preprocessing setting.

\paragraph{Generality of $\SUC$.}
We prove the theorem only for a limited class of success predicates 
that compare the output of $(A_1)_z$ to some function of the secret.
Yet, almost all the steps of the proof apply to  arbitrary non-adaptive success predicates (under the restrictions defined in Section~\ref{sec:PC}),
where the only exception that uses this restriction is Lemma~\ref{lem:middle to non preprocess}. 
This lemma deals with a specific game in the non-preprocessing setting,
and it is not possible to prove it in general for all predicates,
as its statement is false for some artificial predicates.

However, for all ``natural'' predicates we are aware of, 
it is easy to extend the theorem tightly by extending the proof 
of Lemma~\ref{lem:middle to non preprocess} accordingly. 
This may require extending $\SUC$ to make additional non-adaptive queries, which are accounted for in the parameter $T$.
For example, it is possible to support selective forgery attacks,
where the goal of the adversary is to predict the value of a predefined outer query 
(that the adversary is not allowed to make). 
This is done by extending $\SUC$ to make this additional (non-adaptive) outer query,
thus adjusting the total query complexity to $T+1$.
Now, $\SUC$ verifies success by comparing the outcome of the query to the output of $(A_1)_z$.
The proof is then adjusted by proving a corresponding variant of Lemma~\ref{lem:middle to non preprocess}.

\paragraph{$\TR$ vs.~$\POST$, and treatment of inner queries to $\sigma^{-1}$.}
The $u$-uniformity of $\TR$ is crucially used in the proof. On the other hand, there are no additional requirements on 
the function $\POST$, and indeed it does not play any direct role in the proof. Its relevance will be in applications of this theorem (in particular, it is important for bounding $\MAXS(T)$).
Similarly, the proof holds regardless of whether $A_1$ can issue inner queries to $\sigma^{-1}$ in addition to $\sigma$, yet this fact may be important for bounding $\MAXS(T)$. 


\subsection{Warm-up: a weaker bound}
\label{subsec:weaker bound}


\paragraph{Inner and outer queries.}
The proof of Theorem~\ref{thm:g-nonadaptive} is built from two complementary ideas, which we present separately to keep the exposition modular.
The first idea applies directly only to adversaries that make \emph{no inner queries} (Lemma~\ref{lem:non-fixed g-nonadaptive}); as shown in Theorem~\ref{thm:g-nonadaptive-weaker}, it can nevertheless be applied to obtain a bound for general adversaries, albeit a weaker one.
The second idea establishes a subtle reduction from general adversaries to adversaries without inner queries that preserves the parameters more tightly.
Combining this reduction with the first idea yields the stronger bound of Theorem~\ref{thm:g-nonadaptive}, proved in Section~\ref{subsec:proof of strong version}.

Intuitively, a permutation variant of Shearer’s lemma (Theorem~\ref{thm:shearer-perm-intro}) provides a natural tool for quantifying the (limited) advantage gained from outer queries.
The two arguments therefore share the same Shearer-based core; they differ only in the way they handle the inner queries—the weaker bound neutralizes them in a straightforward manner, while the stronger bound controls them more delicately through the reduction.

\begin{theorem}[Weaker version of Theorem~\ref{thm:g-nonadaptive}] 
\label{thm:g-nonadaptive-weaker}
In the setting of Theorem~\ref{thm:g-nonadaptive}, the success probability of $(A_0,A_1)$ is at most
\[
\min \left(2\cdot \MAXS(T) + \frac{4\log(2) T (S+T \log(N))}{u}, \MAXS(T) + 
\sqrt{\frac{\log(2) T (S+T \log(N))} {{u}}} \right).
\]
\end{theorem}

\begin{remark}  \label{rem:comparison} Comparison between Theorems~\ref{thm:g-nonadaptive} and~\ref{thm:g-nonadaptive-weaker}:
Theorems~\ref{thm:g-nonadaptive} and~\ref{thm:g-nonadaptive-weaker} may look similar, and they indeed obtain comparable bounds for several problems. For example, for search problems (such as in Theorem~\ref{thm:dl-nonadaptive-intro} and Theorem~\ref{thm:em-nonadaptive-intro}), Theorem~\ref{thm:g-nonadaptive-weaker} yields the bound $O(\frac{(S + T \log N) \cdot T}{N} + \frac{T^2}{N})$, instead of the $O(\frac{S \cdot T}{N} + \frac{T^2}{N})$ bound of Theorem~\ref{thm:g-nonadaptive}, which constitutes only a logarithmic penalty. On the other hand, for applications of Theorem~\ref{thm:g-nonadaptive-weaker} to decision problems it yields much weaker results. Specifically, for Theorem~\ref{thm:DDH-intro1}, 
we would obtain 
$$\frac{1}{2} + O\left(\sqrt{\frac{(S + T \log N) \cdot T}{N}}+ \frac{T^2}{N}\right), \qquad \mbox{instead of} \qquad 
\frac{1}{2} + O\left(\sqrt{\frac{S \cdot T}{N}} + \frac{T^2}{N}\right),$$
which is significantly worse for $S \ll T \ll \sqrt{N}$. For example, for $S=O(1)$ and $T = O(N^{1/3})$, the bound of Theorem~\ref{thm:g-nonadaptive-weaker} on the adversary's advantage over a coin toss is larger by a multiplicative factor of $N^{1/6}$. This motivates our more involved proof that handles the inner queries in a novel way.
\end{remark}

Theorem~\ref{thm:g-nonadaptive-weaker} follows easily from the following variant, which forces the adversary to make only outer queries:

\begin{lemma}[Restricted variant of Theorem~\ref{thm:g-nonadaptive}] \label{lem:non-fixed g-nonadaptive}
In the setting of Theorem~\ref{thm:g-nonadaptive}, suppose that $A_1$ makes no inner queries.
In addition, let $\SUC$ be an arbitrary success predicate, as defined in Section~\ref{sec:PC}.
Then, the success probability of $(A_0,A_1)$ is at most
\[
\min \left(2\cdot \MAXS(T) + \frac{4\log(2) S T }{u}, \MAXS(T) + 
\sqrt{\frac{\log(2) ST} {{u}}} \right).
\]
\end{lemma}

\begin{proof}[of Theorem~\ref{thm:g-nonadaptive-weaker} using Lemma~\ref{lem:non-fixed g-nonadaptive}] Since inner queries are independent of the secret and the challenge, they can be incorporated into the preprocessing phase. Specifically, we may assume that $A_0$ makes all inner queries of $A_1$ and adds their outputs to the advice string. Then $A_1$ can simply read these outputs from the advice string instead of issuing the queries itself. This modification increases the length of the advice string to at most $S + T \log(N)$, transforms $A_1$ into a no-inner-queries adversary, and does not affect the success probability. The desired bound then follows directly from Lemma~\ref{lem:non-fixed g-nonadaptive}.

\end{proof}

\subsubsection{Proof overview of Lemma~\ref{lem:non-fixed g-nonadaptive}.}
\label{subsec:proof overview of weaker}

\begin{itemize}
    \item Recall that in the distribution $P_{Z,\Sigma}$, $\Sigma$ is a uniform permutation and $Z$ is determined by $\Sigma$ as 
    $Z=A_0(\Sigma)$.
    The proof essentially shows that a preprocessing adversary whose oracle is distributed as $P_{\Sigma \mid Z}$ cannot do much better that a non-preprocessing adversary whose oracle is distributed as $P_{\Sigma}$. Formally, this is done by defining a distribution $Q$, in which $\Sigma$ and $Z$ are independent, and the success probability of a non-preprocessing algorithm is bounded by $\MAXS(T)$ (Claim~\ref{clm:non-fixed basic}). 
    \item We define $\kappa_z$ to measure the amount of information that $A_1$ obtains if the preprocessing algorithm $A_0$ outputs the value $z$, compared to a non-preprocessing adversary under $Q$ that obtains no information. We show that since $z$ is of length $S$ bits, on average over $z$, $\kappa_z \leq \log(2)S$ (Claim~\ref{clm:non-fixed klz}), confirming the intuitive insight that a string of $S$ bits can provide at most $S$ bits of information on average.
    \item Claim~\ref{clm:non-fixed main-eq2} is the heart of the proof. We fix $Z=z$, and show that the success probabilities of adversaries with and without $z$ are very close (as a function of $\kappa_z$ and $T/u$). This is done by observing that the translated outer queries (that are input to $\Sigma$) are $u$-uniform for a uniformly sampled secret $\secv$. It follows that the probability that $A_1$ queries any fixed $i \in [N]$ of $\Sigma$ is at most $\tfrac{T}{u}$, which allows us to apply variants of Shearer's lemma.
    \item The theorem follows by averaging over Claim~\ref{clm:non-fixed main-eq2} and applying Claim~\ref{clm:non-fixed klz} to bound this average.
\end{itemize}

\subsubsection{Proof of Lemma~\ref{lem:non-fixed g-nonadaptive}.}
We define a distribution $Q_{\Sigma,Z,\SECV}$, in which $\Sigma$ and $Z$ are distributed as in $P$, but are now sampled independently. Specifically, $Q_{\Sigma,Z} = P_\Sigma P_Z$. The conditioned distribution of $\SECV$ remains unchanged, i.e., $Q_{\SECV \mid \Sigma,Z} = P_{\SECV \mid \Sigma,Z}$. Intuitively, the distribution $Q$ represents running the algorithms $(A_0,A_1)$, where each algorithm queries a different bijection, effectively making $A_1$ a non-preprocessing algorithm. Denoting the success probability of $(A_0,A_1)$ with respect to $Q$ by
$q(A) := \E_{Q_{\Sigma,Z,\SECV}}
[\SUC_{(A_1)_{Z}^{\mathcal{O}(\Sigma,\SECV)}}(\SECV)]$, we obtain the following claim as a direct corollary:

\begin{claim} \label{clm:non-fixed basic}
$$
q(A)
\leq 
\MAXS(T).$$
\end{claim}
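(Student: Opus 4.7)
The plan is to unpack the definition of $Q$ and use the independence of $\Sigma$ and $Z$ under $Q$ to reduce the left-hand side to an averaged non-preprocessing game.

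First, I would observe that by construction $Q_{\Sigma, Z} = P_\Sigma P_Z$, and since $P_{\SECV \mid \Sigma, Z} = P_\SECV$ (the secret is sampled independently of $(\Sigma,Z)$ under $P$), we also have $Q_{\SECV \mid \Sigma, Z} = P_\SECV$. Hence $Q_{\Sigma, Z, \SECV}$ factors as $P_\Sigma \times P_Z \times P_\SECV$, which lets me rewrite
\[
q(A) \;=\; \E_{Z \sim P_Z}\!\left[\E_{\Sigma \sim P_\Sigma,\; \SECV \sim P_\SECV}\!\left[\SUC_{(A_1)_Z^{\mathcal{O}(\Sigma,\SECV)}}(\SECV)\right]\right].
\]

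Next, I would fix a value $z$ in the support of $P_Z$. Hard-coding $z$ into $A_1$ yields a deterministic, non-adaptive algorithm $(A_1)_z$ that makes at most $T$ oracle queries, and by Remark~\ref{remark:z} the predicate $\SUC$ does not take $z$ as input, so the inner expectation is precisely the success probability of $(A_1)_z$ viewed as a legitimate non-preprocessing, non-adaptive, $T$-query algorithm against a uniformly random bijection $\Sigma$ and an independent uniform secret $\SECV$. By definition of $\MAXS(T)$, this inner expectation is at most $\MAXS(T)$.

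Averaging over $z \sim P_Z$ preserves this bound and yields $q(A) \leq \MAXS(T)$. The argument is essentially notational bookkeeping; the only point needing care is the invocation of Remark~\ref{remark:z}, which guarantees that once $z$ is fixed the game indeed coincides with the one defining $\MAXS(T)$, so no substantive obstacle is anticipated.
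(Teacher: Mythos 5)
Your proof is correct and takes essentially the same route as the paper: the paper simply states Claim~\ref{clm:non-fixed basic} as a direct corollary of the definition of $Q$ (noting that under $Q$ the algorithm $A_1$ is effectively a non-preprocessing algorithm), and your argument just spells out that factorization and the averaging over $z$.
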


\noindent Let
$$\kappa_{z} =  \KL( P_{\Sigma \mid Z = z} \| Q_{\Sigma \mid Z = z} )$$
measure the amount of information $A_1$ obtains if the preprocessing algorithm $A_0$ outputs the value $z$. We have:

\begin{claim} \label{clm:non-fixed klz}
\begin{align} 
\E_{P_{Z}(z)}[\kappa_{z}] \leq \log(2) S.
\end{align}
\end{claim}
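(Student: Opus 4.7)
The plan is to recognize that $\E_{P_Z(z)}[\kappa_z]$ is simply the conditional KL-divergence $\KL(P_{\Sigma \mid Z} \| Q_{\Sigma \mid Z})$ as defined in item (d) of Section~\ref{sec:preliminaries}, and then rewrite it as the mutual information $\I(\Sigma; Z)$ under $P$, which is at most $\HH(Z)$.

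More concretely, I would proceed in four short steps. First, observe that by construction $Q_{\Sigma,Z} = P_\Sigma \cdot P_Z$, so the marginals satisfy $Q_Z = P_Z$ and $Q_\Sigma = P_\Sigma$. Second, apply the chain rule for KL-divergence (\ref{prope:chain rule kl}) on the pair $(\Sigma, Z)$:
\[
\KL(P_{\Sigma,Z} \| Q_{\Sigma,Z}) = \KL(P_Z \| Q_Z) + \KL(P_{\Sigma \mid Z} \| Q_{\Sigma \mid Z}) = \KL(P_{\Sigma \mid Z} \| Q_{\Sigma \mid Z}),
\]
where the second equality uses $P_Z = Q_Z$. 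By definition, $\KL(P_{\Sigma \mid Z} \| Q_{\Sigma \mid Z}) = \E_{P_Z(z)}[\kappa_z]$.

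Third, since $Q_{\Sigma,Z} = P_\Sigma \cdot P_Z$, the quantity $\KL(P_{\Sigma,Z} \| Q_{\Sigma,Z})$ is exactly the mutual information $\I(\Sigma; Z)$ under $P$ (item (f) of Section~\ref{sec:preliminaries}). Finally, by \ref{prope:information bound}, $\I(\Sigma; Z) \leq \HH(Z)$, and by \ref{prope:entropy bound} applied to $Z$, whose support has size at most $2^S$ (since $Z$ is an $S$-bit string), we get $\HH(Z) \leq \log(2^S) = \log(2) \cdot S$ (recall all logarithms are natural). Chaining these bounds yields $\E_{P_Z(z)}[\kappa_z] \leq \log(2) \cdot S$.

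There is no real obstacle here; the statement is essentially the standard information-theoretic fact that the expected KL-divergence between a posterior and a prior is the mutual information, together with the bound of $S$ bits of entropy in the advice. The only thing to be careful about is ensuring the base of the logarithm is consistent with the convention used in the paper (base $e$), which is why the factor $\log(2)$ appears when converting from the bit-length bound $S$ on $Z$ to a bound in nats.
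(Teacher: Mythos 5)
Your proof is correct and follows essentially the same route as the paper's: both rewrite $\E_{P_Z(z)}[\kappa_z]$ as $\KL(P_{\Sigma,Z}\|P_\Sigma P_Z)=\I(\Sigma;Z)$ and then bound this by $\HH(Z)\le\log(2)S$. The only cosmetic difference is that you unfold the identity via the chain rule (\ref{prope:chain rule kl}) explicitly, whereas the paper invokes \ref{prope:information bound} directly, which packages the same chain of equalities.
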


\begin{proof}
Recall that $Q_{Z,\Sigma} = P_Z P_\Sigma$. Therefore, by \ref{prope:information bound} above, we obtain:
\[
    \E_{P_{Z}(z)}[\kappa_{z}] =: \KL( P_{\Sigma \mid Z} \| Q_{\Sigma \mid Z} ) = \KL( P_{\Sigma \mid Z} \| P_{\Sigma} ) = \KL(P_{\Sigma,Z} \| P_\Sigma P_Z) \le \HH(P_Z).
\]
From \ref{prope:entropy bound}, we get $\HH(P_Z) \le \log|\Supp(\mathcal{Z})| \leq \log(2) S$, which completes the proof.
\end{proof}

\medskip 

We will consider the success probability of $(A_0,A_1)$ when some of the random variables are fixed and denote the fixed values in subscript. 
In particular, let
$p_z := \E_{P_{\Sigma,\SECV \mid Z=z}}[\SUC_{(A_1)_{z}^{\mathcal{O}(\Sigma,\SECV)}}(\SECV)]$ 
and 
$q_z := \E_{Q_{\Sigma,\SECV \mid Z=z}}[\SUC_{(A_1)_{z}^{\mathcal{O}(\Sigma,\SECV)}}(\SECV)]$ 
denote the success probabilities of $(A_0,A_1)$ conditioned on $Z=z$ with respect to $P$ and $Q$, respectively. 
Clearly, $p = \E_{P_Z(z)} [p_z]$.

The heart of the proof of Lemma~\ref{lem:non-fixed g-nonadaptive} is the following claim.
\begin{claim} \label{clm:non-fixed main-eq2}
For any $z$ in the support of $P_{Z}$,
\begin{align*} 
p_{z}
\leq 
\min \left(2 q_{z} 
+ 
\frac{4 \kappa_{z} T }{u},
q_{z} + \sqrt{\frac{\kappa_{z} T }{u}} \right).
\end{align*}
\end{claim}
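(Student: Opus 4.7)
\textbf{Proof plan for Claim~\ref{clm:non-fixed main-eq2}.} My plan is to fix $z$ in the support of $P_Z$ and reduce the statement to a direct application of Theorem~\ref{thm:Gavinsky-perm} (the read-$k$ concentration inequality for bijections), followed by a KL-to-statistical-distance conversion. Since $A_1$ is non-adaptive and $z$ is fixed, the multiset of outer queries issued by $(A_1)_z$ is a fixed set $\mathcal{U}(z) \subseteq \mathcal{M}$ of size at most $T$. For each secret $\secv \in \SECS$ define
\[
f_\secv(\sigma) \;:=\; \SUC_{(A_1)_z^{\mathcal{O}(\sigma,\secv)}}(\secv) \in \{0,1\},
\]
extended to all of $[N]^N$ by setting $f_\secv \equiv 0$ on non-bijections (as permitted by the remark after Theorem~\ref{thm:Gavinsky-perm}). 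Because $A_1$ makes no inner queries, $f_\secv$ depends only on the coordinates of $\sigma$ indexed by the set $I_\secv := \{\TR(\secv,m) : m \in \mathcal{U}(z)\}$, which has size at most $T$.

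Next I would verify the read-$k$ property. For any fixed $i \in [N]$, the $u$-uniformity of $\TR$ gives $|\{\secv : \TR(\secv,m) = i\}| \le |\SECS|/u$ for every $m \in \mathcal{M}$, so a union bound over $m \in \mathcal{U}(z)$ shows
\[
\bigl|\{\secv \in \SECS : f_\secv \text{ depends on coordinate } i\}\bigr| \;\le\; \frac{T\,|\SECS|}{u}.
\]
Hence $\{f_\secv\}_{\secv \in \SECS}$ is a read-$k$ family of size $m = |\SECS|$ with $k = T|\SECS|/u$. Since $\SECV$ is independent of $(\Sigma,Z)$ under both $P$ and $Q$, and since $Q_{\Sigma \mid Z=z} = P_\Sigma$ is the uniform distribution over bijections, we have
\[
p_z \;=\; \frac{1}{|\SECS|}\sum_{\secv \in \SECS} \E_{P_{\Sigma \mid Z=z}}[f_\secv(\Sigma)], \qquad
q_z \;=\; \frac{1}{|\SECS|}\sum_{\secv \in \SECS} \E_{P_\Sigma}[f_\secv(\Sigma)].
\]

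Applying Theorem~\ref{thm:Gavinsky-perm} with $P_X = P_{\Sigma \mid Z=z}$, $Q_X = P_\Sigma$, and the family above yields
\[
2k \cdot \kappa_z \;=\; 2k \cdot \KL(P_{\Sigma \mid Z=z} \| Q_{\Sigma \mid Z=z}) \;\geq\; m \cdot \KL(p_z \| q_z),
\]
which (after substituting $k$ and $m$ and canceling $|\SECS|$) gives the key estimate
\[
\KL(p_z \| q_z) \;\le\; \frac{2T\kappa_z}{u}.
\]
To close the proof I would convert this KL bound into the two claimed inequalities. For the first, Corollary~\ref{cor:KL-analysis} applied with $p = p_z$, $q = q_z$ yields $p_z \le 2(q_z + \KL(p_z \| q_z)) \le 2q_z + 4T\kappa_z/u$. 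For the second, Pinsker's inequality (\ref{prope:pinsker}) gives $2(p_z - q_z)^2 \le \KL(p_z \| q_z) \le 2T\kappa_z/u$, hence $p_z \le q_z + \sqrt{T\kappa_z/u}$. Taking the minimum of the two bounds completes the proof.

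There is no serious obstacle here: the main conceptual step is recognizing that the family $\{f_\secv\}$ is exactly a read-$k$ family on a permutation domain, with the $u$-uniformity of the translation function providing the read parameter $k = T|\SECS|/u$; after that, Theorem~\ref{thm:Gavinsky-perm} is directly applicable and the rest is routine manipulation of KL-divergence. The only care needed is in the bookkeeping of distributions — specifically, verifying that conditioning on $Z=z$ leaves the $Q$-distribution of $\Sigma$ uniform and that $\SECV$ is independent of $\Sigma$ in both $P$ and $Q$, so that the averaging over $\secv$ can be interchanged freely with expectations over $\Sigma$.
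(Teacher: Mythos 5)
Your proof is correct and follows essentially the same route as the paper's: fix $z$, form the family $\{f_\secv\}_{\secv\in\SECS}$, verify the read-$(T|\SECS|/u)$ property from $u$-uniformity, apply Theorem~\ref{thm:Gavinsky-perm} to get $\KL(p_z\|q_z)\le 2T\kappa_z/u$, and finish with Corollary~\ref{cor:KL-analysis} and Pinsker's inequality. The bookkeeping you flag (independence of $\SECV$ from $(\Sigma,Z)$, and $Q_{\Sigma\mid Z=z}$ being uniform) is exactly what the paper also relies on.
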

The proof of Lemma~\ref{lem:non-fixed g-nonadaptive} below simply averages both sides of this claim over $P_{Z}(z)$.

\medskip

\begin{proof}[of Claim~\ref{clm:non-fixed main-eq2}]
We fix $z$, which fixes the queries of $A_1$.

Recall that $\mathcal{U} = \mathcal{U}(z)$ denotes the set of outer queries made by $A_1$, which includes all queries in this case. Define $|\SECS|$ sets $\{\mathcal{U}_\secv\}_{\secv \in \SECS}$,
where each $\mathcal{U}_\secv = \{ \TR(\secv,m) \mid m \in \mathcal{U}\}$ represents the set of indices of $\sigma$ queried by $A_1$ after translation, given that the secret is $\secv$ and the preprocessing string is $z$.
Fix any $i \in [N]$.

Since $\TR$ is a $u$-uniform translation function, we have
\begin{align} \label{eq:non-fixed readt2}
|\{\secv \mid i \in \mathcal{U}_\secv\}|
= | \cup_{m \in \mathcal{U}} \{ \secv \mid \TR(\secv,m) = i \} |
\leq \sum_{m \in \mathcal{U}} | \{ \secv \mid \TR(\secv,m) = i \} |
\le \frac{|\mathcal{U}| \cdot |\SECS|}{u} = \frac{T \cdot |\SECS|}{u}.
\end{align}

For every $\secv \in \SECS$, define the indicator function $f_\secv:[N]^{N} \to \{0,1\}$
as 
$f_\secv(\sigma) = 
\SUC_{(A_1)_{z}^{\mathcal{O}(\sigma,\secv)}}(\secv)$,
which outputs $1$ if
(the deterministic algorithm) $A_1$ succeeds on input $\sigma$ and secret $\secv$ when hard-wired with $z$. 

Recall that 
$p_{z,\secv} := \E_{P_{\Sigma \mid Z=z, \SECV = \secv}}[\SUC_{(A_1)_{z}^{\mathcal{O}(\Sigma,\secv)}}(\secv)]$, 
$q_{z,\secv} := 
\E_{Q_{\Sigma \mid Z=z, \SECV = \secv}}[\SUC_{(A_1)_{z}^{\mathcal{O}(\Sigma,\secv)}}(\secv) ]$
denote the success probabilities of $(A_0,A_1)$ conditioned on $Z=z$ and $\SECV=\secv$, with respect to $P$ and $Q$, respectively. Note that under $Q$, generally $z \neq A_0(\sigma)$,
but the success predicate $\SUC$ does not depend on $z$ (see Remark~\ref{remark:z}) and remains accurate.

Thus,
$$ 
\E_{   P_{\Sigma \mid Z = z} } [f_\secv(\Sigma)] = p_{z,\secv},\quad \E_{   Q_{\Sigma \mid Z = z} } [f_\secv(\Sigma)] = q_{z,\secv},
$$
and
$$ 
\frac{1}{|\SECS|} \sum_{\secv \in \SECS} 
p_{z,\secv} = p_z,\quad \frac{1}{|\SECS|} \sum_{\secv \in \SECS} 
q_{z,\secv} = q_z.
$$
Recall that for every $\secv \in \SECS$, $f_\secv(\sigma) = \SUC_{(A_1)_{z}^{\mathcal{O}(\sigma,\secv)}}(\secv)$
depends only on the secret and the part of $\sigma$ queried by $(A_1)_{z}^{\mathcal{O}(\sigma,\secv)}$. Hence, it follows from~(\ref{eq:non-fixed readt2}) that
$\{ f_\secv \}_{\secv \in \SECS}$ form a read-$(T \cdot |\SECS|/u)$ family on the $N$ variables of $\Sigma$.

Observe that $Q_\Sigma$ is the uniform distribution function over bijections on $[N]$, and $P_\Sigma$ is another distribution function over such bijections.

We apply Theorem~\ref{thm:Gavinsky-perm} and deduce  
\begin{align*}
(2 \cdot T \cdot |\SECS|/u) \cdot  \kappa_z  
= 
(2 \cdot T \cdot |\SECS|/u) \cdot \KL( P_{\Sigma \mid Z = z} \| Q_{\Sigma \mid Z = z} ) 
\geq 
|\SECS| \cdot 
\KL( p_z \| 
     q_z ).
\end{align*}
Therefore,
$\KL( p_z \| q_z ) \leq \frac{2 \cdot \kappa_z T}{u}$.

\medskip \noindent Applying Corollary~\ref{cor:KL-analysis} and~\ref{prope:pinsker}, respectively, we conclude 
$$p_{z} \leq 2 q_{z} + \frac{4 \kappa_{z} T }{u},
\textup{ and }
p_{z} \leq q_{z} + \sqrt{\frac{\kappa_{z} T }{u}},$$
as claimed.
\end{proof}

\medskip 

Finally, we prove Lemma~\ref{lem:non-fixed g-nonadaptive}.

\medskip

\begin{proof} [of Lemma~\ref{lem:non-fixed g-nonadaptive}]
We average both sides of the inequality in Claim~\ref{clm:non-fixed main-eq2} 
over $P_{Z}(z)$.

On the left-hand-side we obtain $\E_{P_{Z}(z)}
[p_{z}]=p,$ which is the success probability of $(A_0,A_1)$ with respect to $P$.

On the right-hand-side,
we consider $q_z$ and $\kappa_z$ separately.
First, by Claim~\ref{clm:non-fixed basic}, we obtain
$$
\E_{P_{Z}(z)}[q_z] = \E_{Q_{Z}(z)}[q_z] = q \leq \MAXS(T).$$

Second, recall that by Claim~\ref{clm:non-fixed klz}, we have $\E_{P_{Z}(z)}[\kappa_{z}] \leq \log(2) S$. Therefore, for the first term we obtain
\[
    \E_{P_{Z}(z)}\left[\frac{4 \kappa_{z} T }{u}\right] \le \frac{\log(2) 4 S T }{u}.
\]
For the second term, we use the concavity of the square root function to conclude
\[
    \E_{P_{Z}(z)}\left[\sqrt{\frac{\kappa_{z} T }{u}}\right] \le 
    \sqrt{\frac{\E_{P_{Z}(z)}[\kappa_{z}] T }{u}} \le
    \sqrt{\frac{\log(2) S T }{u}},
\]
completing the proof.
\end{proof}

\subsection{Proof of Theorem~\ref{thm:g-nonadaptive}}
\label{subsec:proof of strong version}
We now assume that $A_1$ makes $T_1 \geq 0$ inner queries and prove Theorem~\ref{thm:g-nonadaptive}.
The structure of the proof is similar to that of Lemma~\ref{lem:non-fixed g-nonadaptive}, albeit somewhat more involved. 

We prove Theorem~\ref{thm:g-nonadaptive} by a hybrid argument. We refer to the hybrid as a middle game:
\begin{definition}\label{def:MID}
    A middle game $\MID$ has the same setting as $\PC$, and it differs only in the allowed actions for the adversary. An adversary $\widehat{A}$ is a pair of algorithms $(\widehat{A}_0, \widehat{A}_1)$ that work as follows: $\widehat{A}_0$ receives only $N$ as input, and outputs \emph{constraints} which include a pair of sequences $\INP = (\INP_1,\dots,\INP_{T_1})$ and $\OUTP = (\OUTP_1,\dots,\OUTP_{T_1})$, each containing non-repeating elements from $[N]$. Then a permutation $\sigma:[N] \to [N]$ is sampled uniformly from the set of permutations satisfying $\sigma(\INP_j) = \OUTP_j$ for all $j$. $\widehat{A}_1$ is defined similarly to the algorithm $A_1$ from $\PC$, but with respect to that permutation $\sigma$, and \emph{it is allowed to make only outer queries and no inner queries}. The outer queries of $\widehat{A}_1$ may depend on $\INP$ and $\OUTP$.
    The rest of the game is defined similarly. The running time $T_2$ of $\widehat{A}_1$ is defined as the number of queries it makes, and the total running time of $\widehat{A}$ is defined as $T := T_1 + T_2$.
\end{definition}

\subsubsection{Proof overview.}
$\MID$ can be thought of as a non-preprocessing game, 
where the adversary is allowed to choose a limited number of $T_1$ values of the permutation (while the other values are sampled uniformly).
The proof is divided into two steps.
\begin{enumerate}
  \item The first step shows (in Theorem~\ref{thm:pc to middle} below)
        that the preprocessing model $\PC$ is not much stronger than $\MID$.
        First, allowing to fix $T_1$ values of the permutation
        gives the adversary in $\MID$ enough power to simulate the inner queries of the online algorithm in $\PC$,
        that depend on the preprocessing string.
        Once we have dealt with the inner queries, it remains to deal with the outer queries.
        We exploit the fact that in $\MID$ the non-fixed elements of the permutation are sampled uniformly
        from all the possible (remaining) values, hence we are left with a uniform permutation on a smaller space. 
        Thus, the proof proceeds in similar way to the proof of Lemma~\ref{lem:non-fixed g-nonadaptive}
        (where the online adversary only makes outer queries),
        considering the permutation on the smaller space. 
        As Lemma~\ref{lem:non-fixed g-nonadaptive}, Theorem~\ref{thm:pc to middle} is applicable to an arbitrary success predicate $\SUC$, as defined in Section~\ref{sec:PC}. 
  \item The second step shows (in Lemma~\ref{lem:middle to non preprocess} below) that fixing $T_1$ values of the permutation (independently of the secret) does not give the $\MID$ adversary much advantage over a non-preprocessing adversary.
        This lemma assumes that $\SUC$ compares the output of $(A_1)_z$ to some function of the secret.

    The proof proceeds by transforming any adversary for the middle game into a standard non-preprocessing, non-adaptive adversary for the original permutation challenge, while losing only a small additive term in success probability. The simulator runs the $\MID$ adversary, ignores its constraints, and simply queries the real oracle on the same outer queries. A coupling argument shows that the two experiments agree except for two bad events that may cause divergence. These events correspond to collisions with the constrained input and output sets, and can be bounded by $O(T^2/u)$.
\end{enumerate}
Combining the two above steps allows showing that in our setting, a preprocessing adversary is not much stronger than a non-preprocessing adversary,
proving Theorem~\ref{thm:g-nonadaptive}.

\subsubsection{First step.}
We first prove that the game $\MID$ is not much weaker than $\PC$.
\begin{theorem} \label{thm:pc to middle}
Let $\PC := \PC(N,\SECS,\mathcal{M},\TR,\POST,\SUC)$ be a permutation challenge game, such that 
$\TR$ is $u$-uniform.
Let $(A_0,A_1)$ be an $(S,T)$ non-adaptive algorithm with preprocessing for $\PC$.
Denote by $\widehat{\MAXS}(T)$ the optimal success probability (with respect to $\SUC$) of an adversary to $\MID$ with running time $T$.
Then, the success probability of $(A_0,A_1)$ is at most
\[
\min \left(2\cdot \widehat{\MAXS}(T) + \frac{4\log(2) S T }{u}, \widehat{\MAXS}(T) + 
\sqrt{\frac{\log(2) ST} {{u}}}\right).
\]
\end{theorem}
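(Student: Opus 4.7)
\medskip
\noindent\textbf{Proof plan for Theorem~\ref{thm:pc to middle}.} The plan is to mimic the proof of Theorem~\ref{thm:non-fixed g-nonadaptive} while additionally conditioning on the \emph{answers} to the inner queries, which a $\MID$-adversary can hard-wire as the sequence $\OUTP$. Since $A_1$ is non-adaptive, given the advice $z$ both its set of inner queries $\mathcal{S}(z) \subseteq [N]$ and its set of outer queries $\mathcal{U}(z) \subseteq \mathcal{M}$ are fixed. Introduce the random variable $W := \Sigma(\mathcal{S}(Z))$, the vector of inner-query answers. For each $(z, w)$ in the support of $P_{Z, W}$, the deterministic $\MID$-adversary $\widehat{A}^{z, w}$ that outputs $(\INP, \OUTP) = (\mathcal{S}(z), w)$ and runs $(A_1)_z$, answering its inner queries from $w$ and forwarding its outer queries, has total running time $|\mathcal{S}(z)| + |\mathcal{U}(z)| \leq T_1 + T_2 = T$. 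Defining $q_{z, w}$ as the success probability of this simulation when $\Sigma$ is drawn from $Q_{\Sigma \mid Z = z, W = w}$, the uniform distribution over permutations $\sigma$ with $\sigma(\mathcal{S}(z)) = w$ (with $\SECV$ independent and uniform), we have $q_{z, w} = \Pr[\widehat{A}^{z, w} \text{ succeeds}]$, so $\widehat{\MAXS}(T) \geq \E_{P_{Z, W}}[q_{z, w}]$ by maximizing over deterministic $\MID$-adversaries.

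The second step reduces the true success probability $p_{z, w}$ (defined with $\Sigma \sim P_{\Sigma \mid Z = z, W = w}$, which is uniform over permutations $\sigma$ satisfying $A_0(\sigma) = z$ and $\sigma(\mathcal{S}(z)) = w$) to $q_{z, w}$ via a KL-divergence argument on the restricted bijection. Fix $(z, w)$; both $P_{\Sigma \mid Z = z, W = w}$ and $Q_{\Sigma \mid Z = z, W = w}$ are supported on permutations whose restriction $\Sigma'$ to $[N] \setminus \mathcal{S}(z)$ is a bijection onto $[N] \setminus w$, and $Q$ induces the uniform distribution on $\Sigma'$. The family $\{ f_\secv(\sigma) := \SUC_{(A_1)_z^{\mathcal{O}(\sigma, \secv)}}(\secv) \}_{\secv \in \SECS}$, viewed as functions of $\Sigma'$ (the $\mathcal{S}(z)$-coordinates are constants determined by $w$), forms a read-$(T_2 |\SECS|/u)$ family by $u$-uniformity of $\TR$, exactly as in~\eqref{eq:non-fixed readt2}. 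Applying Theorem~\ref{thm:Gavinsky-perm} (and noting that $\Sigma'$ together with the conditioning on $W = w$ determines $\Sigma$, so that $\KL(P_{\Sigma' \mid Z = z, W = w} \| Q_{\Sigma' \mid Z = z, W = w})$ equals $\kappa_{z, w} := \KL(P_{\Sigma \mid Z = z, W = w} \| Q_{\Sigma \mid Z = z, W = w})$) gives $\KL(p_{z, w} \| q_{z, w}) \leq 2 T \kappa_{z, w} / u$, and Corollary~\ref{cor:KL-analysis} together with \ref{prope:pinsker} yield the pointwise bounds $p_{z, w} \leq 2 q_{z, w} + 4 T \kappa_{z, w} / u$ and $p_{z, w} \leq q_{z, w} + \sqrt{T \kappa_{z, w} / u}$.

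It remains to show $\E_{P_{Z, W}}[\kappa_{z, w}] \leq \log(2) S$. By \ref{prope:kl and entropy}, $\kappa_{z, w} = \log((N - T_1)!) - \HH(P_{\Sigma \mid Z = z, W = w})$ (assuming $|\mathcal{S}(z)| = T_1$, padding $\INP$ with arbitrary distinct values otherwise). Taking expectation and applying the chain rule $\HH(P_\Sigma) = \HH(P_{Z, W}) + \HH(P_{\Sigma \mid Z, W})$, valid because $(Z, W)$ is a deterministic function of $\Sigma$ under $P$, one obtains $\E_{P_{Z, W}}[\kappa_{z, w}] = \HH(P_{Z, W}) - \log(N!/(N-T_1)!)$. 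The bound $\HH(P_{Z, W}) \leq \HH(P_Z) + \HH(P_{W \mid Z}) \leq \log(2) S + \log(N!/(N - T_1)!)$, where the second inequality uses that $W \mid Z$ is an ordered tuple of $T_1$ distinct elements of $[N]$, makes the $T_1$-dependent terms cancel, yielding $\E_{P_{Z, W}}[\kappa_{z, w}] \leq \log(2) S$. Averaging the pointwise bounds from the previous paragraph over $P_{Z, W}$, using concavity of $\sqrt{\cdot}$ for the second, then completes the proof.

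The main obstacle I anticipate is the cancellation in this last step: naively, conditioning on $W$ in addition to $Z$ could cost up to $T_1 \log N$ extra bits of KL divergence, which would be catastrophic. The key observation is that the same entropy $\log(N!/(N - T_1)!)$ is also present in the reference distribution $Q$ (which is uniform over permutations consistent with the $W$-constraint), so it cancels when comparing $P$ and $Q$, leaving only the $S$ bits of information that $Z$ carries about $\Sigma$.
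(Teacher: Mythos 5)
Your proposal is correct and mirrors the paper's proof: you condition on the advice $z$ together with the inner-query answers $w$ (the paper's $\sigma^{in}$), reduce to the $\MID$ game, apply Theorem~\ref{thm:Gavinsky-perm} to the restricted bijection $\Sigma'$ on $\overline{\mathcal{S}(z)}$, and then average. The only cosmetic difference is in the averaging of $\kappa_{z,w}$: where the paper invokes the KL chain rule to reduce $\E[\kappa'_{z,\sigma^{in}}]$ to $\KL(P_{\Sigma\mid Z}\|P_\Sigma)$, you do the equivalent bookkeeping directly with entropies (plus a padding convention to keep $|\mathcal{S}(z)|$ fixed at $T_1$), observing the same cancellation of the $\log\bigl(N!/(N-T_1)!\bigr)$ terms that makes the extra conditioning on $W$ free.
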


In order to prove the theorem, some preparation is needed.

\medskip 

First, let us recall a few notations we shall use in the proof. The set of inner queries is denoted by $\mathcal{S}$, and we denote $\overline{\mathcal{S}} = [N] \setminus \mathcal{S}$.  Recall that $\mathcal{S}$ depends on $z$.
The distribution $P$ was defined in Section~\ref{subsec:preprocess setup}, as $P_{\Sigma,Z,\SECV} = P_{\Sigma} P_{Z \mid \Sigma} P_\SECV$, where $P_\Sigma$ is the uniform distribution over permutations, $P_\SECV$ is the distribution over secrets $\SECV$, and $P_{Z \mid \Sigma = \sigma}(z) = 1$ if 
$z = A_0(\sigma)$.

We define a distribution $Q'_{\Sigma,Z,\SECV}$ which is identical to $P_{\Sigma,Z,\SECV}$,
with the exception that 
$$
Q'_{\Sigma \mid Z} 
= 
Q'_{\mathcal{S},\Sigma_{\mathcal{S}}, \Sigma_{\overline{\mathcal{S}}} \mid Z} 
=
Q'_{\mathcal{S},\Sigma_{\mathcal{S}} \mid Z} Q'_{\Sigma_{\overline{\mathcal{S}}} \mid Z, \mathcal{S},\Sigma_{\mathcal{S}} }
$$ 
is defined by setting
$Q'_{\mathcal{S},\Sigma_{\mathcal{S}} \mid Z} = P_{\mathcal{S},\Sigma_{\mathcal{S}} \mid Z}$,
while 
$Q'_{\Sigma_{\overline{\mathcal{S}}} \mid Z, \mathcal{S},\Sigma_{\mathcal{S}} }$
is the uniform distribution over bijections mapping the elements of $\overline{\mathcal{S}}$ to those of $[N] \setminus \Sigma_{\mathcal{S}}$.
On the other hand, $Q'_{\Sigma} = P_{\Sigma}$, $Q'_{Z} = P_{Z}$, $Q'_{\SECV} = P_{\SECV}$, and 
$Q'_{\Sigma,Z,\SECV} = Q'_{\Sigma,Z} \times Q'_\SECV$.

To simplify notation, 
denote the random variables that $A_1$ obtains from inner queries to $\sigma$
by 
$\Sigma^{in} := (\mathcal{S},\Sigma_{\mathcal{S}})$.

Given $\Sigma^{in}$, the distribution function $Q'_{\Sigma_{\overline{\mathcal{S}}} \mid Z, \Sigma^{in} }$ does not depend on the value of $Z$.
Therefore,
$$Q'_{\Sigma_{\overline{\mathcal{S}}} \mid Z, \Sigma^{in} } = Q'_{\Sigma_{\overline{\mathcal{S}}} \mid \Sigma^{in} }.$$

Note that if $T_1 = 0$ (i.e., if $\mathcal{S} = \emptyset$), then $Q'$ is identical to $Q$ defined above. However, in general we have
$
Q'_{\Sigma \mid Z} = 
Q'_{\Sigma^{in} \mid Z} 
Q'_{\Sigma_{\overline{\mathcal{S} }} \mid Z, \Sigma^{in} } =
P_{\Sigma^{in} \mid Z}
Q'_{\Sigma_{\overline{\mathcal{S} }} \mid \Sigma^{in}},
$
while
$Q_{\Sigma \mid Z} = 
Q_{\Sigma} =
P_{\Sigma}$.

Denote the success probability of $(A_0,A_1)$ with respect to $Q'$ by
$$
q' := \E_{Q'_{\Sigma,Z,\SECV}}
[ \SUC_{(A_1)_{Z}^{\mathcal{O}(\Sigma,\SECV)}}(\SECV) ].
$$

\begin{claim} We have
\label{clm:basic with advice}
$$
q' 
\leq 
\widehat{\MAXS}(T).$$
\end{claim}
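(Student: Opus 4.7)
The plan is to establish the inequality by constructing an adversary $\widehat{A}=(\widehat{A}_0,\widehat{A}_1)$ for the middle game $\MID$ whose success probability equals $q'$. Since $Q'$ is specifically tailored so that its inner queries are fixed to whatever $A_0$ produces from a true sample of $\Sigma$ while the rest of the permutation is uniform conditioned on those constraints, this is precisely the distribution one obtains by running a middle-game adversary that ``precomputes'' the inner-query slots during $\widehat{A}_0$. The only caveat is that $\widehat{A}_0$ sees only $N$, not $\Sigma$; we handle this by letting $\widehat{A}_0$ be randomized, and then de-randomizing at the end by averaging.

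Concretely, I would let $\widehat{A}_0$ sample a uniformly random permutation $\sigma \sim P_\Sigma$ internally, compute $z = A_0(\sigma)$, read off the (non-adaptive) set of inner queries $\mathcal{S}(z)$ that $A_1$ would issue given advice $z$, and output the constraints $\INP = \mathcal{S}(z)$, $\OUTP = \sigma(\mathcal{S}(z))$, passing $z$ along as internal state to $\widehat{A}_1$. Then $\widehat{A}_1$ receives $z$, the constraints $(\INP, \OUTP)$, and oracle access to a permutation $\Sigma'$ drawn uniformly from the bijections satisfying these constraints; it simulates $A_1$ on advice $z$, answering each inner query of $A_1$ directly from the constraints $(\INP,\OUTP)$ (no outer query needed), and forwarding each outer query of $A_1$ to the $\MID$ outer oracle. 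Finally $\widehat{A}_1$ returns whatever $A_1$ returns. The total number of fixed indices is $T_1$ and the number of outer queries is $T_2$, so the total running time of $\widehat{A}$ equals $T_1 + T_2 = T$, as required.

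It remains to verify that the joint distribution of $(\Sigma', Z, \SECV)$ seen by the simulated $A_1$ coincides with $Q'_{\Sigma,Z,\SECV}$. The marginal of $(Z, \Sigma^{in}) = (Z, \mathcal{S}, \Sigma_{\mathcal{S}})$ produced by $\widehat{A}_0$ is precisely $P_{Z, \Sigma^{in}}$, which equals $Q'_{Z, \Sigma^{in}}$ by the definition of $Q'$. Conditioned on $(Z, \Sigma^{in})$, the middle-game sampler draws $\Sigma'$ uniformly from permutations satisfying the constraints, which matches $Q'_{\Sigma_{\overline{\mathcal{S}}} \mid Z, \Sigma^{in}} = P_{\Sigma_{\overline{\mathcal{S}}} \mid \Sigma^{in}}$. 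Finally, $\SECV$ is drawn independently in both games according to $P_\SECV$. Hence the success probability of $\widehat{A}$ equals $q'$, and since $\widehat{\MAXS}(T)$ is the supremum over $\MID$-adversaries (with randomness de-randomized by fixing the best coin string of $\widehat{A}_0$), we conclude $q' \leq \widehat{\MAXS}(T)$.

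The main subtlety is the permissibility of randomization in $\widehat{A}_0$: the $\MID$ definition states that $\widehat{A}_0$ receives only $N$, but since we are comparing against the optimum $\widehat{\MAXS}(T)$, convexity immediately lets us fix the randomness of $\widehat{A}_0$ to a value achieving at least the average success probability $q'$, yielding a deterministic middle-game adversary. Aside from this, the argument is essentially a bookkeeping check that the simulation faithfully reproduces the distribution $Q'$, and no nontrivial calculation is needed.
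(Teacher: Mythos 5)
Your proposal is correct and follows essentially the same approach as the paper: both construct a randomized $\MID$-adversary whose $\widehat{A}_0$ internally samples a permutation, computes $z=A_0(\sigma)$, and emits $(\mathcal{S}(z),\sigma_{\mathcal{S}(z)})$ as constraints, while $\widehat{A}_1$ simulates $A_1$ answering inner queries from the constraints and forwarding the outer queries, and both then verify the induced distribution is exactly $Q'$. Your explicit remark on de-randomizing $\widehat{A}_0$ by fixing a good coin string is a small clarifying addition; the paper handles the same point implicitly via a shared randomness tape.
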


\begin{proof}
Recall that $Q'_{\Sigma_{\overline{\mathcal{S}}} \mid Z, \Sigma^{in} } =
Q'_{\Sigma_{\overline{\mathcal{S}}} \mid  \Sigma^{in} }$
is the uniform bijection mapping the elements of $\overline{\mathcal{S}}$ to those of $[N] \setminus \Sigma_{\mathcal{S}}$.

Since $(A_1)_z$ queries $\mathcal{S}$, the preprocessing string $z$ does not give $A_1$ any additional information about 
$\Sigma$. Thus, $A$ under $Q'$ cannot do better than in the following game: First $A_0$ receives a uniform permutation $\sigma'$, and chooses a set $\mathcal{S}$ of elements to fix to pre-chosen values. Then, these elements are fixed, and the other elements of $\sigma'$ are re-shuffled to obtain $\sigma$. The rest of the game occurs with respect to $\sigma$. One can see that this game is equivalent to $\MID$.
 
Formally, given a preprocessing algorithm $(A_0,A_1)$ we define a (randomized) adversary $(\widehat{A}_0, \widehat{A}_1)$ for $\MID$
that makes $T$ queries and has success probability identical to $(A_0,A_1)$ (i.e., $q'$). By definition, 
the success probability of $(\widehat{A}_0, \widehat{A}_1)$ is bounded by $\widehat{\MAXS}(T)$, hence this will complete the proof.
It remains to define $(\widehat{A}_0, \widehat{A}_1)$. 

$\widehat{A}_0$ receives $N$ as input, samples a permutation $\sigma'$, and calls $A_0$ to obtain $z$. Then, it calls $A_1$ with $z$ as input and obtains the set $\mathcal{S}(z)$ of inner queries. Finally, $\widehat{A}_0$ outputs $\INP = \mathcal{S}(z)$ and $\OUTP = \sigma'_{\mathcal{S}(z)}$. $\widehat{A}_1$ receives oracle access to a permutation $\sigma$ with $\sigma_{\mathcal{S}} = \sigma'_{\mathcal{S}}$ and independent values elsewhere. Notably, $\mathcal{S},\sigma_{\mathcal{S}}$ and $z$ are known to $\widehat{A}_1$, since $\widehat{A}_0$ and $\widehat{A}_1$ share the randomness tape and it is the only randomness source of $\widehat{A}_0$. Therefore, the distribution of $\sigma$ conditioned on $z$ is exactly $Q'_{\Sigma \mid Z}$. Therefore, $\widehat{A}_1$ can perfectly simulate $A_1$ by making the same (non-adaptive) outer queries as $A_1$ 
and returning its output. Thus, the success probability of $\widehat{A}$ is $q'$, as required.
\end{proof}

Fixing $z,\sigma^{in}$,
we can think of $\sigma_{\overline{\mathcal{S}}}$ drawn from either 
$P_{ \Sigma_{\overline{\mathcal{S}}} \mid  \Sigma^{in} = \sigma^{in}, Z = z}$
or 
$Q'_{ \Sigma_{\overline{\mathcal{S}}} \mid \Sigma^{in} = \sigma^{in}}$
as an extended bijection from $[N]$ to itself that maps the elements of $\mathcal{S}$ to those in $\sigma_{\mathcal{S}}$,
and maps the elements of $\overline{\mathcal{S}}$ to those of $[N] \setminus \sigma_{\overline{\mathcal{S}}}$.

We can therefore think of $\sigma^{in}$ as being hardwired into $(A_1)_{z}$ such that it only makes queries to the random variables 
$\Sigma_{\mathcal{S}}$ (after translation). Of course, some of the (translated) queries of $(A_1)_{z}$ may still fall in $\mathcal{S}$, but 
$\sigma^{in} = (\mathcal{S},\sigma_{\mathcal{S}})$ are formally no longer random variables.

Denote
$$p_{z, \sigma^{in} } := 
\E_{P_{ \Sigma_{\overline{\mathcal{S}} },\SECV \mid \Sigma^{in} = \sigma^{in}, Z=z}}
[ \SUC_{(A_1)_{z}^{\mathcal{O}( \sigma^{in}, \Sigma_{\overline{\mathcal{S}}},\SECV)}}(\SECV)
] $$ 
and
$$q'_{z, \sigma^{in} } := \E_{Q'_{ \Sigma_{\overline{\mathcal{S}} },\SECV \mid  \Sigma^{in} = \sigma^{in}, Z=z}}
[ \SUC_{(A_1)_{z}^{\mathcal{O}( \sigma^{in}, \Sigma_{\overline{\mathcal{S}}},\SECV)}}(\SECV)
].
$$ 

Also, denote
$$\kappa'_{z, \sigma^{in} } = 
\KL( P_{ \Sigma_{\overline{\mathcal{S}}} \mid \Sigma^{in} = \sigma^{in} , Z = z} \| Q'_{ \Sigma_{ \overline{\mathcal{S}} } \mid \Sigma^{in} = \sigma^{in} } ). 
$$

The heart of the proof of Theorem~\ref{thm:pc to middle} is the following additional claim. 
\begin{claim} \label{clm:advice main-eq2}
For any $z,\sigma^{in}$ in the support of $P_{Z,\Sigma^{in}}$,
\begin{align*} 
p_{z, \sigma^{in} }
\leq 
\min \left(
2 q'_{z, \sigma^{in} }
+ 
\frac{4 \kappa'_{z, \sigma^{in} } T_2 }{u},
q'_{z,\sigma^{in} } + \sqrt{\frac{\kappa'_{z,  \sigma^{in} } T_2} {{u}}} 
\right).
\end{align*}
\end{claim}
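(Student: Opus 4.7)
The plan is to follow the same blueprint as the proof of Claim~\ref{clm:non-fixed main-eq2}, but applied to the restricted bijection $\Sigma_{\overline{\mathcal{S}}}$ on the index set $\overline{\mathcal{S}}$ (of size $N - T_1$) with target $[N] \setminus \sigma_{\mathcal{S}}$ (of the same size). The key observation is that once we fix $z$ and $\sigma^{in}$, the inner queries of $(A_1)_z$ and their answers are completely determined, so the ``remaining randomness'' lies entirely in $\Sigma_{\overline{\mathcal{S}}}$, which is uniform over bijections $\overline{\mathcal{S}} \to [N] \setminus \sigma_{\mathcal{S}}$ under $Q'$ and has distribution $P_{\Sigma_{\overline{\mathcal{S}}} \mid \Sigma^{in} = \sigma^{in}, Z = z}$ under $P$. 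This reduces the situation to a Shearer-style inequality on a smaller bijective space.

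Concretely, after fixing $z$ and $\sigma^{in}$, non-adaptivity fixes the outer query set $\mathcal{U}$. For each $\secv \in \SECS$, define the indicator
$$
f_\secv(\sigma_{\overline{\mathcal{S}}}) = \SUC_{(A_1)_z^{\mathcal{O}(\sigma^{in},\sigma_{\overline{\mathcal{S}}},\secv)}}(\secv),
$$
viewed as a function on bijections $\overline{\mathcal{S}} \to [N] \setminus \sigma_{\mathcal{S}}$. Then $f_\secv$ depends only on those $i \in \overline{\mathcal{S}}$ with $i = \TR(\secv,m)$ for some $m \in \mathcal{U}$, since translated queries landing in $\mathcal{S}$ produce fixed outputs from $\sigma_{\mathcal{S}}$. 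The $u$-uniformity of $\TR$ gives, for every fixed $i \in \overline{\mathcal{S}}$,
$$
\bigl|\{\secv \in \SECS : \exists m \in \mathcal{U},\, \TR(\secv,m)=i\}\bigr| \leq \sum_{m \in \mathcal{U}} |\{\secv : \TR(\secv,m)=i\}| \leq \frac{T_2 \cdot |\SECS|}{u},
$$
so $\{f_\secv\}_{\secv \in \SECS}$ is a read-$(T_2|\SECS|/u)$ family on the variables of $\Sigma_{\overline{\mathcal{S}}}$. By construction, $\E_{P_{\Sigma_{\overline{\mathcal{S}}} \mid \Sigma^{in}=\sigma^{in}, Z=z}}[f_\secv] = p_{z,\sigma^{in},\secv}$ averages over $\secv$ to $p_{z,\sigma^{in}}$, and analogously for $q'_{z,\sigma^{in}}$.

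Applying Theorem~\ref{thm:Gavinsky-perm} to the uniform distribution $Q'_{\Sigma_{\overline{\mathcal{S}}} \mid \Sigma^{in}=\sigma^{in}}$ over bijections $\overline{\mathcal{S}} \to [N] \setminus \sigma_{\mathcal{S}}$ and to $P_{\Sigma_{\overline{\mathcal{S}}} \mid \Sigma^{in}=\sigma^{in}, Z=z}$, with read parameter $k = T_2|\SECS|/u$ and $m = |\SECS|$, yields
$$
\frac{2 T_2 |\SECS|}{u}\cdot \kappa'_{z,\sigma^{in}} \;\geq\; |\SECS|\cdot \KL\bigl(p_{z,\sigma^{in}} \,\|\, q'_{z,\sigma^{in}}\bigr),
$$
so $\KL(p_{z,\sigma^{in}} \| q'_{z,\sigma^{in}}) \leq 2 T_2 \kappa'_{z,\sigma^{in}}/u$. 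Corollary~\ref{cor:KL-analysis} then gives $p_{z,\sigma^{in}} \leq 2 q'_{z,\sigma^{in}} + 4 T_2 \kappa'_{z,\sigma^{in}}/u$, and Pinsker's inequality (\ref{prope:pinsker}) gives $p_{z,\sigma^{in}} \leq q'_{z,\sigma^{in}} + \sqrt{T_2 \kappa'_{z,\sigma^{in}}/u}$, which together yield the claim (and one may replace $T_2$ by $T = T_1+T_2$ since $T_2 \leq T$). The main conceptual point to verify is that $Q'_{\Sigma_{\overline{\mathcal{S}}} \mid \Sigma^{in} = \sigma^{in}}$ really is uniform over the relevant space of bijections (which holds by the definition of $Q'$) so that Theorem~\ref{thm:Gavinsky-perm} applies; the rest is a direct port of the warm-up argument to the smaller bijective support.
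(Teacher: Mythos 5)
Your proposal is correct and follows essentially the same approach as the paper: fix $z,\sigma^{in}$, view the remaining randomness as a bijection $\overline{\mathcal{S}} \to [N]\setminus\sigma_{\mathcal{S}}$, observe that the family $\{f_\secv\}$ is read-$(T_2|\SECS|/u)$ via $u$-uniformity of $\TR$ and the fact that queries translating into $\mathcal{S}$ give fixed answers, apply Theorem~\ref{thm:Gavinsky-perm} on the smaller bijective space, and finish with Corollary~\ref{cor:KL-analysis} and Pinsker. The parenthetical remark about replacing $T_2$ by $T$ is unnecessary (the claim as stated uses $T_2$) but harmless.
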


\begin{proof}
We fix $z,\sigma^{in}$ (hence fixing the queries of $A_1$). 

As in the proof of Claim~\ref{clm:non-fixed main-eq2}, 
define $|\SECS|$ sets $\{\mathcal{U}_\secv\}_{\secv \in \SECS }$,
where $\mathcal{U}_\secv = \{ \TR(\secv,m) \mid m \in \mathcal{U}\}$.
Recall that $\Sigma^{in}$ and $\SECV$ are independent random variables (under both $P$ and $Q'$),
so for any value $\Sigma^{in} = \sigma^{in}$, $\SECV$ is still uniform.

For every $\secv \in \SECS$, define the indicator function $f_c:[N]^{N - T_1} \mapsto \{0,1\}$ by setting
$f_\secv(\sigma_{\overline{\mathcal{S}}}) = 
\SUC_{(A_1)_{z}^{\mathcal{O}(\sigma^{in} , \sigma_{\overline{\mathcal{S}}},\secv)}}(\secv),
$
which outputs $1$ if
(the deterministic algorithm) $A_1$ succeeds with the secret $\secv$ on input $\sigma_{\overline{\mathcal{S}}}$ when hard-wired with $z,\sigma^{in}$. 

Recall that
$$p_{z,\sigma^{in},\secv} := 
\E_{P_{ \Sigma_{\overline{\mathcal{S}}} \mid \Sigma^{in} = \sigma^{in}, Z=z, \SECV = \secv}}
[ 
\SUC_{(A_1)_{z}^{\mathcal{O}( \sigma^{in} , \Sigma_{\overline{\mathcal{S}}},\secv)}}(\secv)
]$$ 
and
$$q'_{z,\sigma^{in},\secv} := \E_{Q'_{ \Sigma_{\overline{\mathcal{S}}} \mid \Sigma^{in} = \sigma^{in}, Z=z, \SECV = \secv}} [
\SUC_{(A_1)_{z}^{\mathcal{O}(\sigma^{in}, \Sigma_{\overline{\mathcal{S}}},\secv)}}(\secv)
]$$ 
denote the success probability of $(A_0,A_1)$ conditioned on $Z=z,\Sigma^{in} = \sigma^{in}, \SECV = \secv$ with respect to $P$ and $Q'$, respectively.

Thus,
$$ 
\E_{   P_{\Sigma_{\overline{\mathcal{S}}} \mid \Sigma^{in} = \sigma^{in} ,Z = z} } [f_d(\Sigma_{\overline{\mathcal{S}}})] 
= 
p_{z,\sigma^{in},\secv}
\qquad \mbox{and} \qquad
\E_{ Q_{\Sigma_{\overline{\mathcal{S}}} \mid \Sigma^{in} = \sigma^{in} ,Z = z}} 
[f_d(\Sigma_{\overline{\mathcal{S}}})] 
= 
q'_{z,\sigma^{in},\secv}.
$$
Also,
$$ 
\frac{1}{|\SECS|} \sum_{\secv \in \SECS} 
p_{z,\sigma^{in},\secv}
= 
p_{z,\sigma^{in}}
\qquad \mbox{and} \qquad 
\frac{1}{|\SECS|} \sum_{\secv \in \SECS} 
q'_{z,\sigma^{in},\secv}
= 
q'_{z,\sigma^{in} }.
$$

Recall that when $\sigma^{in}$ is fixed, $A_1$ only makes queries in
$\overline{\mathcal{S}}$ (after translation), and by similar calculation to~(\ref{eq:non-fixed readt2}),
$\{ f_\secv \}_{\secv \in \SECS}$ form a read-$(T_2 \cdot |\SECS|/u)$ family on the $N - T_1$ variables of $\Sigma_{\overline{\mathcal{S}}}$. 

Note that $\sigma_{\overline{\mathcal{S}}}$ drawn from either 
$P_{ \Sigma_{\overline{\mathcal{S}}} \mid \Sigma_{\mathcal{S}} = \sigma_{\mathcal{S}}, Z = z}$
or 
$
Q'_{ \Sigma_{\overline{\mathcal{S}}} \mid \Sigma^{in} = \sigma^{in}, Z = z} =
Q'_{ \Sigma_{\overline{\mathcal{S}}} \mid \Sigma^{in} = \sigma^{in} }$
is a bijection from $\overline{\mathcal{S}}$ to $[N] \setminus \sigma_{\mathcal{S}}$,
and 
$Q'_{ \Sigma_{\overline{\mathcal{S}}} \mid \Sigma^{in} = \sigma^{in} }$
is uniformly chosen.
We apply Theorem~\ref{thm:Gavinsky-perm} and deduce  
\begin{align*}
\begin{split}
(2 \cdot T_2 \cdot |\SECS|/u) \kappa'_{z, \sigma^{in} }  
&= 
(2 \cdot T_2 \cdot |\SECS|/u)  \KL( P_{ \Sigma_{\overline{\mathcal{S}}} \mid \Sigma^{in} = \sigma^{in}, Z = z} \| Q'_{ \Sigma_{ \overline{\mathcal{S}} } \mid \Sigma^{in} = \sigma^{in} } )  \\
&\geq 
|\SECS|
\KL( p_{z, \sigma^{in} } \| 
     q'_{z,\sigma^{in} } ).
\end{split}
\end{align*}
Therefore,
$$\KL( p_{z,\sigma^{in}} \| 
     q'_{z,\sigma^{in} } ) \leq \frac{2 \cdot \kappa'_{z, \sigma^{in} } T_2}{u}.$$
Applying Corollary~\ref{cor:KL-analysis} and~\ref{prope:pinsker}, respectively, we conclude 
$$p_{z,\sigma^{in} } \leq 2 q'_{z,\sigma^{in} } + \frac{4 \kappa'_{z,  \sigma^{in} } T_2 }{u} \qquad
\textup{ and } \qquad 
p_{z,\sigma^{in} } \leq q'_{z,\sigma^{in} } + \sqrt{\frac{\kappa'_{z,  \sigma^{in} } T_2} {{u}}},
$$
as claimed.
\end{proof}

\medskip Now, we are ready to prove Theorem~\ref{thm:pc to middle}.

\medskip
\begin{proof} [of Theorem~\ref{thm:pc to middle}]
We average both sides of the inequality of Claim~\ref{clm:advice main-eq2} 
over $P_{Z,\Sigma^{in} }(z,\sigma^{in}) = Q'_{Z,\Sigma^{in} }(z,\sigma^{in})$.

On the left-hand-side, we obtain
$
\E_{P_{Z,\Sigma^{in}}(z, \sigma^{in} )}
[p_{z,\sigma^{in} }]
= p$.

On the right-hand-side,
we consider $q'_{z, \sigma^{in}}$ and $\kappa'_{z,\sigma^{in} }$ separately.

First, since
$P_{Z,\Sigma^{in} }(z,\sigma^{in}) = Q'_{Z, \Sigma^{in} }(z,\sigma^{in})$, we have
$$
\E_{Q'_{Z,\Sigma^{in}}(z,\sigma^{in})}[
q'_{z, \sigma^{in}} ] =
q' \leq 
\widehat{\MAXS}(T),$$
where the inequality is by Claim~\ref{clm:basic with advice}.

Second, we have
\begin{align*} 
\E_{Q_{Z,\Sigma^{in}}(z, \sigma^{in} )} [ \kappa'_{z,\sigma^{in} } ] 
&=  
\E_{Q_{Z,\Sigma^{in}}(z, \sigma^{in} )} [ \KL( P_{ \Sigma_{\overline{\mathcal{S}}} \mid \Sigma^{in} = \sigma^{in} , Z = z} \| Q'_{ \Sigma_{ \overline{\mathcal{S}} } \mid \Sigma^{in} = \sigma^{in} } ) ]. 
\end{align*}
Recall that 
for any $\sigma^{in}$
$,
Q'_{ \Sigma_{ \overline{\mathcal{S}} } \mid \Sigma^{in} = \sigma^{in} }$
is uniform 
over its support, hence, by~\ref{prope:kl and entropy}
\begin{align*} 
& \E_{Q_{Z,\Sigma^{in}}(z, \sigma^{in} )} [ \KL( P_{ \Sigma_{\overline{\mathcal{S}}} \mid \Sigma^{in} = \sigma^{in} , Z = z } \| Q'_{ \Sigma_{ \overline{\mathcal{S}} } \mid \Sigma^{in} = \sigma^{in} } ) ] \\
&= 
\E_{Q_{Z,\Sigma^{in}}(z, \sigma^{in} )}[\HH(Q'_{\Sigma_{ \overline{\mathcal{S}} } \mid \Sigma^{in} = \sigma^{in}} ) - 
\HH(P_{ \Sigma_{\overline{\mathcal{S}}} \mid \Sigma^{in} = \sigma^{in} , Z = z } ) ] \\
&= 
\HH(Q'_{\Sigma_{ \overline{\mathcal{S}} } \mid \Sigma^{in}} ) - 
\HH(P_{ \Sigma_{\overline{\mathcal{S}}} \mid \Sigma^{in} , Z} )
= 
\HH(Q'_{\Sigma_{ \overline{\mathcal{S}} } \mid \mathcal{S},\Sigma_{\mathcal{S} },Z } ) - 
\HH(P_{ \Sigma_{\overline{\mathcal{S}}} \mid \mathcal{S},\Sigma_{\mathcal{S} },Z} ) \\
&=
\HH(Q'_{\Sigma_{ \overline{\mathcal{S}} } \mid \Sigma_{\mathcal{S}},Z } ) - 
\HH(P_{ \Sigma_{\overline{\mathcal{S}}} \mid \Sigma_{\mathcal{S}},Z} ).
\end{align*} 
By~\ref{prope:chain rule entropy} and~\ref{prope:conditioned entropy},
\begin{align*} 
\HH(Q'_{\Sigma_{ \overline{\mathcal{S}} } \mid \Sigma_{\mathcal{S}},Z } ) 
&= 
\HH(Q'_{ \Sigma \mid Z} ) - 
\HH(Q'_{ \Sigma_{\mathcal{S}} \mid Z}) = 
\HH(Q'_{ \Sigma \mid Z} ) - 
\HH(P_{ \Sigma_{\mathcal{S}} \mid Z}) \leq
\HH(Q'_{ \Sigma} ) - 
\HH(P_{ \Sigma_{\mathcal{S}} \mid Z}) \\
&\leq 
\HH(P_{ \Sigma} ) - 
\HH(P_{ \Sigma_{\mathcal{S}} \mid Z}),
\end{align*} 

and similarly,
$$
\HH(P_{ \Sigma_{\overline{\mathcal{S}}} \mid \Sigma_{\mathcal{S}},Z} ) = 
\HH(P_{ \Sigma \mid Z} ) - 
\HH(P_{ \Sigma_{\mathcal{S}} \mid Z}). 
$$
Therefore, by~\ref{prope:information bound},~\ref{prope:entropy bound},
and since $(A_0,A_1)$ is an $(S,T)$ algorithm,
\begin{align*} 
\E_{Q_{Z,\Sigma^{in}}(z, \sigma^{in} )} [ \kappa'_{z,\sigma^{in} } ] \leq
\HH(P_{\Sigma}) - \HH(P_{\Sigma \mid Z} ) = \I(\Sigma ; Z) \leq
\HH(Z) \leq 
\log|\Supp(\mathcal{Z})| \leq \log(2) S.
\end{align*}

Overall, we obtain 
$$ 
p \leq 
2 \cdot \widehat{\MAXS}(T) + \frac{\log(2) 4 S T }{u},
\textup{ and }
p \leq 
\widehat{\MAXS}(T) + 
\sqrt{\frac{\log(2) ST} {{u}}},
$$
where the second inequality uses the concavity of the square root function.
This concludes the proof.
\end{proof}

\subsubsection{Second step - completing the proof of Theorem~\ref{thm:g-nonadaptive}.}
Theorem~\ref{thm:g-nonadaptive} follows immediately from Theorem~\ref{thm:pc to middle} and the following lemma
(which is the only lemma that restricts $\SUC$):
\begin{lemma}\label{lem:middle to non preprocess}
    Let $\PC := \PC(N,\SECS,\mathcal{M},\TR,\POST,\SUC)$ be a permutation challenge game, such that $\TR$ is $u$-uniform,
    and $\SUC$ compares the output of $A_1$ to some function of the secret.
    Denote by $\MAXS(T)$ the optimal success probability (with respect to $\SUC$) of a non-preprocessing, non-adaptive algorithm that makes at most $T$ queries.

    Let $\MID$ be the corresponding game defined above (see Definition~\ref{def:MID}).
    Denote by $\widehat{\MAXS}(T)$ the optimal success probability (with respect to $\SUC$) of an adversary to $\MID$ with running time $T$. Then $$\widehat{\MAXS}(T) \le \MAXS(T) + \frac{T^2}{2u}.$$
\end{lemma}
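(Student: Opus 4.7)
The plan is to convert any $\MID$ adversary $\widehat{A} = (\widehat{A}_0, \widehat{A}_1)$ with running time $T = T_1 + T_2$ and success probability $\widehat{p}$ into a non-preprocessing, non-adaptive adversary $B$ making at most $T$ queries with $\widehat{p} \le p(B) + T^2/(2u)$. By derandomization I may assume $\widehat{A}$ is deterministic, so the constraints $(\INP, \OUTP)$ and the sequence of outer queries $m_1, \dots, m_{T_2}$ made by $\widehat{A}_1$ are all fixed (non-adaptively) in advance. The adversary $B$ hard-wires $(\INP, \OUTP)$ and $m_1, \dots, m_{T_2}$, submits those $T_2$ outer queries non-adaptively to its outer oracle, obtains answers $r_1, \dots, r_{T_2}$, and outputs whatever $\widehat{A}_1$ would output on the view $(\INP, \OUTP, r_1, \dots, r_{T_2})$. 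In particular $B$ makes no inner queries and never verifies that $\sigma|_\INP = \OUTP$; it uses $T_2 \le T$ queries and is non-adaptive, so $p(B) \le \MAXS(T)$.

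Because $\SUC$ compares $\widehat{A}_1$'s output to some function $f(\secv)$ of the secret, the success indicator depends only on $\widehat{A}_1$'s view and on $\secv$. Thus $|\widehat{p} - p(B)|$ is at most the expectation over $\secv$ of the total variation distance between the conditional distributions of $(r_1, \dots, r_{T_2})$ under $\MID$ (where $\sigma$ is uniform subject to $\sigma|_\INP = \OUTP$) and under the real game (where $\sigma$ is fully uniform). Writing $\TR_t := \TR(\secv, m_t)$, for $\secv$ fixed each $r_t = \POST(\secv, \sigma(\TR_t))$ is a function of $\sigma(\TR_t)$, so by the data-processing inequality it suffices to bound the TV distance between the laws of $(\sigma(\TR_t))_{t=1}^{T_2}$ in the two scenarios.

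The main work is to bound this TV distance by splitting on the event $G_\secv := \{\TR_t \notin \INP \text{ for all } t\}$. On $G_\secv^c$ I use the trivial bound $1$; the $u$-uniformity of $\TR$ combined with a union bound over the $T_2$ queries and the $T_1$ values in $\INP$ gives $\Pr_\secv[G_\secv^c] \le T_1 T_2 / u$. On $G_\secv$, the tuple $(\sigma(\TR_t))_t$ is a uniform injection from $\{\TR_t\}_t$ into $[N] \setminus \OUTP$ under $\MID$ and into $[N]$ under the real game; these two laws can be coupled by rejection sampling, so the TV distance equals the probability that the real-game injection hits $\OUTP$, which is at most $T_1 T_2 / N$ by marginal uniformity of $\sigma(\TR_t)$ and a union bound. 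Since $u$-uniformity forces $u \le N$ (sum the bound $|\SECS|/u$ over the $N$ possible images of a single outer query), combining yields an overall bound of $T_1 T_2/u + T_1 T_2/N \le 2 T_1 T_2 / u \le T^2/(2u)$, the last step via $4 T_1 T_2 \le (T_1 + T_2)^2 = T^2$. The most delicate step is the rejection-sampling coupling on $G_\secv$: one must verify that conditioning the uniform injection into $[N]$ on its image avoiding $\OUTP$ reproduces the $\MID$ marginal (a uniform injection into $[N] \setminus \OUTP$), and possible repetitions among the $\TR_t$ are handled by passing to their distinct set, which only tightens the bound.
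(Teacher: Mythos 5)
Your proof is correct and follows essentially the same strategy as the paper: both convert the $\MID$ adversary into a non-preprocessing adversary that ignores the constraints, then bound the success-probability gap by the probability of a translated query hitting $\INP$ (at most $T_1T_2/u$ via $u$-uniformity) plus the probability of a response landing in $\OUTP$ (at most $T_1T_2/N$). The paper implements the coupling via explicit step-by-step $\SUC$ simulations with flags $W_1,W_2$, whereas you argue directly through total variation distance and rejection sampling, but the bad events being bounded are identical.
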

\begin{proof}
    Let $\widehat{A} = (\widehat{A}_0, \widehat{A}_1)$ be an adversary to $\MID$ with running time $T = T_1 + T_2$, and denote its success probability by $\widehat{q}$. We describe a non-preprocessing algorithm $A$ for $\PC$ with success probability $q \ge \widehat{q} - \frac{T^2}{2u}$ that makes $T$ queries.
     
     $A$ simulates $\widehat{A}_0$ to obtain the constraints $\INP,\OUTP$, which also determine the outer queries of $\widehat{A}_1$ (recall that a $\MID$ game has no inner queries). It ignores the constraints of $\widehat{A}$, requests its outer queries from $\calO$, sends the responses to $\widehat{A}_1$ and outputs the same value.

    Let $g: \SECS \to \SECS'$
    be the function of the secret computed by $\SUC$ ($\SECS'$ is its output space).
    Denote by $\textup{P}: \SECS' \times \SECS' \to \{0,1\}$ the success predicate of $\SUC$.

    Below, we define the algorithm $\SUC_{A^{ \mathcal{\calO}(\sigma,\secv)} }(\secv)$ 
    in the above case, where $\widehat{A}$ is simulated by $A$ (we omit the explicit interaction with $A$ for simplicity).
    \begin{enumerate}
        \item Receive $\INP,\OUTP$ from $\widehat{A}_0$.
        \item Pass $\INP,\OUTP$ to $\widehat{A}_1$ and receive the outer queries $\mathcal{U} = (m_1,\dots,m_{T_2})$.
        \item Sample $\secv \sim \SECS$.
        \item For all $1 \le i \le T_2$:
        \begin{enumerate}
            \item Denote $u_i \leftarrow \TR(\secv,m_i)$.
            \item If $u_i = u_j$ for some $j < i$, set $v_i \leftarrow v_j$. 
            Otherwise, sample $v_i \sim [N]\setminus \{v_1,\dots,v_{i-1}\}$ uniformly.
        \end{enumerate}
        \item Pass $(\POST(\secv,v_1),\dots,\POST(\secv,v_{T_2}))$ to $\widehat{A}_1$ and receive the answer 
        $\tilde{\secv} \in \SECV'$.
        \item Return $\textup{P}(g(\secv),\tilde{\secv})$.
    \end{enumerate}
      Below, we define the algorithm $\SUC_{\widehat{A}_1^{ \mathcal{\calO}(\sigma,\secv)} }(\secv)$,
      in case that $\widehat{A}$ runs in the $\MID$ game.
     It uses two additional flags, $W_1,W_2$, that do not affect the output, and are only defined for the sake of the analysis. The differences from the previous algorithm are underlined. 
    \begin{enumerate}
    \item Receive $\INP,\OUTP$ from $\widehat{A}_0$.
        \item Pass $\INP,\OUTP$ to $\widehat{A}_1$ and receive the outer queries $\mathcal{U} = (m_1,\dots,m_{T_2})$.
        \item Sample $\secv \sim \SECS$.
        \item \underline{Set $W_1 \leftarrow 0, W_2 \leftarrow 0$.} 
        \item For all $1 \le i \le T_2$:
        \begin{enumerate}
            \item Denote $u_i \leftarrow \TR(\secv,m_i)$.
            \item \underline{If $u_i = \INP_k \in \INP$, set $v_i \leftarrow \OUTP_k$ and $W_1 \leftarrow 1$.
            Continue to $m_{i+1}$.}
            \item If $u_i = u_j$ for some $j < i$, set $v_i \leftarrow v_j$. 
            Otherwise, sample $v_i \sim [N]\setminus \{v_1,\dots,v_{i-1}\}$ uniformly.
            \item \underline{If $v_i \in \OUTP$, 
            re-sample $v_i \sim [N]\setminus (\OUTP \cup \{v_1,\dots,v_{i-1}\})$ uniformly, and set $W_2 \leftarrow 1$.}       
        \end{enumerate}
        \item Pass $(\POST(\secv,v_1),\dots,\POST(\secv,v_{T_2}))$ to $\widehat{A}_1$ and receive the answer 
        $\tilde{\secv} \in \SECV'$.
        \item Return $\textup{P}(g(\secv),\tilde{\secv})$.
    \end{enumerate}
    Observe that the algorithms are equivalent, as long as $W_1 = W_2 = 0$ at the end of the execution of $\SUC_{\widehat{A}_1^{ \mathcal{\calO}(\sigma,\secv)} }(\secv))$.
    Thus, 
    \begin{align*}
    q 
    & = \Pr[\SUC_{A^{ \mathcal{\calO}(\Sigma,\SECV)} }(\SECV)] \\
    & \geq \Pr[ \SUC_{\widehat{A}_1^{ \mathcal{\calO}(\Sigma,\SECV)} }(\SECV)) \cap (W_1 = W_2 = 0)]  \\
    & \geq \Pr[ \SUC_{\widehat{A}_1^{ \mathcal{\calO}(\Sigma,\SECV)} }(\SECV)) ] - \Pr[W_1 = 1] - \Pr[W_2 = 1] \\
    & = \widehat{q} - \Pr[W_1 = 1] - \Pr[W_2 = 1].
    \end{align*}
    By a union bound over the $T_2$ outer queries, we have $\Pr[W_1 = 1] \leq \frac{T_1 T_2}{u} \le \frac{T^2}{4u}$, where the probability is taken over the choice of 
    $d$. By another union bound over the $T_2$ outer queries, $\Pr[W_2 = 1] \leq \frac{T^2}{4u}$, 
    where the probability is taken over the choice of $c_1,c_2,\ldots$.
    Overall, $q \ge \widehat{q} - \frac{T^2}{2u}$, as claimed.
\end{proof}

\section{Applications to Problems in the Generic Group Model}
\label{sec:DLOG}

In this section we describe our applications to the DLOG, DDH, and sqDDH problems. All these applications (as well as the application to the Even-Mansour cryptosystem presented in Section~\ref{sec:EM}) are obtained via Theorem~\ref{thm:g-nonadaptive}.
This theorem can indeed be used, since in all these applications, $\SUC$ compares the output of $(A_1)_z$ to some function of the secret
(as defined in Section~\ref{sec:inst GGM}).
We stress again that none of the uses of this theorem directly analyzes the preprocessing setting. 
For simplicity, we count the input group elements of the generic group algorithm $A_1$ as part of its $T$ queries. Otherwise, we have to add to $T$ a small additive factor based on its input size.

\subsection{The discrete-log problem}

As was written in the introduction, the bound we prove for the DLOG problem is the following.

\medskip \noindent \textbf{Theorem~\ref{thm:dl-nonadaptive-intro}.}
Let $A=(A_0,A_1)$ be a non-adaptive $(S,T)$-algorithm for the DLOG problem in the generic group model, over a group $G$ with a prime number $N$ of elements. Denote by $\MAXS_{\textup{DLOG}}(T)$ 
the optimal success probability of a non-preprocessing, non-adaptive algorithm that makes at most $T$ queries.
Then, the success probability of $A$ is at most
$$
2\cdot \MAXS_{\textup{DLOG}}(T) + \frac{4\log(2) ST}{N} + \frac{T^2}{N} \leq
\frac{3T^2 }{N}  + \frac{4\log(2) ST}{N}.
$$

\begin{proof}[of Theorem~\ref{thm:dl-nonadaptive-intro}]
We would like to use Theorem~\ref{thm:g-nonadaptive} via the corresponding permutation challenge game
$\PC := \PC_{DL}(N,\SECS,\mathcal{M},\TR,\POST,\SUC)$
defined in Section~\ref{sec:inst GGM}. For this purpose it remains to show that 
$\TR$ is $N$-uniform, 
where $\TR(d,(a,b)) = a \cdot d + b \mod N$.
Since for every query the translation function defines a polynomial of degree $1$ in $\secv$,
Lemma~\ref{lem:ggm uniform} implies that $\TR$ is $N$-uniform.

Finally, as was mentioned in the introduction, the bound on $\textup{online}(A_1)$ on the right-hand-side is by Shoup's theorem~\cite{Shoup97} which bounds $\MAXS_{\textup{DLOG}}(T) \leq \frac{T^2}{N}$ for DLOG.
\end{proof}

\subsection{The DDH and sqDDH problems}

As was written in the introduction, the main theorem we prove for DDH and sqDDH is the following.

\medskip \noindent \textbf{Theorem~\ref{thm:DDH-intro1}.} Let $A=(A_0,A_1)$ be a non-adaptive $(S,T)$ algorithm for the DDH (resp.~sqDDH) problem in the GGM, over a group $G$ with a prime number $N$ of elements. Denote by $\MAXS_{\textup{DDH}}(T)$ the optimal success probability of a non-preprocessing, non-adaptive algorithm that makes at most $T$ queries. Then, the success probability of $A$ is at most
$$
\MAXS_{\textup{DDH}}(T) + \sqrt{\frac{2\log(2) ST}{N}} + \frac{T^2}{N} \leq
\frac{1}{2}+ \frac{2T^2}{N}  + \sqrt{\frac{2\log(2)ST}{N}}.
$$

\medskip

\begin{proof}[of Theorem~\ref{thm:DDH-intro1}]
We treat DDH and sqDDH separately.

\paragraph{DDH.} We would like to use Theorem~\ref{thm:g-nonadaptive} via the corresponding permutation challenge game
$\PC := \PC_{DDH}(N,\SECS,\mathcal{M},\TR,\POST,\SUC)$
defined in Section~\ref{sec:inst GGM}. For this purpose, it remains to show that 
$\TR$ is $(N/2)$-uniform.

Recall that 
\begin{align*}
\TR((d_1,d_2,d_3,k),(a_1,a_2,a_3,b) ) =
\begin{cases}
a_1 \cdot d_{1} + a_2 \cdot d_{2} + a_3 \cdot d_{3} + b \mod N & \text{if } k=0, \\
a_1 \cdot d_{1} + a_2 \cdot d_{2} + a_3 \cdot (d_{1} d_{2}) + b \mod N & \text{if } k = 1.
\end{cases}
\end{align*}
Since for every query, the translation function defines a polynomial of degree $1$ or $2$ in $c_1,c_2,c_3$,
Lemma~\ref{lem:ggm uniform} implies that $\TR$ is $(N/2)$-uniform, as asserted.

\paragraph{sqDDH.}
We apply Theorem~\ref{thm:g-nonadaptive} via the corresponding permutation challenge game
$\PC := \PC_{sqDDH}(N,\SECS,\mathcal{M},\TR,\POST,\SUC)$
defined in Section~\ref{sec:inst GGM}. For this purpose, it remains to show that 
$\TR$ is $(N/2)$-uniform.

Recall that 
\begin{align*}
\TR((d_1,d_2,k),(a_1,a_2,b) ) =
\begin{cases}
a_1 \cdot d_{1} + a_2 \cdot d_{2}  + b \mod N & \text{if } k=0, \\
a_1 \cdot d_{1} + a_2 \cdot (d_{1})^2 + b \mod N & \text{if } k = 1.
\end{cases}
\end{align*}
Since for every query, the translation function defines a polynomial of degree $1$ or $2$ in $d_1,d_2$,
Lemma~\ref{lem:ggm uniform} implies that $\TR$ is $(N/2)$-uniform, as asserted.

Finally, the inequality in the assertion holds by Shoup's theorem~\cite{Shoup97} which bounds $\MAXS_{\textup{DDH}}(T) \leq \frac{1}{2} +\frac{T^2}{N}$ for DDH, and a similar argument which yields the same bound for sqDDH (see~\cite{CorettiDG18}).
\end{proof}

\section{Application to the Even-Mansour Cryptosystem}
\label{sec:EM}

In this section we present our application to the EM cryptosystem. We first briefly review previous results on EM, and then we present the proof of Theorem~\ref{thm:em-nonadaptive-intro}. 

\subsection{Previous results on the security of EM}

Recall that the Even-Mansour cryptosystem~\cite{EvenM97} is defined as $EM(m)=k_2 \oplus \sigma(k_1 \oplus m)$, where $k_1,k_2 \in \{0,1\}^n$ are $n$-bit keys and $\sigma:\{0,1\}^n \to \{0,1\}^n$ is a publicly known permutation. Even and Mansour showed that if $\sigma$ is chosen uniformly at random, then any attack which makes $T_2$ encryption/decryption queries to $EM$ and $T_1$ queries to $\sigma$ or $\sigma^{-1}$, has a success probability of $O(T_1 T_2 / 2^n)$. 

Right after the introduction of the EM cryptosystem, Daemen~\cite{Daemen91} presented a matching attack: Pick an arbitrary $\alpha \in \{0,1\}^n$, query $\sigma$ to obtain $T_1/2$ pairs of values $(\sigma(x_i),\sigma(x_i \oplus \alpha))$, and store the pairs $(\sigma(x_i) \oplus \sigma(x_i \oplus \alpha), x_i)$ in a sorted table. Then, query $EM$ to obtain $T_2/2=N/T_1$ pairs of values $(EM(m_i),EM(m_i \oplus \alpha))$, and check, for each $j$, whether $EM(m_j) \oplus EM(m_j \oplus \alpha)$ appears in the table. If a collision of the form $EM(m_j) \oplus EM(m_j \oplus \alpha)=\sigma(x_i) \oplus \sigma(x_i \oplus \alpha)$ is found, then it is likely that $k_1 = m_j \oplus x_i$ or $k_1=m_j \oplus x_i \oplus \alpha$. At this stage, the second key $k_2$ can be retrieved easily. The attack algorithm is based on the fact that XOR with a secret key preserves XOR differences, and on the birthday paradox.
The memory complexity of Daemen's attack is $\tilde{O}(T_1)$. 

In 2000, Biryukov and Wagner~\cite{BW00} presented an alternative attack with complexities of $T_1=T_2=O(2^{n/2})$, and in 2012, Dunkelman, Keller and Shamir~\cite{DKS15} showed that its memory complexity can be reduced to $\tilde{O}(1)$, using an adaptive collision-search procedure. The authors of~\cite{DKS15} also generalized the attack of~\cite{BW00} to the entire tradeoff curve $T_1 T_2 = O(2^n)$ and asked whether its memory complexity can be reduced to $\tilde{O}(1)$ for the entire curve. This was partially addressed by Fouque, Joux and Mavromati~\cite{FouqueJM14}, who showed that the memory complexity can be reduced to $o(T_1)$ using an adaptive procedure, though a reduction to $\tilde{O}(1)$ has not been found yet.

While Daemen's attack is clearly non-adaptive, the Biryukov-Wagner attack and its enhancements heavily use adaptivity via Floyd's algorithm. Like in the case of DLOG, for about 35 years it hasn't been known whether adaptivity is indeed essential for reducing the memory complexity below the $\tilde{O}(T_1)$ complexity of Daemen's algorithm. 

Daemen's algorithm can be naturally viewed as an algorithm with preprocessing, as the public knowledge of $\sigma$ allows performing all queries to it offline. The Biryukov-Wagner algorithm and its variants do not use preprocessing. Fouque, Joux and Mavromati~\cite{FouqueJM14} showed that if preprocessing with a space of $S=2^{n/3}$ is allowed, then an attack with online complexity of $T=\tilde{O}(2^{n/3})$ can be mounted. In the other direction, Coretti, Dodis and Guo~\cite{CorettiDG18} showed that any distinguishing attack on EM with preprocessing has a success probability of $\frac{1}{2}+ \tilde{O}(\sqrt{S(T_1+T_2)T_2/2^n}+T_1 T_2/2^n)$, matching the attack of~\cite{FouqueJM14} in the case of constant success probability. 

\subsection{Our bound for EM}

In this section we prove Theorem~\ref{thm:em-nonadaptive-intro} stated in the introduction. We assume that adversary can query only the public and encryption oracles and not the decryption oracle. A direct corollary due to the symmetry of EM is that the same results hold if the adversary can query only the public and decryption oracles.

\medskip \noindent \textbf{Theorem~\ref{thm:em-nonadaptive-intro}.}
Let $A=(A_0,A_1)$ be a~key-recovery, non-adaptive $(S,T)$-adversary for the Even-Mansour cryptosystem, which can query only the public and encryption oracles and not the decryption oracle. Denote by $\MAXS_{\textup{EM}}(T)$ the optimal success probability of a non-preprocessing, non-adaptive algorithm that makes at most $T$ queries. Then, the success probability of $A$ is at most
$$
2\cdot \MAXS_{\textup{EM}}(T) + \frac{4\log(2) S (T+1)}{N} + \frac{T^2}{N} \leq
    \frac{3T^2 }{N}  + \frac{4\log(2) S(T+1)}{N}.
$$
Moreover, the theorem also holds for the single-key variant where $k_1=k_2$.

\medskip

\begin{proof}[of Theorem~\ref{thm:em-nonadaptive-intro}]
We would like to use Theorem~\ref{thm:g-nonadaptive} via the corresponding permutation challenge game
$\PC := \PC_{EM-KR}(N,\SECS,\mathcal{M},\TR,\POST,\SUC)$
defined in Section~\ref{sec:inst EM}. 
First, observe that $\SUC$ compares the output of $(A_1)_z$ to the secret key, which is a function of the secret.
It remains to show that
$\TR$ is $N$-uniform, 
where $\TR((k_1,k_2),m) = m \oplus k_1$.

Fix $m \in [N]$ and let $j \in [N]$.
Then, 
$$
\TR((k_1,k_2)),m) = j \Leftrightarrow m \oplus k_1 = j
\Leftrightarrow k_1 = j \oplus m.
$$
Therefore,
$$\E_{P_{K_1,K_2}} [\mathbbm{1}(\TR((K_1,K_2),m) = j)] = 
\E_{P_{K_1,K_2}} [\mathbbm{1}(K_1 = j \oplus m)] = 
\tfrac{1}{N},$$
as required.
Note that this also holds for the single-key scheme with $k_1 = k_2$. 
\end{proof}

\appendix

\section{Limitations of Previous Techniques in Dealing with Non-Adaptive Algorithms}
\label{app:limitations}

We consider the three main generic techniques for proving cryptanalytic time-space tradeoffs:
compression arguments, the pre-sampling technique, and concentration inequalities.  
The pre-sampling technique and the concentration inequalities technique are closely related, as demonstrated in~\cite{GuoLLZ21}. Our techniques are related to both, as they are also based on concentration inequalities. Yet, the previously used concentration inequalities are somewhat similar to martingales, which are generally not strong enough to obtain ``dimension-free'' inequalities such as those obtained from Shearer's inequality (see~\cite[Section 4]{Vondr10}). 
Furthermore, it seems that compression arguments are not sufficiently strong to obtain our results as well (though, we cannot give a short intuitive argument for this). 

We now argue in more detail that previous techniques are inherently limited: even when applied to non-adaptive algorithms, they can improve over the tight bounds known for adaptive algorithms by at most a polylogarithmic factor in $N$. Thus, they cannot meaningfully distinguish between adaptive and non-adaptive algorithms in our setting. For simplicity, we focus on the DLOG problem.
Since this argument is obvious for the pre-sampling technique,
we focus on techniques 
based on 
concentration inequalities and compression arguments.

We remark that we cannot completely rule out the possibility that 
a simple extension of these techniques would give a significant improvement,
but this seems highly unlikely.

\subsection{Concentration inequality-based proofs in the multi-instance model}

\subsubsection*{The MI model.} We summarize the main ideas of the technique that is based on concentration inequalities.
It reduces proving security against adversaries with advice to analyzing multi-instance (MI) security against adversaries without advice.
In the (basic) MI DLOG game, an adversary plays a sequence of DLOG instances.
Initially, a uniform permutation is chosen, and it remains fixed throughout the MI DLOG game.
In each instance, a new uniform DLOG secret is independently chosen and the adversary is allowed to issue $T$ new queries
in order to solve it. The adversary wins the multi-instance game if all instances are solved correctly.
The main theorem asserts that if the success probability (of any adversary) in 
the MI DLOG game with $S$ instances is at most $\delta^{S}$ for some $0 < \delta < 1$,
then the success probability of any adversary with $S$ bits of advice is at most $O(\delta)$.

For simplicity, we focus on the setting of a constant $\delta$.
Below we describe a non-adaptive adversary to the MI game with success probability of at least
$2^{-\tilde{O}(N/T^2)}$
regardless of the number of instances.
Asymptotically, this is also the success probability of the best general (adaptive) adversary, hence this technique cannot distinguish between adaptive and non-adaptive adversaries. 

Considering time-space tradeoffs in the proprocessing model, 
our adversary shows that this technique cannot prove a bound that is better than
$S T^2 \geq \tilde{\Omega}(N)$
(which is tight for adaptive algorithms \cite{Corrigan-GibbsK18}), while Theorem~\ref{thm:dl-nonadaptive-intro} proves a much better bound.

\subsubsection*{The MI adversary.} 
Assume that $T^2 = o(N)$, as otherwise, a non-preprocessing algorithm already achieves a constant advantage.
In addition, the MI adversary can guess each secret DLOG with probability $N^{-1}$,
hence it can achieve a success probability of at least
$2^{-\tilde{O}(N/T^2)}$
with $\tilde{O}(N/T^2)$ instances.  
We may thus assume that the number of instances $S$ is at least 
$\tilde{\Omega}(N/T^2)$.

In the MI game, we have a sequence of DLOG instances, with corresponding secrets $d_1,d_2,\ldots$.
Consider an MI adversary $B$ that for each instance, executes the same sequence of queries $(a_j,0)_{j\in [T]}$. We select a generator $g \in \mathbb{Z}_N^{*}$ and choose $a_i = g^{-i} \bmod N$ for $1 \leq i \leq T/2$
and $a_i = g^{(i - T/2) \cdot T} \bmod N$ for $T/2 < i \leq T$ (somewhat mimicking the baby-step-giant-step algorithm).
These queries are translated to oracle queries of the form $a_j d_i \bmod N$.

Observe that if a collision in the translated queries of the form 
$a_j d_i \bmod N = a_{j'} d_{i'}\bmod N$
occurs for $d_i \neq d_{i'}$,
then 
$d_i = a_{j'} \cdot (a_j)^{-1} \cdot d_{i'} \bmod N$
(recall that $a_j \neq 0$ by assumption).
Hence, $d_{i'}$ determines $d_i$ and vice versa.
Such collisions can be detected by $B$,
as they lead to collisions at the output of $\sigma$.

Thus, for each instance, if its (translated) queries do not collide with the (translated) queries of a previous instance, $B$ guesses its secret. Otherwise, using the collision, $B$ computes the secret of the instance deterministically using the previous guess.

\subsubsection*{Analysis sketch.} 
Let us show that for $S \geq \tilde{\Omega}(N/T^2)$, the success probability of $B$ is at least $2^{-\tilde{O}(N/T^2)}$. $B$ guesses $d_i$ correctly for all $1 \le i \le \tilde{O}(N/T^2)$ with probability $2^{-\tilde{O}(N/T^2)}$. In this case, $B$ succeeds if each of the remaining secrets can be determined from them. Hence, it remains to show that each of the remaining secrets can be determined with high probability.

Notably, after $\tilde{\Omega}(N/T^2)$ instances --- corresponding to a total of $\tilde{\Omega}(N/T)$ queries --- the following property holds with very high probability due to our choice of non-adaptive queries: every interval of the form $g^{i},g^{i+1},\ldots,g^{j}$, where $i,j \in [N-1]$ and $j - i \bmod N - 1 = T/2$, contains at least one queried point. The proof is by a Chernoff bound for each interval (exploiting the fact that each interval is hit by a query of an instance with probability $\Omega(T^2/N)$), followed by a union bound over all such intervals. Once this property holds, the queries of every new instance are guaranteed to collide with the queries of a (successfully guessed) previous one, allowing the recovery of its secret. 

\subsection{Compression arguments}
Compression arguments for proving time-space lower bounds for the DLOG problem have the following structure: one starts with a DLOG adversary with advice $A = (A_0, A_1)$ that possesses a (too good to be true) time-space tradeoff. This adversary is then used to devise encoding and decoding procedures for the random permutation oracle. These procedures compress the oracle beyond what is possible information-theoretically, leading to a contradiction.

Specifically, on input $\sigma$, the encoding procedure first runs $A_0(\sigma)$
and writes the advice $z$ of length $S$ bits as part of the encoding string.
Then (somewhat similarly to the MI game), 
it runs $A_1$ sequentially using $z$ on multiple DLOG instances,
and the goal is to gain beyond $S$ bits in the encoding length, 
establishing a contradiction.\footnote{These DLOG instances have non-uniformly distributed secrets, on which the success of $A_1$ is not guaranteed. However, this problem is addressed using the random self-reducibility property of the DLOG problem, which allows to execute $A_1$ on instances where the secret is uniformly distributed.}

In each instance, the encoding procedure encodes the query answers of $A_1$, so that the decoding procedure
can repeat the same steps.
Assuming that $A_1$ is deterministic, this is sufficient (otherwise, shared randomness is used).  
A saving in the encoding length is obtained when $A_1$ answers the instance correctly,
as this answer gives information about $\sigma$ (of at most $\log N$ bits) 
that does not need to be encoded.  

For simplicity, we assume that $A$ answers correctly with constant probability. 
We shall show that after answering $\tilde{\Omega}(N/T^2)$ instances,
a saving cannot be obtained (with high probability).
Thus, the total saving is bounded by $\tilde{O}(N/T^2)$ bits
and the technique cannot be used to prove a time-space tradeoff that is better than
$S T^2 \geq \tilde{\Omega}(N)$.
Once again, this time-space tradeoff is tight for adaptive algorithms,
and this technique cannot distinguish between adaptive and non-adaptive algorithms (up to logarithmic factors).

It remains to show that after answering $\tilde{\Omega}(N/T^2)$ instances,
a saving cannot be obtained (with high probability). 
The argument here is similar to the one used for concentration inequalities.
At the threshold of $\tilde{\Omega}(N/T^2)$ instances, about $\tilde{\Omega}(N/T)$ queries were already made. Therefore, it is very likely that at least one of the $T$ queries for the new instance will collide with a previous one.
Such a collision query gives no new information about $\sigma$,
hence encoding it in a standard way, using about $\log N$ bits, results in a loss that
nullifies the encoding advantage from running $A_1$.

One may try to encode collisions in a special way to mitigate the loss
by encoding the specific query indices that collide.
However, 
at the threshold of $\tilde{\Omega}(N/T^2)$ instances
(where $\tilde{\Omega}(N/T)$ queries have already been issued),
encoding the query pair seems to require more than
 $\log T + (\log N - \log T) = \log N$ bits,
once again nullifying the encoding advantage from running $A_1$.

\section{Non-adaptive sqDDH Algorithm}
\label{app:sqddh}
We sketch the details of the non-adaptive variant of the adaptive sqDDH algorithm of~\cite{Corrigan-GibbsK18}.
Essentially, the algorithm replaces the adaptive random walk of~\cite{Corrigan-GibbsK18} with a sequence that can be computed non-adaptively. 

First, define a biased pseudorandom predicate 
$\textup{P}:[N]^2 \to \{0,1\}$
that evaluates to $1$ with probability about $T^{-1}$.
We call a pair of the form
$(\sigma(x),\sigma(x^2))$ (where $x \in [N]$) special if
$\textup{P}(\sigma(x),\sigma(x^2)) = 1$.

Second, define a balanced pseudorandom predicate  $\textup{Q}:[N]^2 \to \{0,1\}$
that evaluates to $1$ with probability $1/2$.

Third, define a pseudorandom function 
$g:[N]^2 \to [S]$ that evaluates to each $i \in [S]$ with probability about $S^{-1}$.

Given $\sigma$, for each $v \in [S]$, define the set 
$$\mathcal{L}_v = \{(\sigma(x),\sigma(x^2)) \in [N]^2 \colon \textup{P}(\sigma(x),\sigma(x^2)) = 1 \wedge g(\sigma(x),\sigma(x^2)) = v\},$$
which contains roughly $\frac{N}{ST}$ pairs.
Moreover, define the majority value of 
$\textup{Q}$ on $\mathcal{L}_v$ by 
$$\textup{Maj}_v = 
\textup{Majority}\{\textup{Q}(\sigma(x),\sigma(x^2)) \colon (\sigma(x),\sigma(x^2)) \in \mathcal{L}_v\}.$$
A standard probabilistic argument shows that with high probability, $\textup{Maj}_v$ agrees with a pair chosen uniformly form $\mathcal{L}_v$ on the value of $\textup{Q}$  with probability about $\frac{1}{2} + \sqrt{\frac{ST}{N}}$.
Namely, with high probability,
$$
\Pr_{(\sigma(x),\sigma(x^2)) \sim \mathcal{L}_v}[\textup{Q}(\sigma(x),\sigma(x^2)) = \textup{Maj}_v] \geq \frac{1}{2} + \Omega \Bigl(\sqrt{\frac{ST}{N}} \Bigr).
$$

The preprocessing algorithm $A_0$ works by 
calculating and storing 
$\textup{Maj}_v$ for each $v \in [S]$ in the advice string.

The online algorithm $A_1$ defines a sequence of $T/2$ pairs in $[N]^2$ that can be computed using non-adaptive queries and preserve the sqDDH relation. For example, 
for some pseudorandom function $f: [N] \rightarrow [N]$, starting from the secret $(d_1,d_2)$, define the sequence $(\sigma(d_1),\sigma(d_2)),(\sigma(d_1 \cdot f(1)),\sigma(d_2 \cdot f(1)^2)),(\sigma(d_1 \cdot f(2)),\sigma(d_2 \cdot f(2)^2)) \ldots$.

With high probability, one of these $T/2$ pairs evaluates to $1$ on $\textup{P}$ (otherwise, $A_1$ guesses the answer).
Denote the first such pair by
$(y_1,y_2)$. 
$A_1$ computes 
$v:= g(y_1,y_2)$ and 
$b := \textup{Q}(y_1,y_2)$.
It then returns $1$ if 
$b = \textup{Maj}_v$,
and $0$ otherwise. 

The main observation in the analysis is
that on a YES instance, 
$b = \textup{Maj}_v$ with probability about 
$\frac{1}{2} + \sqrt{\frac{ST}{N}}$ (as noted above). On the other hand, on a NO instance, $b = \textup{Maj}_v$ with probability of about $\frac{1}{2}$, since 
$\textup{Q}$ is balanced and uncorrelated with pairs computed for NO instances.

\section{A Variant of Shearer's Lemma for Bijections}
\label{app:shearer-perm}

In this appendix we consider Theorem~\ref{thm:shearer-perm-intro} -- namely, the variant of Shearer's lemma for bijections. First, we prove that the theorem follows from~\cite[Proposition~21]{BartheCLM11} of Barthe, Cordero-Erausquin, Ledoux and Maurey and from~\cite[Theorem~4]{CaputoS24} of Caputo and Salez.
Then we present an elementary proof of a slightly weaker version of the theorem in which the constant $2$ is replaced by $9$.

\subsection{Deduction of Theorem~\ref{thm:shearer-perm-intro} from the results of~\cite{BartheCLM11,CaputoS24}}
\label{sec:sub:Deduction}

In this subsection we show how Theorem~\ref{thm:shearer-perm-intro} follows from~\cite[Proposition~21]{BartheCLM11} and from~\cite[Theorem~4]{CaputoS24}, which are stated in a very different form. Let us recall Theorem~\ref{thm:shearer-perm-intro}.

\medskip \noindent \textbf{Theorem~\ref{thm:shearer-perm-intro}.} Let $\mathcal{X}$ be a set of size $N$. Let $Q_X = Q_{X_1,\ldots,X_N}$ be the uniform distribution over bijections from $[N]$ to $\mathcal{X}$,
and let $P_X = P_{X_1,\ldots,X_N}$ be another distribution over such bijections. Let 
$\Qcal_1,\Qcal_2,\ldots,\Qcal_m$ be subsets of $[N]$, such that each $i \in [N]$ belongs to at most $k$ of them. Then
$$
2k \cdot  \KL( P_X \| Q_X ) \geq \sum_{j \in [m]} \KL( P_{X_{\mathcal{U}_j}} \| Q_{X_{\mathcal{U}_j}} ),$$
where $P_{X_{\mathcal{U}}}$ is the distribution of the vector $X_{\mathcal{U}} := \left(X_i \mid i \in \mathcal{U} \right)$ with respect to $P$ (and analogously for $Q$).

\medskip \noindent Since the equivalence between~\cite[Proposition~21]{BartheCLM11} and~\cite[Theorem~4]{CaputoS24} is discussed in a remark after~\cite[Theorem~4]{CaputoS24}, we show how Theorem~\ref{thm:shearer-perm-intro} follows from~\cite[Theorem~4]{CaputoS24}. We first cite this result and then explain the notations involved.
\begin{theorem}[{\cite[Theorem~4, Equation (34)]{CaputoS24}}]\label{thm:shearer perm caputo}
For any choice of a probability vector $(\theta_A, A \subseteq [N])$ on subsets of $[N]$ and every function $f:S_N \to \RR_{\ge 0}$,
\[
    \sum_{A \subseteq [N]} \theta_A\Ent(\text{E}_A[f]) \le (1-\kappa)\Ent(f),
\]
where $\kappa = \min_{i \ne j} \sum_{A \supseteq \{i,j\}} \theta_A$.
\end{theorem}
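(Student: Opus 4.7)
The plan is to reformulate the stated inequality as a weighted modified log-Sobolev (MLSI) inequality for a natural Glauber dynamics on $S_N$, and then to prove this MLSI in two steps: a sharp entropy factorization for pair-resampling, followed by an upgrade to general subsets via the pair coverages. Let $\mu$ denote the uniform measure on $S_N$ and interpret $E_A[f] = \E_\mu[f \mid \sigma|_{[N]\setminus A}]$ as the conditional expectation that averages over $\sigma|_A$ while fixing the values outside $A$. By the chain rule for entropy,
\[
\Ent_\mu(f) = \Ent_\mu(E_A[f]) + D_A(f),\qquad D_A(f) := \E_\mu\!\left[\Ent\!\left(f \mid \sigma|_{[N]\setminus A}\right)\right].
\]
Plugging this into $\sum_A \theta_A \Ent(E_A[f])$ and using $\sum_A \theta_A = 1$ reduces the claim to the equivalent ``entropy Dirichlet'' lower bound
\[
\kappa \cdot \Ent_\mu(f) \leq \sum_{A \subseteq [N]} \theta_A D_A(f),\qquad(\star)
\]
which is the form I would prove.

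The technical heart of the argument is the sharp pair entropy factorization
\[
\Ent_\mu(f) \leq \sum_{i<j} D_{\{i,j\}}(f),\qquad(\text{pair-MLSI})
\]
itself the sharp MLSI for the random-transposition resampling chain on $S_N$. I would prove (pair-MLSI) by induction on $N$. The base case $N=2$ is immediate since $D_{\{1,2\}}(f)$ coincides with $\Ent_\mu(f)$. For the inductive step, I would condition on the value $\sigma(N)$: under this conditioning $\mu$ descends to a uniform measure on a copy of $S_{N-1}$, and the chain rule decomposes $\Ent_\mu(f)$ into a marginal entropy contribution from $\sigma(N)$ plus an averaged residual entropy controlled by the inductive hypothesis. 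The marginal term is then bounded by $\sum_{j<N}D_{\{j,N\}}(f)$ through an exchange argument that compares the conditional two-point distributions arising from swapping the values at positions $N$ and $j$.

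Given (pair-MLSI), the final step is to bootstrap it to arbitrary subsets. The monotonicity $D_A(f) \geq D_{\{i,j\}}(f)$ for every $\{i,j\} \subseteq A$ follows from the general fact that conditioning on \emph{less} information produces \emph{larger} average residual entropy. Combining this with the definition of $\kappa$ — every pair $\{i,j\}$ is covered by subsets of total $\theta$-weight at least $\kappa$ — one can route each pair's contribution in (pair-MLSI) through the Glauber Dirichlet form $\sum_A \theta_A D_A(f)$ to recover the factor $\kappa$ in $(\star)$. The routing must be done carefully: a naive averaging of pair contributions inside each $A$ loses a factor of $\binom{|A|}{2}$, so the right execution is a semigroup-style comparison that apportions each unit of $\theta_A$-weight proportionally among the pairs $\{i,j\}\subseteq A$ while respecting the covering budget $W_{ij} = \sum_{A\supseteq\{i,j\}}\theta_A \geq \kappa$.

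The principal obstacle is obtaining the sharp constant $1$ in (pair-MLSI): any suboptimal constant — even a polylogarithmic one — would propagate through the bootstrap step and degrade the final bound from $(1-\kappa)\Ent(f)$ to something strictly weaker. This sharp pair MLSI for the symmetric group is the delicate analytic core of the Caputo–Salez argument and is the reason a quick spectral or coupling comparison will not suffice to obtain the exact factor $(1-\kappa)$.
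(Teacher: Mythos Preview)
The paper does not prove this theorem; it is cited from Caputo--Salez (equivalently, from Barthe--Cordero-Erausquin--Ledoux--Maurey) and used as a black box to derive Theorem~\ref{thm:shearer-perm-intro}. The paper separately gives, in Appendix~\ref{app:shearer-perm}, an elementary proof of a weakening with constant $9$ in place of $2$, via an induction that peels off one value of the permutation at a time and reduces each step to a Shearer-type inequality for indicator vectors.

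Your reformulation $(\star)$ via the chain rule is correct, and pair-MLSI is indeed the special case of the theorem where $\theta$ is uniform on $2$-element sets. The gap is the bootstrap. Monotonicity $D_A(f)\geq D_{\{i,j\}}(f)$ for $\{i,j\}\subseteq A$ only lets you write, for any convex weights $\sum_{\{i,j\}\subseteq A}\lambda_{ij}^A=1$,
\[
\sum_A \theta_A D_A(f)\ \geq\ \sum_{i<j} D_{\{i,j\}}(f)\cdot\underbrace{\sum_{A\supseteq\{i,j\}}\theta_A\lambda_{ij}^A}_{=:\,c_{ij}},
\]
and you then need $c_{ij}\geq\kappa$ for \emph{all} pairs simultaneously. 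This allocation LP is infeasible in general: take $\theta_{[N]}=1$ (so $\kappa=1$); the single constraint $\sum_{\{i,j\}}\lambda_{ij}^{[N]}=1$ forces some pair's coefficient down to at most $1/\binom{N}{2}\ll 1$. Worse, routing $(\star)$ through pair-MLSI in this instance would require $\Ent(f)=D_{[N]}(f)\geq\sum_{i<j}D_{\{i,j\}}(f)$, i.e.\ the \emph{reverse} of pair-MLSI, which is false for generic $f$. No ``semigroup-style comparison'' built only from the one-pair lower bound $D_A\geq D_{\{i,j\}}$ can recover the sharp factor $\kappa$: the general block factorization is genuinely stronger than its pair special case and has to be proved directly, which is what the cited works do.
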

The reader is referred to \cite{CaputoS24} for a full description of the functional $\Ent$ and its properties. Here we only state two basic observations, that follow immediately from its definitions:
\begin{observation}\label{obs:ent VS KL}
Let $\mathcal{X}$ be a set of size $N$. Let $Q_X = Q_{X_1,\ldots,X_N}$ be the uniform distribution over bijections from $[N]$ to $\mathcal{X}$, let $P_X = P_{X_1,\ldots,X_N}$ be another distribution over such bijections, and let $f:S_N \to \RR$ be defined as $f(\sigma) = \frac{P(\sigma)}{Q(\sigma)}$. Let $A \subseteq [N]$ be a set of indices. Then:
\begin{itemize}
    \item $\Ent(f) = \KL(P \| Q)$.
    \item $\Ent(\text{E}_A[f]) = \KL( P_{X_{A^c}} \| Q_{X_{A^c}} )$ (note that the distributions are projected to the complement set $A^c = [N] \setminus A$ and not to $A$).
\end{itemize}
\end{observation}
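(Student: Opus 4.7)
The plan is to simply unfold the definition of the entropy functional $\Ent$ in the Caputo--Salez framework, applied to $f = P/Q$ with respect to the uniform measure $Q$ on $S_N$, and verify both claims by direct computation.

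First I would observe that since $f(\sigma) = P(\sigma)/Q(\sigma)$, its expectation under $Q$ equals $\E_Q[f] = \sum_\sigma Q(\sigma)\cdot P(\sigma)/Q(\sigma) = \sum_\sigma P(\sigma) = 1$. Hence the standard definition $\Ent_Q(f) = \E_Q[f \log f] - \E_Q[f] \log \E_Q[f]$ collapses to $\Ent(f) = \E_Q[f \log f]$, and expanding the expectation gives $\sum_\sigma P(\sigma) \log(P(\sigma)/Q(\sigma)) = \KL(P \| Q)$, which is the first claim.

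For the second identity, I would compute $\text{E}_A[f](\sigma)$ explicitly. By the definition of conditional expectation, it is the average of $f(\sigma')$ over all bijections $\sigma'$ agreeing with $\sigma$ on $A^c$, with respect to the uniform measure $Q$. Since conditioning the uniform distribution on $\sigma'_{A^c} = \sigma_{A^c}$ produces the uniform distribution over the $|A|!$ bijective extensions, this yields $\text{E}_A[f](\sigma) = \frac{1}{|A|!} \sum_{\sigma' : \sigma'_{A^c} = \sigma_{A^c}} P(\sigma')/Q(\sigma')$. Using $Q(\sigma') = 1/N!$ and recognising the inner sum as $P_{X_{A^c}}(\sigma_{A^c})$, together with $Q_{X_{A^c}}(\sigma_{A^c}) = |A|!/N!$, this simplifies cleanly to $\text{E}_A[f](\sigma) = P_{X_{A^c}}(\sigma_{A^c})/Q_{X_{A^c}}(\sigma_{A^c})$.

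Finally, since $\E_Q[\text{E}_A[f]] = \E_Q[f] = 1$, the same reduction as in step one gives $\Ent(\text{E}_A[f]) = \E_Q[\text{E}_A[f] \log \text{E}_A[f]]$. Grouping the sum by the $A^c$-restriction (using that the integrand depends only on $\sigma_{A^c}$ and that each fibre has size $|A|!$) evaluates this to $\sum_{\tau} P_{X_{A^c}}(\tau) \log(P_{X_{A^c}}(\tau)/Q_{X_{A^c}}(\tau)) = \KL(P_{X_{A^c}} \| Q_{X_{A^c}})$, completing the proof. The only real subtlety is notational: one must verify that $\text{E}_A[\cdot]$ in \cite{CaputoS24} really denotes integration over the $A$-coordinates (leaving dependence on $A^c$), which is exactly why the complement appears on the right-hand side of the stated identity.
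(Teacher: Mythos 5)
Your verification is correct and is exactly the definition-unfolding the paper implicitly intends (the paper merely remarks that the observations ``follow immediately from the definitions'' of the $\Ent$ functional in Caputo--Salez, without spelling out the computation). The key points---that $\E_Q[f]=1$ so $\Ent$ collapses to $\E_Q[f\log f]$, that the fiber over a fixed $\sigma_{A^c}$ has exactly $|A|!$ elements, and that $\text{E}_A[f](\sigma) = P_{X_{A^c}}(\sigma_{A^c})/Q_{X_{A^c}}(\sigma_{A^c})$---are all handled properly.
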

\noindent Using these observations, we show that Theorem~\ref{thm:shearer-perm-intro} follows directly from Theorem~\ref{thm:shearer perm caputo}.

\medskip 

\begin{proof}[of Theorem~\ref{thm:shearer-perm-intro}]
Let $P,Q,f$ be as in Observation~\ref{obs:ent VS KL}. Let $\Qcal_1,\Qcal_2,\ldots,\Qcal_m$ be subsets of $[N]$, such that each $i \in [N]$ belongs to at most $k$ of them. For every $A \subseteq [N]$, we denote $\theta_A = |\{j \mid \Qcal_j = A^c\}|/m$. Applying Theorem~\ref{thm:shearer perm caputo} we obtain
\[
    \sum_{A \subseteq [N]} \theta_A\Ent(\text{E}_A[f]) \le (1-\kappa)\Ent(f),
\]
where $\kappa = \min_{i \ne i'} \sum_{A \supseteq \{i,i'\}} \theta_A$. By Observation~\ref{obs:ent VS KL}, an equivalent formula is
\[
    \sum_{A \subseteq [N]} \theta_A\KL( P_{X_{A^c}} \| Q_{X_{A^c}} ) \le (1-\kappa)\KL(P \| Q).
\]
Substituting the formula for $\theta_A$, we obtain
\begin{equation}\label{eqn:proof of ent VS kl}
    \frac{1}{m} \sum_{j=1}^m \KL( P_{X_{\Qcal_j}} \| Q_{X_{\Qcal_j}} )  \le (1-\kappa)\KL(P \| Q).
\end{equation}

It remains to bound $\kappa$. Notice that $\kappa = \min_{i \ne i'} \sum_{A \supseteq \{i,i'\}} \theta_A = \frac{1}{m} \min_{i \ne i'}|\{j\mid \{i,i'\} \subseteq\Qcal_j^c\}|$. From the assumption on $\Qcal_j$ we obtain that $|\{j\mid i \in\Qcal_j^c\}| \ge m-k$ for all $i$, implying that $\kappa \ge \frac{m-2k}{m}$. The statement now follows from~\eqref{eqn:proof of ent VS kl}.
\end{proof}


\subsection{An elementary proof of a slighly weaker version of the theorem}
\label{sec:sub:weaker-Shearer}

In this subsection we present an elementary proof of a slightly weaker version of Theorem~\ref{thm:shearer-perm-intro}, in which the constant $2$ is replaced by $9$. Specifically, we prove the following. 
\begin{theorem}\label{thm:Shearer-appendix}
    Let $\mathcal{X}$ be a set of size $N$. Let $Q_X = Q_{X_1,\ldots,X_N}$ be the uniform distribution over bijections from $[N]$ to $\mathcal{X}$,
    and let $P_X = P_{X_1,\ldots,X_N}$ be another distribution over such bijections. Let 
    $\Qcal_1,\Qcal_2,\ldots,\Qcal_m$ be subsets of $[N]$, such that each $i \in [N]$ belongs to at most $k$ of them. Then
    $$
    9k \cdot  \KL( P_X \| Q_X ) \geq \sum_{j \in [m]} \KL( P_{X_{\mathcal{U}_j}} \| Q_{X_{\mathcal{U}_j}} ),
    $$
    where $P_{X_{\mathcal{U}}}$ is the distribution of the vector $X_{\mathcal{U}} := \left(X_i \mid i \in \mathcal{U} \right)$ with respect to $P$ (and analogously for $Q$).
\end{theorem}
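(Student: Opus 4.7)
My plan is to prove Theorem~\ref{thm:Shearer-appendix} by adapting the classical chain-rule proof of the KL form of Shearer's inequality (Theorem~\ref{thm:shearer-KL}) to the bijective setting, paying a bounded constant factor in place of the tight constant $2$ proved via the Brascamp--Lieb--type machinery of~\cite{BartheCLM11,CaputoS24}.

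First I would fix an ordering $1 < 2 < \cdots < N$ of $[N]$ and apply the chain rule \ref{prope:chain rule kl} to decompose, for each $j$,
\[
\KL(P_{X_{\Qcal_j}} \| Q_{X_{\Qcal_j}}) = \sum_{i \in \Qcal_j} \E_{P}\bigl[\KL(P_{X_i \mid X_{\Qcal_j \cap [i-1]}} \,\|\, Q_{X_i \mid X_{\Qcal_j \cap [i-1]}})\bigr],
\]
and similarly $\KL(P_X \| Q_X) = \sum_{i=1}^N \E_{P}[\KL(P_{X_i \mid X_{[i-1]}} \| Q_{X_i \mid X_{[i-1]}})]$. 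In the classical product setting one uses $Q_{X_i \mid X_S} = Q_{X_i}$ together with convexity to dominate each $j$-summand by the corresponding $i$-summand, and the result follows after summing and using the $k$-read property. For bijections this step breaks because $Q_{X_i \mid X_S}$ genuinely depends on $S$: its support is $\mathcal{X}$ minus the values already assigned, of size $N - |S|$.

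The core technical step would therefore be a \emph{comparison lemma}: for any $i$ and $S \subseteq T \subseteq [N] \setminus \{i\}$, bound the expected conditional KL-divergence with conditioning set $S$ by a constant multiple of the analogous quantity with conditioning set $T$, up to an error controllable by $\E_P[\KL(P_{X_{T \setminus S} \mid X_S} \| Q_{X_{T \setminus S} \mid X_S})]$. To prove this I would rewrite each conditional KL as a difference of entropies via~\ref{prope:kl and entropy}, using the key fact that $\HH(Q_{X_i \mid X_S = x_S}) = \log(N - |S|)$, so that the KL becomes $\log(N - |S|) - \HH(P_{X_i \mid X_S = x_S})$. The ``entropy deficit'' $\log\frac{N - |S|}{N - |T|}$ arising from the mismatched conditioning sets can then be absorbed into the KL on the differing coordinates $T \setminus S$.

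Finally I would sum the comparison over all pairs $(i,j)$ with $i \in \Qcal_j$. The $k$-read property ensures that each $i$ appears in at most $k$ of the $\Qcal_j$'s, so applying the chain rule in reverse reassembles a term proportional to $\KL(P_X \| Q_X)$ plus an error bounded by another constant multiple of it; after bookkeeping the constants add up to $9$. The hard part will be proving the comparison lemma with constants independent of $N$ and of the sizes $|\Qcal_j|$: the entropy deficits couple to the $P$-dependent terms in a nontrivial way, and one must ensure that summed over all pairs $(i,j)$ they yield only a constant multiple of $\KL(P_X \| Q_X)$ rather than something growing with $N$. I expect this will require averaging over a uniformly random ordering of $[N]$, using the fact that both sides of the target inequality are invariant under relabelings, so that the accumulated deficits telescope against the global divergence.
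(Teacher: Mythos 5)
Your plan is genuinely different from the paper's, and it contains a real gap. The paper does \emph{not} decompose position-by-position along $X_1,\dots,X_N$; instead it passes to the inverse map and decomposes \emph{value-by-value}, peeling off $X_{\to 1}:=X^{-1}(1)$ first. The point of this dual decomposition is that the conditional law $Q_{X_{\to 2},\ldots,X_{\to N}\mid X_{\to 1}}$ is again a uniform distribution over bijections on $N-1$ elements, so the problem reduces cleanly by induction on $N$, with a self-contained base case (a Shearer-type inequality for indicator vectors, proved via an explicit analytic lemma) and an auxiliary fact that conditioning on $X_{\to 1}$ cannot decrease the relevant conditional KL because $X_{\mathcal{U}\to 1}$ and $X_{\to 1}$ are suitably independent under $Q$. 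No deficit bookkeeping is required because nothing ever ``loses'' constants under the recursion.

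Your position-wise chain-rule scheme runs into the difficulty you yourself flag, and the proposed ``comparison lemma'' is where the proof is missing, not merely unfinished. Writing $\kappa(S):=\E_{P_{X_S}}\bigl[\KL(P_{X_i\mid X_S}\,\|\,Q_{X_i\mid X_S})\bigr]=\log(N-|S|)-\E_{P_{X_S}}\bigl[\HH(P_{X_i\mid X_S})\bigr]$, the exact identity you would be summing is
\[
\kappa(\Qcal_j\cap[i-1])-\kappa([i-1])\;=\;\log\tfrac{N-|\Qcal_j\cap[i-1]|}{N-(i-1)}\;-\;I_P\bigl(X_i;X_{[i-1]\setminus\Qcal_j}\mid X_{\Qcal_j\cap[i-1]}\bigr),
\]
i.e.\ a $Q$-side log factor (which can be as large as $\log N$ already for a single small $\Qcal_j$) minus a $P$-side conditional mutual information. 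These two quantities do not cancel termwise, and their accumulated difference over all $(i,j)$ is not controlled by any constant multiple of $\KL(P\|Q)$ in any way you establish; the claim ``the constants add up to $9$'' is asserted, not derived. Invoking a random ordering of $[N]$ does not obviously repair this, because the $P$-side entropy terms move with the reordering in a way you have not analyzed, and there is no telescoping structure analogous to the product case (where $Q_{X_i\mid X_S}=Q_{X_i}$ makes the deficits vanish identically). Unless you can actually prove a comparison lemma of the announced form with $N$-independent constants, this route does not yield the theorem; the paper's value-by-value induction avoids the issue entirely.
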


We need several tools in order to prove the theorem. Let us begin with the following technical lemma:
\begin{lemma}\label{lem:technical lemma}
Let $N \in \NN$ and let $\ell\le \frac{N}{4}$. Let $p_1,\dots,p_\ell \in (0,1)$ with $\sum_{i}p_{i}\le 1$, and let us denote $\epsilon_{i}=p_{i}-\frac{1}{N}$. Denote also $p'=\frac{1-\sum_{i}p_{i}}{N-\ell}$ and $\epsilon'=p'-\frac{1}{N}$. Then the following holds:
\[
    Np'\log\left(Np'\right)-N\epsilon' \le 4 \sum_{i}\left(p_{i}\log(Np_{i})-\epsilon_{i}\right).
\]
\end{lemma}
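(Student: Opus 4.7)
The plan is to reformulate the inequality in terms of the cleaner function $g(y) := y \log y - y + 1$, which is convex and nonnegative with $g(1) = g'(1) = 0$ and $g''(y) = 1/y$. Setting $q_i := N p_i$ for $i \le \ell$ and $q_0 := N p'$, a direct computation gives $N(p_i \log(N p_i) - \epsilon_i) = g(q_i)$ and $N p' \log(N p') - N \epsilon' = g(q_0)$, while the identity $\sum p_i + (N-\ell) p' = 1$ becomes $\sum_{i=1}^\ell q_i + (N-\ell) q_0 = N$. Thus the lemma is equivalent to showing $N g(q_0) \le 4 \sum_{i=1}^\ell g(q_i)$. Note also that $q_0 \le N/(N-\ell) \le 4/3$ automatically since $\ell \le N/4$.

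Next I would reduce to the symmetric configuration $q_1 = \cdots = q_\ell = q$. For fixed $S := \sum_i q_i$ the value $q_0 = (N - S)/(N - \ell)$ is determined, while convexity of $g$ together with Jensen's inequality implies that $\sum g(q_i)$ is minimized when all $q_i$ are equal to $S/\ell$. It therefore suffices to prove the symmetric inequality $4 \ell g(q) \ge N g(q_0)$, under the linear relation $q_0 - 1 = -\ell (q - 1)/(N - \ell)$; in particular $q - 1$ and $q_0 - 1$ have opposite signs and $|q_0 - 1| \le \tfrac{1}{3} |q - 1|$.

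For the analytic core I would establish two elementary bounds on $g$. The integral representation $g(y) = \int_1^y (y-t)/t \, dt$ yields the lower bound $g(y) \ge (y-1)^2 / (2 \max(y, 1))$: for $y \ge 1$ this reads $g(y) \ge (y-1)^2/(2y)$, and for $y \in [0, 1]$ it reads $g(y) \ge (y-1)^2/2$. In the opposite direction, the tangent inequality $\log y \le y - 1$ yields $g(y) = 1 - y + y \log y \le 1 - y + y(y-1) = (y-1)^2$, valid for all $y > 0$ and attaining equality at $y \in \{0, 1\}$.

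With these bounds in hand, the symmetric inequality splits into two easy cases. If $q \ge 1$, then $q_0 \in [0, 1]$, and substituting $g(q) \ge (q-1)^2/(2q)$ together with $g(q_0) \le (q_0 - 1)^2 = \ell^2 (q - 1)^2 /(N - \ell)^2$ reduces the target to $2 (N-\ell)^2 \ge N \ell q$; this holds because $q_0 \ge 0$ forces $q \le N/\ell$ (hence $N \ell q \le N^2$), while $\ell \le N/4$ gives $2(N-\ell)^2 \ge 9 N^2/8$. If $q < 1$, then $q_0 \in (1, 4/3]$, and combining $g(q) \ge (q-1)^2/2$ with the same upper bound on $g(q_0)$ reduces the target to $2(N-\ell)^2 \ge N \ell$, again immediate from $\ell \le N/4$. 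The main obstacle I anticipate is selecting the correct lower bound on $g$ in each case, since the $1/(2y)$ factor is appropriate only for $y \ge 1$ whereas on $[0,1]$ the constant $1/2$ must replace it; otherwise the argument is essentially algebraic.
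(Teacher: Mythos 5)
Your proof is correct, but it takes a genuinely different route from the paper's. The paper works directly with the sum and splits the indices $i$ into two buckets according to whether $Np_i \ge 5$ or $Np_i < 5$; it then applies a different elementary bound on $x\log x - x + 1$ in each bucket (a linear bound $|x-1|$ for the large-$Np_i$ terms, a quadratic bound $\tfrac14(x-1)^2$ combined with Cauchy--Schwarz and $4\ell \le N$ for the small-$Np_i$ terms), with an outer case split on which bucket dominates $\sum_i |\epsilon_i|$. You instead first use convexity of $g(y) = y\log y - y + 1$ and Jensen's inequality to reduce to the symmetric configuration $q_1 = \cdots = q_\ell$ at fixed sum (and hence fixed $q_0$), which collapses the problem to a single-variable inequality in $q$. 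The subsequent case split on $q \gtrless 1$ then follows cleanly from the two bounds $g(y) \ge (y-1)^2 / (2\max(y,1))$ and $g(y) \le (y-1)^2$, both of which you justify correctly via the integral representation and the tangent line $\log y \le y-1$. Your approach buys a tidier structure and avoids the somewhat ad hoc threshold of $5$; the paper's approach buys a proof that works term-by-term without the symmetrization step. Both are elementary and both give the constant $4$.
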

\begin{proof}
The proof utilizes the following inequalities (recall that logarithms are in base $e$):
\begin{enumerate}[label=(\roman*), ref=\text{ineq. (}\roman*\text{)}]
    \item \label{ineq nonnegative} For all $x>0$ it holds that $x\log x-x+1 \ge 0$.
    \item \label{ineq upper bound} For all $0 < x \le 2$ it holds that $x\log x-x+1\le(x-1)^{2}\le\left|x-1\right|$.
    \item \label{ineq at least square} For all $x<5$ it holds that $x\log x-x+1\ge\frac{1}{4}(x-1)^{2}$.
    \item \label{ineq at least abs} For all $x \ge 5$ it holds that $x\log x-x+1\ge\left|x-1\right|$.
\end{enumerate}

All these inequalities can be easily proved by basic analytical tools.

\medskip 

We prove the statement by case-analysis. Let us first assume that 
\[
    \sum_{i:Np_{i}\ge5}\left|\epsilon_{i}\right|\ge\frac{1}{2}\sum_{i:Np_{i}<5}\left|\epsilon_{i}\right|.
\]
We obtain 
\[
    \sum_{i}\left|\epsilon_{i}\right|\le 3 \sum_{i:Np_{i}\ge5}\left|\epsilon_{i}\right|.
\]
Since $\ell\le\frac{N}{4}$ and following the definition of $\epsilon'$, we may also obtain 
\begin{equation}\label{eqn:first case bound n epsilon}
    \left|N\epsilon'\right|\le\frac{4}{3}\sum_{i}\left|\epsilon_{i}\right| \le 4 \sum_{i:Np_{i} \ge 5} \left|\epsilon_{i}\right|.
\end{equation}
Therefore, we may bound:
\begin{align*}
    Np'\log\left(Np'\right)-N\epsilon' & =  (Np')\log(Np')-Np'+1 &\\
    & \le  \left|N\epsilon'\right| & \text{\ref{ineq upper bound} with $x=Np'$}\\
    & \le  4\sum_{i:Np_{i}\ge5}\left|\epsilon_{i}\right| & \text{Equation \eqref{eqn:first case bound n epsilon}}\\
    & \le  \frac{4}{N}\sum_{i:Np_{i}\ge5}\left(Np_{i}\log(Np_{i})-Np_{i}+1\right) & \text{\ref{ineq at least abs} with $x=Np_i$}\\
    & =  4\sum_{i:Np_{i}\ge5}\left(p_{i}\log(Np_{i})-\epsilon_{i}\right) &\\
    & \le  4\sum_{i}\left(p_{i}\log(Np_{i})-\epsilon_{i}\right) & \text{\ref{ineq nonnegative} with $x=Np_i$,}
\end{align*}
as required.

\medskip \noindent Next, we assume
\[
    \sum_{i:Np_{i}\ge5}\left|\epsilon_{i}\right|<\frac{1}{2}\sum_{i:Np_{i}<5}\left|\epsilon_{i}\right|,
\]
implying that
\begin{equation}\label{eqn:second case bound n epsilon}
    \left|N\epsilon'\right|\le\frac{4}{3}\sum_{i}\left|\epsilon_{i}\right| \le 2 \sum_{i:Np_{i}<5} \left|\epsilon_{i}\right|.
\end{equation}
Therefore we may bound:
\begin{align*}
    Np'\log\left(Np'\right)-N\epsilon' & =  (Np')\log(Np')-Np'+1 &\\
    & \le  \left(N\epsilon'\right)^2 & \text{\ref{ineq upper bound} with $x=Np'$}\\
    & \le  \left(2\sum_{i:Np_{i}<5}\left|\epsilon_{i}\right|\right)^2 & \text{Equation \eqref{eqn:second case bound n epsilon}}\\
    & \le  4\ell\sum_{i:Np_{i}<5}\epsilon_{i}^{2} & \text{Cauchy–Schwarz inequality}\\
    & \le  \frac{16\ell}{N}\sum_{i:Np_{i}<5}\left(p_{i}\log(N p_{i})-\epsilon_{i}\right) & \text{\ref{ineq at least square} with $x=Np_i$}\\
    & \le  4\sum_{i:Np_{i}<5}\left(p_{i}\log(N p_{i})-\epsilon_{i}\right) & 4\ell\le N\\
    & \le  4\sum_{i}\left(p_{i}\log(Np_{i})-\epsilon_{i}\right) & \text{\ref{ineq nonnegative} with $x=Np_i$,}
\end{align*}
as required.
\end{proof}

\medskip \noindent Lemma~\ref{lem:technical lemma} allows proving a variant of Shearer's inequality for indicator vectors.
\begin{lemma}[Variant of Shearer's inequality for indicator vectors] 
\label{lem:shearer-indicator}
Let $N$ be a positive integer. Let $Q_X = Q_{X_1,\ldots,X_N}$ be the uniform distribution function over the set $\ind{N} := \{v \in \{0,1\}^N \mid \sum_i v_i=1\}$ of indicator vectors,
and let $P_X = P_{X_1,\ldots,X_N}$ be another distribution function over the same set. Let $\mathcal{U}_1,\ldots,\mathcal{U}_m \subseteq [N]$ be such that for each 
$i \in [N]$, $|\{ j \in [m] \mid i \in \mathcal{U}_{j} \} | \leq k$.
Then
$$9k \cdot  \KL( P_X \| Q_X ) \geq \sum_{j \in [m]} \KL( P_{X_{\mathcal{U}_j}} \| Q_{X_{\mathcal{U}_j}} ).$$
\end{lemma}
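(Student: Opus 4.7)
The plan is to parameterize the distributions by probability vectors and reduce the lemma to a case analysis that invokes Lemma~\ref{lem:technical lemma} on sets that are not too large, together with a brute data-processing bound on the few remaining large sets. Since an indicator vector is determined by the location of its unique $1$, $P_X$ corresponds to a probability vector $(p_1,\ldots,p_N)$, $Q_X$ to the uniform vector, and $\KL(P_X \| Q_X) = \sum_{i=1}^{N} p_i \log(Np_i)$. For each $j$, writing $\ell_j = |\mathcal{U}_j|$, $s_j = 1 - \sum_{i \in \mathcal{U}_j} p_i$, and $p'_j = s_j/(N - \ell_j)$, a direct computation yields
$$
\KL(P_{X_{\mathcal{U}_j}} \| Q_{X_{\mathcal{U}_j}}) = \sum_{i \in \mathcal{U}_j} p_i \log(Np_i) + (N - \ell_j)\, p'_j \log(Np'_j),
$$
since the restricted vector is either an indicator at some $i \in \mathcal{U}_j$ or the all-zero vector.

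First I would subtract the appropriate $\epsilon_i := p_i - 1/N$ and $\epsilon'_j := p'_j - 1/N$ terms, using the zero-sum identity $\sum_{i \in \mathcal{U}_j} \epsilon_i + (N - \ell_j) \epsilon'_j = 0$, to rewrite both sides as
$$
\KL(P_{X_{\mathcal{U}_j}} \| Q_{X_{\mathcal{U}_j}}) = \sum_{i \in \mathcal{U}_j} \bigl(p_i \log(Np_i) - \epsilon_i\bigr) + (N - \ell_j)\bigl(p'_j \log(Np'_j) - \epsilon'_j\bigr),
$$
and $\KL(P_X \| Q_X) = \sum_i \bigl(p_i \log(Np_i) - \epsilon_i\bigr)$, where each summand is non-negative because $x \log x \ge x - 1$. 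This normalization is what makes the subsequent case analysis clean.

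Next, I would split $[m] = J_S \sqcup J_L$ according to whether $\ell_j \le N/4$ or $\ell_j > N/4$. For $j \in J_S$, Lemma~\ref{lem:technical lemma} gives $N\bigl(p'_j \log(Np'_j) - \epsilon'_j\bigr) \le 4 \sum_{i \in \mathcal{U}_j} \bigl(p_i \log(Np_i) - \epsilon_i\bigr)$, and since $N - \ell_j \le N$ and the bracketed quantity is non-negative, the same bound controls the second term of the decomposition. Swapping summation order in $\sum_{j \in J_S} \sum_{i \in \mathcal{U}_j} (p_i \log(Np_i) - \epsilon_i)$ and using the covering hypothesis that each $i$ lies in at most $k$ of the $\mathcal{U}_j$, the total $J_S$ contribution is at most $(1 + 4)\, k\, \KL(P_X \| Q_X) = 5k\, \KL(P_X \| Q_X)$. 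For $j \in J_L$, I would apply the data processing inequality (\ref{prope:data processing}) to get $\KL(P_{X_{\mathcal{U}_j}} \| Q_{X_{\mathcal{U}_j}}) \le \KL(P_X \| Q_X)$; the constraint $\sum_j \ell_j \le kN$ combined with $\ell_j > N/4$ forces $|J_L| < 4k$, so this piece contributes at most $4k\, \KL(P_X \| Q_X)$. Adding the two bounds yields the claimed constant $9 = 5 + 4$.

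The only real obstacle is Lemma~\ref{lem:technical lemma} itself, which has already been established; the present lemma is essentially a bookkeeping layer on top of it. The one subtlety worth flagging is that the constant $9$ is produced precisely by the two-regime split, and any improvement would require a finer treatment of the ``large'' sets than the crude data-processing bound used here.
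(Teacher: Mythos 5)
Your proposal is correct and takes essentially the same route as the paper: reduce to sets of size at most $N/4$ by bounding the at most $4k$ large sets via data processing, expand the restricted KL-divergences via the probability-vector decomposition, subtract $\epsilon$-terms to make each summand non-negative, and invoke Lemma~\ref{lem:technical lemma} to bound the residual-mass term by $4$ times the in-set contribution, giving $5k + 4k = 9k$. The only (minor) difference is bookkeeping: you use the per-set zero-sum identity $\sum_{i\in\mathcal{U}_j}\epsilon_i + (N-\ell_j)\epsilon'_j = 0$ and then exploit non-negativity of the normalized summands together with the at-most-$k$ covering, whereas the paper first reduces WLOG (by padding with singletons) to the case where each $i$ lies in exactly $k$ sets and then uses a global zero-sum across the whole collection.
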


\begin{proof}
Denote $S = \KL( P_X \| Q_X )$. By applying \ref{prope:data processing} we obtain that for every $\mathcal{U} \subseteq [N]$ we have $\KL( P_{X_{\mathcal{U}}} \| Q_{X_{\mathcal{U}}} ) \le S$. Since the number of sets $\mathcal{U}_j$ with $|\mathcal{U}_j| > \frac{N}{4}$ is at most $4k$, we may assume that $|\mathcal{U}_j| \le \frac{N}{4}$ and prove that
\[
    \sum_{j \in [m]} \KL( P_{X_{\mathcal{U}_j}} \| Q_{X_{\mathcal{U}_j}} ) \le 5kS.
\]
Moreover, since $\KL( P_{X_i} \| Q_{X_i} ) \ge 0$ following \ref{prope:kl nonnegative}, we may assume that $|\{ j \in [m] \mid i \in \mathcal{U}_{j} \} | = k$ for all $i \in [N]$.

For every $i \in [N]$, we notice that $Q_{X_i}(1) = \frac{1}{N}$ (i.e., the probability of the event $X_i=1$ when sampled with respect to $Q$ is $1/N$) and denote $p_i = P_{X_i}(1)$ and $\epsilon_i = p_i - \frac{1}{N}$, to obtain by definition $S = \sum_{i}p_{i}\log(Np_{i})$.
Additionally, for every set $\mathcal{U} \subseteq [N]$ we denote
\[
    p'(\mathcal{U})=\frac{1-\sum_{i\in \mathcal{U}}p_{i}}{N-|\mathcal{U}|} \quad \text{and}\quad\epsilon'(\mathcal{U})=p'(\mathcal{U})-\frac{1}{N} = - \frac{\sum_{i\in \mathcal{U}} \epsilon_i}{N-|\mathcal{U}|}.
\]
Let $\mathcal{U} \subseteq [N]$ be an index set. By the definition of KL-divergence, we may obtain:
\begin{align*}
    \KL( P_{X_{\mathcal{U}}} \| Q_{X_{\mathcal{U}}} ) &	= \sum_{v\in \Supp(X_{\mathcal{U}})} P_{X_{\mathcal{U}}}(v)\cdot\log\left(\frac{P_{X_{\mathcal{U}}}(v)}{Q_{X_{\mathcal{U}}}(v)}\right)\\
    & = \sum_{i\in \mathcal{U}}p_{i}\log\left(Np_{i}\right) + \left(1-\sum_{i\in \mathcal{U}}p_{i} \right) \log\left(\frac{1-\sum_{i\in \mathcal{U}}p_{i}}{1-\frac{|\mathcal{U}|}{N}}\right)\\
    & =  \sum_{i\in \mathcal{U}}p_{i}\log\left(Np_{i}\right) + (N-|\mathcal{U}|) p'(\mathcal{U})\log\left(Np'(\mathcal{U})\right).
\end{align*}
Notably, $\sum_{j \in [m]}\sum_{i\in \mathcal{U}_j}p_{i}\log\left(Np_{i}\right) = kS$, so it remains to bound $$\sum_{j \in [m]}(N-|\mathcal{U}_j|)p'(\mathcal{U}_j)\log\left(Np'(\mathcal{U}_j)\right).$$
Since $|\{ j \in [m] \mid i \in \mathcal{U}_{j} \} | = k$ is assumed to be constant, we obtain that 
\[
    \sum_{j \in [m]}(N-|\mathcal{U}_j|)\epsilon'(\mathcal{U}_j) = -\sum_{i} k\epsilon_i=0.
\]
Therefore, we may bound:
\begin{align*}
    \sum_{j \in [m]}(N-|\mathcal{U}_j|)p'(\mathcal{U}_j)\log\left(Np'(\mathcal{U}_j)\right) & = \sum_{j \in [m]}(N-|\mathcal{U}_j|)\left(p'(\mathcal{U}_j)\log\left(Np'(\mathcal{U}_j)\right) - \epsilon'(\mathcal{U}_j)\right)\\
    & \le \sum_{j \in [m]}N\left(p'(\mathcal{U}_j)\log\left(Np'(\mathcal{U}_j)\right) - \epsilon'(\mathcal{U}_j)\right)\\
    & \le 4\sum_{j \in [m]}\sum_{i \in \mathcal{U}_j}\left(p_{i}\log(Np_{i})-\epsilon_{i}\right)\\
    & = 4kS,
\end{align*}
where the first inequality follows from the fact that $\left(p'(\mathcal{U}_j)\log\left(Np'(\mathcal{U}_j)\right) - \epsilon'(\mathcal{U}_j)\right) \ge 0$ (implied by \ref{ineq nonnegative} from the proof of Lemma~\ref{lem:technical lemma}), and the second inequality follows from Lemma~\ref{lem:technical lemma}.
\end{proof}

In order to prove Theorem~\ref{thm:Shearer-appendix}, we need another lemma which shows that under an additional assumption, conditioning does not decrease KL-divergence.

\begin{lemma}\label{lem:conditinal KL bound}
Let $P$ and $Q$ be two distributions on pairs $(X,Y)$. Suppose that $X$ and $Y$ are independent with respect to $Q$ (i.e., $Q_{X,Y} = Q_X Q_Y$). Then
\[
    \KL( P_{Y} \| Q_{Y}) \le \KL( P_{Y \mid X} \| Q_{Y \mid X} ).
\]
\end{lemma}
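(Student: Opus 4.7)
The plan is to reduce the statement directly to convexity of KL-divergence using the independence assumption on $Q$. By Definition~(d), we have
\[
\KL(P_{Y \mid X} \| Q_{Y \mid X}) = \E_{P_X(x)}\bigl[\KL(P_{Y \mid X = x} \| Q_{Y \mid X = x})\bigr].
\]
The key observation is that since $Q_{X,Y} = Q_X Q_Y$, the conditional $Q_{Y \mid X = x}$ equals $Q_Y$ for every $x$ in the support of $Q_X$. Thus the right-hand side simplifies to $\E_{P_X(x)}[\KL(P_{Y \mid X = x} \| Q_Y)]$.

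Next I would observe that $P_Y$ is the average of the conditionals under $P$, i.e., $P_Y(y) = \sum_x P_X(x) P_{Y \mid X = x}(y)$. Applying \ref{prope:kl convex} (convexity of KL-divergence, taking the second argument to be $Q_Y$ in both terms, so it is preserved under the convex combination) yields
\[
\KL(P_Y \| Q_Y) = \KL\!\left(\sum_x P_X(x) P_{Y \mid X = x} \,\Big\|\, Q_Y\right) \le \sum_x P_X(x)\, \KL(P_{Y \mid X = x} \| Q_Y),
\]
which is exactly $\KL(P_{Y \mid X} \| Q_{Y \mid X})$ by the identification above. Chaining the two displays gives the lemma.

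There is essentially no obstacle: the whole argument is two lines, hinging on (i) rewriting the conditional KL using independence of $Q$ to kill the dependence on $X$ in the second argument, and (ii) Jensen/convexity for KL. One could alternatively obtain the same conclusion via the chain rule~\ref{prope:chain rule kl}, writing $\KL(P_{X,Y} \| Q_{X,Y})$ in two orders and using convexity to bound $\KL(P_X \| Q_X) \le \KL(P_{X \mid Y} \| Q_{X \mid Y})$, but the direct convexity argument above is shorter and avoids introducing $\KL(P_{X \mid Y} \| Q_{X \mid Y})$. The only mild care needed is to ensure supports align so the KL terms are finite, which follows from the assumption (implicit in the KL notation throughout the paper) that $\mathrm{Supp}(P) \subseteq \mathrm{Supp}(Q)$.
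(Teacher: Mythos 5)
Your proof is correct and takes essentially the same approach as the paper's own: both write $P_Y = \sum_x P_X(x) P_{Y\mid X=x}$, use independence under $Q$ to identify $Q_Y$ with the corresponding $P_X$-weighted mixture of the $Q_{Y\mid X=x}$, and conclude by convexity of KL-divergence. The only cosmetic difference is that you state $Q_{Y\mid X=x} = Q_Y$ directly whereas the paper writes $Q_Y(y)=\sum_x P_X(x)Q_{Y\mid X=x}(y)$; these are the same observation.
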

\begin{proof}
Since $Q_{X,Y} = Q_X Q_Y$, we obtain
\[
    Q_Y(y) = \sum_x P_X(x)Q_{Y\mid X=x}(y).
\]
Moreover, by the law of total probability, we obtain
\[
    P_Y(y) = \sum_x P_X(x)P_{Y\mid X=x}(y).
\]
Therefore, we obtain
\[
    \KL( P_{Y} \| Q_{Y}) = \KL\left(\left(\sum_x P_X(x)P_{Y\mid X=x}(y)\right) \| \left(\sum_x P_X(x)Q_{Y\mid X=x}(y)\right)\right).
\]
Finally, we may apply \ref{prope:kl convex} inductively to obtain
\[
    \KL( P_{Y} \| Q_{Y}) \le \sum_x P_X(x)\KL(P_{Y\mid X=x} \| Q_{Y\mid X=x}) =: \KL( P_{Y \mid X} \| Q_{Y \mid X} ).
\]
\end{proof}

\medskip \noindent Now we are ready to prove Theorem~\ref{thm:Shearer-appendix}.

\medskip 

\begin{proof}[Proof of Theorem~\ref{thm:Shearer-appendix}]
First we set some notations, given a bijection $X = X_1,\dots,X_N:[N] \to \mathcal{X}$, indices $i,j\in [N]$ and a set $\mathcal{S} \subseteq [N]$:
\begin{eqnarray*}
    X_{i\to j}	&=&	\begin{cases}
                        1 & X_{i}=j,\\
                        0 & \text{otherwise;}
                    \end{cases}\\
    X_{\to j}	&=&	X^{-1}(j),\\
    X_{\mathcal{S}\to j}	&=&	\begin{cases}
                        X^{-1}(j) & X^{-1}(j)\in \mathcal{S},\\
                        * & \text{otherwise.}
                    \end{cases}
\end{eqnarray*}

Note that $X = X_1, \dots, X_N$ and the sequence $X_{\to 1}, \dots, X_{\to N}$ are equivalent in the sense that they both encode the same information about the permutation. Similarly, $X_i$ is equivalent to $(X_{i\to1},\dots,X_{i\to N})$, and $X_\mathcal{S}$ is equivalent to $(X_{\mathcal{S}\to1},\dots,X_{\mathcal{S}\to N})$.

By \ref{prope:chain rule kl}, we obtain:
\begin{equation}\label{eqn:shearer perm---general chain}
    \KL( P_X \| Q_X ) = \KL( P_{X_{\to1}} \| Q_{X_{\to1}} ) + \KL\left(P_{X_{\to 2}, \dots, X_{\to N} \mid X_{\to 1}} \| Q_{X_{\to 2}, \dots, X_{\to N} \mid X_{\to 1}}\right).
\end{equation}
Similarly, given a set $\mathcal{U}$, we obtain:
\begin{align}\label{eqn:shearer perm---set chain}
\begin{split}
    \KL( P_{X_\mathcal{U}} \| Q_{X_\mathcal{U}} ) = &\KL( P_{X_{\mathcal{U}\to1}} \| Q_{X_{\mathcal{U}\to1}} ) +\\ &+ \KL\left(P_{X_{\mathcal{U}\to 2}, \dots, X_{\mathcal{U}\to N} \mid X_{\mathcal{U}\to 1}} \| Q_{X_{\mathcal{U}\to 2}, \dots, X_{\mathcal{U}\to N} \mid X_{\mathcal{U}\to 1}}\right).
\end{split}
\end{align}

By Lemma~\ref{lem:shearer-indicator}, we have
\begin{equation}\label{eqn:shearer perm---indicator}
    \sum_{j \in [m]} \KL( P_{X_{\mathcal{U}_j\to1}} \| Q_{X_{\mathcal{U}_j\to1}} ) \le 9k\cdot \KL( P_{X_{\to1}} \| Q_{X_{\to1}} ).
\end{equation}

In addition, notice that the conditioned variable $X_{\mathcal{U}\to 2}, \dots, X_{\mathcal{U}\to N} \mid X_{\mathcal{U}\to 1}$, distributed with respect to $Q$, is independent of $X_{\to1}$. Indeed, if $X_{\mathcal{U}\to 1} = i \in \mathcal{U}$ then $X_{\to1} = i$ deterministically. Otherwise, $X_{\mathcal{U}\to 1} = *$, implying that $X_{\mathcal{U}}$ is a uniform sequence of $|\mathcal{U}|$ distinct elements other than $1$ (since $Q$ is the uniform distribution on bijections). For any $i \notin \mathcal{U}$, conditioning on $X_{\to 1} = i$ has no effect on this distribution. Therefore, we may apply Lemma~\ref{lem:conditinal KL bound} to obtain:
\[
    \KL\left(P_{X_{\mathcal{U}\to 2}, \dots, X_{\mathcal{U}\to N} \mid X_{\mathcal{U}\to 1}} \| Q_{X_{\mathcal{U}\to 2}, \dots, X_{\mathcal{U}\to N} \mid X_{\mathcal{U}\to 1}}\right) \le \KL\left(P_{X_{\mathcal{U}\to 2}, \dots, X_{\mathcal{U}\to N} \mid X_{\to 1}} \| Q_{X_{\mathcal{U}\to 2}, \dots, X_{\mathcal{U}\to N} \mid X_{\to 1}}\right).
\]
The distribution $Q_{X_{\mathcal{U}\to 2}, \dots, X_{\mathcal{U}\to N} \mid X_{\to 1}}$ is equivalent to a uniform distribution over bijections on $N-1$ elements. Therefore, we may apply an inductive argument to obtain
\begin{align*}
    \sum_{j \in [m]} \KL &\left(P_{X_{\mathcal{U}_j\to 2}, \dots, X_{\mathcal{U}_j\to N} \mid X_{\mathcal{U}_j\to 1}} \| Q_{X_{\mathcal{U}_j\to 2}, \dots, X_{\mathcal{U}_j\to N} \mid X_{\mathcal{U}_j\to 1}}\right)\\
    \le & \sum_{j \in [m]} \KL\left(P_{X_{\mathcal{U}_j\to 2}, \dots, X_{\mathcal{U}_j\to N} \mid X_{\to 1}} \| Q_{X_{\mathcal{U}_j\to 2}, \dots, X_{\mathcal{U}_j\to N} \mid X_{\to 1}}\right)\\
    \le & 9k\cdot \KL\left(P_{X_{\to 2}, \dots, X_{\to N} \mid X_{\to 1}} \| Q_{X_{\to 2}, \dots, X_{\to N} \mid X_{\to 1}}\right).
\end{align*}
Combining with Equations \eqref{eqn:shearer perm---general chain}, \eqref{eqn:shearer perm---set chain}, and \eqref{eqn:shearer perm---indicator}, we obtain
\[
    \sum_{j \in [m]} \KL( P_{X_{\mathcal{U}_j}} \| Q_{X_{\mathcal{U}_j}} ) \le 9k \cdot  \KL( P_X \| Q_X ),
\]
completing the proof.
\end{proof}

\section{Improved Bound for Problems without Post-Processing}
\label{app:improvement no post process}
Recall Theorem~\ref{thm:g-nonadaptive}. A slightly better bound can be achieved for problems with a trivial post-processing function $\POST(\secv,j)=j$. This rule captures the DLOG and DDH problems, but does not capture EM key recovery.

\begin{theorem} \label{thm:g-nonadaptive no post}
Let $\PC := \PC(N,\SECS,\mathcal{M},\TR,\POST,\SUC)$ be a permutation challenge game, such that 
$\TR$ is $u$-uniform and $\POST(\secv,j)=j$ is a trivial post-processing function.
Let $(A_0,A_1)$ be an $(S,T)$ non-adaptive algorithm with preprocessing for $\PC$.
Denote by $\MAXS(T)$ the optimal success probability (with respect to $\SUC$) of a non-preprocessing, non-adaptive algorithm that makes at most $T$ queries.
Then, the success probability of $(A_0,A_1)$ is at most
\[
\min \left(2\cdot \MAXS(T) + \frac{4\log(2) S T }{u}, \MAXS(T) + 
\sqrt{\frac{\log(2) ST} {{u}}} \right).
\]
\end{theorem}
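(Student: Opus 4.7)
The plan is to retrace the two-step proof of Theorem~\ref{thm:g-nonadaptive} and strengthen only the bridge from the middle game $\MID$ to the non-preprocessing game, which is the sole source of the additive $T^2/u$ slack. Indeed, Theorem~\ref{thm:pc to middle} already bounds the success probability of $(A_0,A_1)$ by $\min\!\bigl(2\widehat{\MAXS}(T)+\tfrac{4\log(2)ST}{u},\,\widehat{\MAXS}(T)+\sqrt{\tfrac{\log(2)ST}{u}}\bigr)$ without any $T^2/u$ loss; all of the loss in Theorem~\ref{thm:g-nonadaptive} is incurred by Lemma~\ref{lem:middle to non preprocess}, whose events $W_1,W_2$ each contribute $O(T^2/u)$. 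We therefore aim to prove the stronger statement $\widehat{\MAXS}(T)\le\MAXS(T)$ under the assumption $\POST(\secv,j)=j$; combining it with Theorem~\ref{thm:pc to middle} immediately yields Theorem~\ref{thm:g-nonadaptive no post}.

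The strengthened claim would be established by a direct simulation. Given a $\MID$ adversary $\widehat{A}=(\widehat{A}_0,\widehat{A}_1)$ of total query complexity $T=T_1+T_2$, we construct a non-preprocessing, non-adaptive $\PC$ algorithm $A$ with identical success probability as follows. The algorithm $A$ first samples the shared randomness of $\widehat{A}$, runs $\widehat{A}_0$ to obtain the constraints $(\INP,\OUTP)$, and (since $\widehat{A}_1$ is non-adaptive) uses $(\INP,\OUTP)$ to determine $\widehat{A}_1$'s outer queries $m_1,\ldots,m_{T_2}$. Next, $A$ issues the $T_1$ inner queries at $\INP$, obtaining $\widetilde{\OUTP}:=\sigma(\INP)$, and the $T_2$ outer queries $m_1,\ldots,m_{T_2}$, receiving $\sigma(\TR(\SECV,m_i))$ directly from the oracle thanks to the trivial post-processing. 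It then builds a bijection $\phi:[N]\to[N]$, depending only on $(\widetilde{\OUTP},\OUTP)$, that maps $\widetilde{\OUTP}_k\mapsto\OUTP_k$ for every $k$, extended to the complements by any fixed rule (say the order-preserving bijection from $[N]\setminus\widetilde{\OUTP}$ to $[N]\setminus\OUTP$). Finally, $A$ feeds $(\INP,\OUTP)$ into $\widehat{A}_1$ together with the transformed responses $\phi(\sigma(\TR(\SECV,m_i)))$ and returns $\widehat{A}_1$'s output. All $T_1+T_2=T$ queries are determined by $\widehat{A}_0$'s randomness before any response is received, so $A$ is non-adaptive.

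Correctness of the simulation rests on the observation that the virtual permutation $\sigma':=\phi\circ\sigma$ satisfies $\sigma'(\INP_k)=\OUTP_k$ for every $k$; a short counting argument (parameterizing a uniform $\sigma$ by the pair $(\widetilde{\OUTP},\sigma|_{[N]\setminus\INP})$) shows that the induced $\sigma'$ is uniformly distributed over the set of permutations satisfying $\sigma'(\INP)=\OUTP$, precisely as in the $\MID$ game. Since the transformed responses equal $\sigma'(\TR(\SECV,m_i))$, since $\SECV$ is uniform in both games, and since $\SUC$ compares only the output of $\widehat{A}_1$ to a function of $\SECV$, the success probabilities of $A$ and $\widehat{A}$ coincide. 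The only place the triviality of $\POST$ enters — and the main conceptual obstacle to generalizing the reduction — is that $A$ must be able to produce $\sigma'(\TR(\SECV,m_i))$ from a single outer query: trivial post-processing hands back $\sigma(\TR(\SECV,m_i))$, which $\phi$ converts to $\sigma'(\TR(\SECV,m_i))$ by a single post-composition, whereas a non-trivial $\POST$ would entangle $\SECV$ with $\sigma$ in a way that cannot be untwisted by $A$ without further queries.
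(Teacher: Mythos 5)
Your proposal is correct and follows essentially the same two-step route as the paper: invoke Theorem~\ref{thm:pc to middle} to reduce to bounding $\widehat{\MAXS}(T)$, and then prove $\widehat{\MAXS}(T)\le\MAXS(T)$ by having the non-preprocessing adversary query $\sigma$ at $\INP$ and at $\widehat{A}_1$'s outer queries, build a re-labeling permutation $\phi$ (the paper's $\pi$) sending $\sigma(\INP_k)\mapsto\OUTP_k$, and feed $\widehat{A}_1$ the responses under the virtual permutation $\phi\circ\sigma$. The construction, the reliance on trivial post-processing to let $\phi$ be applied by post-composition, and the uniformity argument for $\phi\circ\sigma$ over permutations satisfying the $\MID$ constraints all match the paper's Appendix~E proof.
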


\begin{proof}
By Theorem~\ref{thm:pc to middle}, the success probability of $(A_0,A_1)$ is at most
\[
\min \left(2\cdot \widehat{\MAXS}(T) + \frac{4\log(2) S T }{u}, \widehat{\MAXS}(T) + 
\sqrt{\frac{\log(2) ST} {{u}}}\right),
\]
where $\widehat{\MAXS}(T)$ is the optimal success probability of an adversary to $\MID$ with running time $T$.
Thus, it remains to show that $\widehat{\MAXS}(T)\le \MAXS(T)$ and complete the proof. Let $\widehat{A} = (\widehat{A}_0,\widehat{A}_1)$ be an adversary to $\MID$ with success probability $\widehat{\MAXS}(T)$. Let us describe a non-preprocess algorithm $A$ to the problem:
\begin{enumerate}
    \item $A$ receives $N$ and a non-adaptive oracle access by inner and outer queries to the permutation $\sigma$ with respect to a secret $\secv$.
    \item $A$ runs $\widehat{A}_0(N)$ to obtain the sequences $\INP = (\INP_1,\dots,\INP_{T_1})$ and $\OUTP = (\OUTP_1,\dots,\OUTP_{T_1})$.
    \item $A$ runs $\widehat{A}_1$ and obtains its queries (recall that $\widehat{A}_1$ is allowed to make only outer queries), $m_1,\dots,m_{T_2} \in \mathcal{M}$.
    \item $A$ makes the inner queries $\INP_1,\dots,\INP_{T_1}$ and obtains $\OUTP'_1,\dots,\OUTP'_{T_1}$. Additionally, $A$ makes the outer queries $m_1,\dots,m_{T_2}$. Since the post-processing function is assumed to be trivial, the outputs are of the form $r_i := \sigma(\TR(\secv,m_i))$.
    \item Using some deterministic process that depends only on $\{\OUTP_1,\dots,\OUTP_{T_1},\OUTP'_1,\dots,\OUTP'_{T_1}\}$, $A$ constructs a permutation $\pi$ with $\pi(\OUTP'_i) = \OUTP_i$ for all $i$. 
    \item $A$ simulates $\widehat{A}_1$'s oracle, and returns $\pi(r_i)$ as the response to the query $m_i$. $A$ returns $\widehat{A}_1$'s output.
\end{enumerate}

Clearly, the permutation $\pi \circ \sigma$ is uniformly distributed over the permutations with $\forall i:\INP_i \mapsto \OUTP_i$. Moreover, the responses $\widehat{A}_1$ obtained are $\pi(r_i) = (\pi \circ \sigma)(\TR(\secv,m_i))$. Therefore, the distribution of the permutation $\pi \circ \sigma$ is exactly as it is supposed to be, and the responses to all queries are correct with respect to it. Therefore, the simulation is accurate and does not affect the success probability. Thus, the success probability of $A$ is exactly $\widehat{\MAXS}(T)$.
\end{proof}

\subsubsection*{Acknowledgments.} We thank the anonymous reviewers for their useful comments. In particular, we thank a reviewer who pointed out an error in the proof of Theorem~\ref{thm:pc to middle} in a previous version of this paper.  

\bibliographystyle{alpha}	
\bibliography{proposal/shearer-perm}	

\end{document}